\keywords{Coalgebra, Fibration, Modal Logic}
\theoremstyle{plain} 
\newcommand{\Cat}[1]{\mathcal{#1}}
\newcommand{\Set}{\mathsf{Set}}
\newcommand{\SL}{\mathsf{SL}}
\newcommand{\CLat}{\mathsf{CLat}_{\wedge}}
\newcommand{\Pos}{\mathsf{Pos}}
\newcommand{\Pow}{\mathcal{P}}
\newcommand{\Powf}{\Pow_\omega}
\newcommand{\Id}{\mathsf{Id}}
\newcommand{\Eq}{\Delta}
\newcommand{\Quot}{\mathsf{Quot}}
\newcommand{\id}{\mathsf{id}}
\newcommand{\op}{\mathsf{op}}
\newcommand{\theory}{\mathit{th}}
\newcommand{\Sem}[1]{\llbracket #1 \rrbracket}
\newcommand{\diam}[1]{\left\langle #1 \right\rangle}
\newcommand{\dlog}{d_{\mathit{log}}}
\newcommand{\dsdw}{d_{\mathit{sdw}}}
\newcommand{\lift}[1]{\overline{#1}}
\newcommand{\Alg}{\mathrm{Alg}}
\newcommand{\Rel}{\mathrm{Rel}}
\newcommand{\RelV}{\mathrm{Rel}_{\V}}
\newcommand{\Pred}{\mathrm{Pred}}
\newcommand{\V}{\mathcal{V}}
\newcommand{\Hom}{\mathrm{Hom}}
\newcommand{\strength}{\mathrm{st}}
\newcommand{\TV}{\Omega}
\newcommand{\Dis}{D}
\newcommand{\logic}{\mathcal{L}}
\newcommand{\pop}{\mathcal{O}}
\newcommand{\coloneqq}{\mathrel{:=}}
\newcommand{\llift}[1]{\widehat{#1}}
\newcommand{\Gr}{\mathrm{Gr}}
\theoremstyle{plain}
\newtheorem{lemma}[thm]{Lemma}
\newtheorem{proposition}[thm]{Proposition}
\newtheorem{theorem}[thm]{Theorem}
\newtheorem{corollary}[thm]{Corollary}
\theoremstyle{definition}
\newtheorem{definition}[thm]{Definition}
\newtheorem{example}[thm]{Example}
\newtheorem{remark}[thm]{Remark}
\newtheorem{assumption}[thm]{Assumption}
\begin{document}

\title{Expressive Logics for Coinductive Predicates}
\titlecomment{{\lsuper*}
This is a revised and extended version of a paper which appeared in the proceedings
of CSL 2020~\cite{KupkeR20}.
Part of this research was carried out during the second author's stay at University College London,
funded by the European Union's Horizon 2020 research and innovation programme under the Marie Sk\l{}odowska-Curie Grant Agreement No.\ 795119.
}

\author[C.~Kupke]{Clemens Kupke}[a]	
\address{Department of Computer \& Information Sciences, 
Strathclyde University, United Kingdom}	
\email{clemens.kupke@strath.ac.uk}  

\author[J.~Rot]{Jurriaan Rot}[b]	
\address{Institute for Computing and
    Information Sciences (iCIS), Radboud University, The Netherlands}	
\email{jrot@cs.ru.nl}  





\begin{abstract}
The classical Hennessy-Milner theorem says that two states of an image-finite transition system
are bisimilar if and only if they satisfy the same formulas in a certain modal logic. 
In this paper we study this type of result in a general context,
moving from transition systems to coalgebras and from bisimilarity to coinductive predicates.
We formulate when a logic fully characterises a coinductive predicate on coalgebras, by
providing suitable notions of adequacy and expressiveness,
and give sufficient conditions on the semantics. 
The approach is illustrated with logics
characterising similarity, divergence and a behavioural metric on automata.  
\end{abstract}

\maketitle

\section{Introduction}
\label{sec:introduction}

The deep connection between bisimilarity and modal logic manifests itself in the \emph{Hennessy-Milner theorem}:
two states of an image-finite labelled transition system (LTS)
are behaviourally equivalent iff they satisfy the same formulas in a certain modal logic~\cite{HennessyM85}. 
From left to right, this equivalence is sometimes referred to as \emph{adequacy} of the logic w.r.t.\ bisimilarity, 
and from right to left as \emph{expressiveness}. By stating both adequacy and expressiveness, the Hennessy-Milner
theorem thus gives a logical characterisation of behavioural equivalence.  
 
There are numerous variants and generalisations of this kind of result. 
For instance, a state $x$ of an LTS is \emph{simulated} by a state $y$ if every formula 
satisfied by $x$ is also satisfied by $y$, where the logic 
only has conjunction and diamond modalities; see~\cite{Glabbeek90}
for this and many other related results. Another class of examples is logical characterisations
of quantitative notions of equivalence, such as probabilistic bisimilarity and behavioural 
distances (e.g.,~\cite{LarsenS91,DesharnaisGJP99,DesharnaisEP02,BreugelW05,JacobsS09,KonigM18,WildSP018,ClercFKP19}). 
In many such cases, including bisimilarity, the comparison between states is \emph{coinductive}, and the problem is thus
to characterise a coinductively defined relation (or distance) with a suitable modal logic. 

Both coinduction and modal logic can be naturally and generally studied within the theory of \emph{coalgebra},
which provides an abstract, uniform study of state-based systems~\cite{Rutten00,jacobs-coalg}. 
Indeed, in the area of \emph{coalgebraic modal logic}~\cite{KupkeP11}
there is a rich literature on deriving expressive logics for behavioural equivalence between 
state-based systems, thus going well beyond labelled transition systems~\cite{Pattinson04,schr08:expr,Klin07}.  
However, such results focus almost exclusively on behavioural equivalence or bisimilarity---a 
coalgebraic theory of logics for characterising coinductive predicates other than bisimilarity is
still missing. 
The aim of this paper is to accommodate the study of logical characterisation of coinductive
predicates in a general manner, and provide tools to prove adequacy and expressiveness. 

Our approach is based on universal coalgebra, to achieve results that apply generally to state-based 
systems. Central to the approach are the following two ingredients. 
\begin{enumerate}
	\item \emph{Coinductive predicates in a fibration.}
	To characterise coinductive predicates, we make use of fibrations---this approach 
	originates from the seminal work of Hermida and Jacobs~\cite{HJ98}. The fibration 
	is used to speak about predicates and relations on states.
	In this context, liftings of the type functor of coalgebras uniformly determine coinductive predicates and
	relations on such coalgebras. 
	An important feature of this approach, advocated in~\cite{HasuoKC18}, 
	is that it covers not only bisimilarity, but also other coinductive predicates including, e.g., 
	similarity of labelled transition systems and other coalgebras~\cite{HughesJ04}, 
	behavioural metrics~\cite{BaldanBKK18,Bonchi0P18,SprungerKDH18}, 
	unary predicates such as divergence~\cite{BonchiPPR17,HasuoKC18}, 
	and many more. 
	
	\item \emph{Coalgebraic modal logic via dual adjunctions.}
	We use an abstract formulation of coalgebraic logic, which
	originated in~\cite{PavlovicMW06,Klin07}, building on a tradition of logics via duality (e.g.,~\cite{KupkeKP04,BonsangueK05}). 
	This framework is formulated in terms of a contravariant adjunction, which captures
	the basic connection between states and theories, and a distributive law, which
	captures the one-step semantics of the logic. 
	It covers classical modal logics of course,
	but also easily accommodates multi-valued logics, and, e.g., logics without
	propositional connectives, where formulas can be thought of as basic tests on state-based systems. 
	This makes the framework suitable for an abstract formulation of Hennessy-Milner type theorems, where
	formulas play the role of tests on state-based systems. 
\end{enumerate}
To formulate adequacy and expressiveness with respect to general coinductive predicates, we need to know
how to compare collections of formulas. For instance, if the coinductive predicate is similarity of LTSs, 
then the associated logical theories of one state should be \emph{included} in the other, not necessarily equal. 
This amounts to stipulating a \emph{relation} on truth values, that extends to a relation between
theories. In the quantitative case, we need 
a \emph{logical distance} between collections of formulas; this typically arises
from a distance between truth values (which, in this case, will typically be an interval in the real numbers). 
The fibrational setting provides a convenient means for defining such an object for comparing theories.

With this in hand, we arrive at the main contributions of this paper: the formulation of adequacy and expressiveness
of a coalgebraic modal logic with respect to a coinductive predicate in a fibration,
and sufficient conditions on the semantics of the logic that guarantee adequacy and expressiveness. 
We exemplify the approach through a range of examples, including logical characterisations of a simple behavioural
distance on deterministic automata, similarity of labelled transition systems, and a logical characterisation
of a unary predicate: divergence, the set of states of an LTS which have an infinite path of outgoing $\tau$-steps. 
The latter is characterised, on image-finite LTSs, by a quantitative logic with only diamond formulas, i.e.,
the set of formulas is simply the set of words.

\subsection*{Related work}

As mentioned above, there are numerous specific results on Hennessy-Milner theorems, 
which---e.g., in the probabilistic setting as in~\cite{ClercFKP19}---can be highly non-trivial. A comprehensive
historical treatment is beyond the scope of this paper,
which is, instead, broad: it aims at studying these kinds of results in a general,
coalgebraic setting. 

The case of capturing bisimilarity
and behavioural equivalence of coalgebras by modal logics has been very well studied,
see~\cite{KupkeP11} for an overview.
Expressiveness w.r.t.\ similarity has been studied in~\cite{KapulkinKV12},
which is close in spirit to our approach, but focuses on the poset case. On a detailed level,
the logic for similarity is based on distributive lattices, hence
it uses disjunction; this differs from our example, which only uses conjunction and diamond modalities. 
Another study of expressiveness of logics w.r.t.\ various forms of similarity is in~\cite{soton354112}.
Expressiveness of multi-valued coalgebraic logics w.r.t.\ behavioural equivalence is studied in~\cite{BilkovaD16}.
In~\cite{BakhtiariH17}, notions of equivalence are extracted from a logic through a variant 
of $\Lambda$-bisimulation~\cite{GorinS13}. 
To the best of our knowledge, the current work is the first in the area that connects
general coinductive predicates in a fibration to coalgebraic logics. 

In the recent~\cite{graded}, the authors prove Hennessy-Milner type theorems for coalgebras including, but going
significantly beyond bisimilarity.
The logics are related to a semantics obtained from graded monads. 
The scope differs substantially from the current paper: the graded monad approach
is inductive and focuses on semantic equivalence of different types,
whereas our framework aims at characterising coinductive predicates, and
in our approach it is essential to be able to relate theories in different ways than equivalence 
(to cover, e.g., similarity, divergence or logical distance).
On the one hand, it appears
that none of our examples can be covered immediately in \emph{loc.\ cit.}; on the other hand, 
trace equivalence of various kinds can be covered in~\cite{graded} but not (directly) in the current paper. 
 
In~\cite{WildSP018} a characterisation theorem is shown for fuzzy modal logic, and 
in~\cite{KonigM18} for a wide class of behavioural metrics. 
These papers are not aimed at other kinds of
coinductive predicates, and they do not cover the examples in Section~\ref{sec:examples} (including
the behavioural metric for deterministic automata, as we use a much simpler logic than in~\cite{KonigM18}). 
Conversely, the question whether the logical characterisation results of~\cite{KonigM18}
can be covered in the current framework is left open. 
These papers also treat game-based characterisations of bisimilarity, which
are studied in a general setting in the recent~\cite{KKHKH19}. That paper, however,
does not yet feature modal logic explicitly; in fact, the connection is posed
there as future work. 

An earlier version of this paper appeared in the proceedings of CSL 2020~\cite{KupkeR20}.
The current paper extends this with a more detailed treatment of the basic setup of logics in a 
contravariant adjunction in Section~\ref{sec:adjunctions}, a treatment of expressiveness of logics w.r.t.\ behavioural
equivalence rather than bisimilarity making use of so-called lax liftings, in Section~\ref{sec:canonical}, and a 
new section on finite-depth expressiveness via initial and final sequences (Section~\ref{sec:kleene}).

\subsection*{Outline}
The paper starts in Section~\ref{sec:prelims} with preliminaries on coalgebra, fibrations and coinductive predicates, and coalgebraic
modal logic. Section~\ref{sec:framework} contains the abstract framework for expressiveness and adequacy, together with 
sufficient conditions for establishing these. Section~\ref{sec:examples} contains three detailed examples of this setup. A different
route to expressiveness, via initial and final sequences, is explored in Section~\ref{sec:kleene}. The paper
concludes in Section~\ref{sec:conclusions} with directions for future work.

\subsection*{Acknowledgments} We would like to thank Fredrik Nordvall Forsberg, Ichiro Hasuo, Bart Jacobs, Shin-ya Katsumata and
Yuichi Komorida for helpful discussions, comments and suggestions.

\section{Preliminaries}
\label{sec:prelims}

The category of sets and functions is denoted by $\Set$. The powerset functor
is denoted by $\Pow \colon \Set \rightarrow \Set$, and the finite powerset
functor by $\Powf$. The diagonal relation on a set $X$ 
is denoted by $\Delta_X = \{(x,x) \mid x \in X\}$.

Let $\Cat{C}$ be a category, and $B \colon \Cat{C} \rightarrow \Cat{C}$ a functor. 
A \emph{$(B)$-coalgebra} is a pair $(X,\gamma)$ where $X$ is an object in $\Cat{C}$ and
$\gamma \colon X \rightarrow BX$ a morphism. A homomorphism from a coalgebra $(X,\gamma)$
to a coalgebra $(Y,\theta)$ is a morphism $h \colon X \rightarrow Y$ such that $\theta \circ h = Bh \circ \gamma$. 
An \emph{algebra} for a functor $L \colon \Cat{D} \rightarrow \Cat{D}$ on a category $\Cat{D}$ is a pair $(A,\alpha)$ of
an object $A$ in $\Cat{D}$ and an arrow $\alpha \colon LA \rightarrow A$. 

\begin{example}
A \emph{labelled transition system (LTS)} over a set of labels $A$ is a coalgebra 
$(X,\gamma)$ for the functor $B \colon \Set \rightarrow \Set$, $BX = (\Pow X)^A$. 
For states $x,x' \in X$ and a label $a \in A$, we sometimes write $x \xrightarrow{a} x'$ for $x' \in \gamma(x)(a)$. 
Image-finite labelled transition systems are coalgebras for the functor $BX = (\Powf X)^A$.  
A \emph{deterministic automaton} over an alphabet $A$ is 
a coalgebra for the functor $B \colon \Set \rightarrow \Set$, $BX = 2 \times X^A$. 
For many other examples of state-based systems modelled as coalgebras, see, e.g.,~\cite{jacobs-coalg,Rutten00}.
\end{example}

\subsection{Coinductive Predicates in a Fibration}
\label{sec:coind-pred}

We recall the general approach to coinductive predicates in a fibration,
starting by briefly presenting how bisimilarity of $\Set$ coalgebras arises 
in this setting (see~\cite{HasuoKC18,HJ98,jacobs-coalg} for details). 
Let $\Rel$ be the category where an object is a pair $(X,R)$ consisting of a set $X$ and a relation $R \subseteq X \times X$ on it,
and a morphism from $(X,R)$ to $(Y,S)$ is a map $f \colon X \rightarrow Y$ such that $x \mathrel{R} y$ implies $f(x) \mathrel{R} f(y)$, for all $x,y \in X$. 
Below, we sometimes refer to an object $(X,R)$ only by the relation $R \subseteq X \times X$. 
Any set functor $B \colon \Set \rightarrow \Set$ gives rise to a functor $\Rel(B) \colon \Rel \rightarrow \Rel$,
defined by \emph{relation lifting}: 
\begin{equation}\label{eq:rel-lift}
\Rel(B)(R \subseteq X \times X) = \{((B\pi_1)(z), (B\pi_2)(z)) \in BX \times BX \mid z \in BR\} \,.
\end{equation}
Given a $B$-coalgebra $(X,\gamma)$, a \emph{bisimulation} is a relation $R \subseteq X \times X$ such that
$R \subseteq (\gamma \times \gamma)^{-1} (\Rel(B)(R))$, i.e., if $x \mathrel{R} y$
then $\gamma(x) \mathrel{\Rel(B)(R)} \gamma(y)$. \emph{Bisimilarity}
is the greatest such relation, and equivalently, the greatest
fixed point of the monotone map $R \mapsto {(\gamma \times \gamma)^{-1} (\Rel(B)(R))}$
on the complete lattice of relations on $X$, ordered by inclusion.

The functor $\Rel(B)$ is a \emph{lifting} of $B$: it maps a relation on $X$ to a relation on $BX$. 
A first step towards generalisation beyond bisimilarity is obtained by replacing $\Rel(B)$
by an arbitrary lifting $\lift{B} \colon \Rel \rightarrow \Rel$ of $B$. 
For instance, for $BX = (\Powf X)^A$ one may take
\begin{equation}\label{eq:bbar-simulation}
\overline{B}(R) = \{(t_1,t_2) \mid \forall a \in A. \, \forall x \in t_1(a). \, \exists y \in t_2(a). (x,y) \in R \}\,.
\end{equation} 
Then, for an LTS $\gamma \colon X \rightarrow (\Powf X)^A$, 
the greatest fixed point of the monotone map $R \mapsto (\gamma \times \gamma)^{-1} \circ \overline{B}(R)$
is \emph{similarity}. 
In the same way, by varying the lifting $\lift{B}$, one can define many different coinductive
relations on $\Set$ coalgebras. 

Yet a further generalisation is obtained by replacing $\Set$ by a general category $\Cat{C}$,
and $\Rel$ by a category of `predicates' on $\Cat{C}$. A suitable categorical
infrastructure for such predicates on $\Cat{C}$ is given by the notion of \emph{fibration}. 
This allows us, for instance, to move beyond (Boolean, binary)
relations to quantitative relations (e.g., behavioural metrics) or unary predicates. Such
examples follow in Section~\ref{sec:examples}; also see, e.g.,~\cite{HasuoKC18,BonchiPPR17}. 

To define fibrations, it will be useful to fix some associated terminology first. 
Let $p \colon \Cat{E} \rightarrow \Cat{C}$ be a functor. If $p(R) = X$,
then we say $R$ is \emph{above} $X$, and similarly for morphisms. 
The collection of all objects $R$ above a given object $X$ and arrows above the
identity $\id_X$ form a category, called the \emph{fibre above $X$} and
denoted by $\Cat{E}_X$. 
\begin{definition}
A functor $p \colon \Cat{E} \rightarrow \Cat{C}$ is a \emph{(poset) fibration} if
\begin{itemize} 
\item each fibre $\Cat{E}_X$ is a poset category (that is,
at most one arrow between every two objects); the corresponding
order on objects is denoted by $\leq$;
\item for every $f \colon X \rightarrow Y$ in $\Cat{C}$
	and object $S$ above $Y$ 
	there is a \emph{Cartesian morphism} $\widetilde{f}_S \colon f^*(S) \rightarrow S$
	above $f$, with the property that for 
	every arrow $g \colon Z \rightarrow X$,
	every object $R$ above $Z$ and arrow $h \colon R \rightarrow S$
	above $f \circ g$, 
	there is a unique arrow $k \colon R \rightarrow f^*(S)$ above $g$
	such that $\widetilde{f}_S \circ k = h$. 
	$$
	\xymatrix@R=0.5cm@C=1cm{
		R \ar@{-->}[dr]_k \ar[drr]^h
			& 
			& \\
			& f^*(S) \ar[r]_{\widetilde{f}_S} 
			& S \\
		Z \ar[dr]_g \ar[drr]^{f \circ g}
			& & \\
			& X \ar[r]_f
			& Y
	}
	$$
\end{itemize}
\end{definition}
\begin{remark}
In this paper we only consider poset fibrations, and refer to them simply as fibrations. 
The usual definition of fibration is more general (e.g.,~\cite{Jacobs:fib}):
normally, fibres are not assumed to be posets.
Poset fibrations have several good properties, mentioned below. In the application
to coinductive predicates, it is customary to work with poset fibrations. 
\end{remark}

For a morphism $f \colon X \rightarrow Y$, the 
assignment $R \mapsto f^*(R)$ gives rise to a 
functor
$f^* \colon \Cat{E}_Y \rightarrow \Cat{E}_X$,
called \emph{reindexing along $f$}. (Note that functors between poset categories are just monotone maps.) 
We use a strengthening of poset fibrations, following~\cite{SprungerKDH18,KKHKH19}.
\begin{definition}
A poset fibration $p \colon \Cat{E} \rightarrow \Cat{C}$ is called a $\CLat$-fibration if 
$(\Cat{E}_X, \leq)$ is a complete lattice for every $X$, and reindexing preserves arbitrary meets. 
\end{definition}

Any poset fibration $p$ is split: we have $(g \circ f)^* = f^* \circ g^*$ for any morphisms $f,g$ that compose. 
Further, $p$ is faithful. This captures the intuition that morphisms in $\Cat{E}$ are morphisms in $\Cat{C}$
with a certain property; e.g., relation-preserving, or non-expansive (Examples~\ref{ex:rel-fib},~\ref{ex:relv-fib}).
We note that $\CLat$-fibrations are instances of topological functors~\cite{herr74:topo}. 
We use the former, in line with existing related work~\cite{HasuoKC18,KKHKH19}.
This also has the advantage of keeping our results 
amenable to possible future extensions to a wider class of examples.

\begin{example}\label{ex:rel-fib}
	Consider the \emph{relation fibration} $p \colon \Rel \rightarrow \Set$, where $p(R \subseteq X \times X) = X$. 
	Reindexing is given by inverse image: for a map $f \colon X \rightarrow Y$
	and a relation $S \subseteq Y \times Y$, we have $f^*(S) = (f \times f)^{-1}(S)$. 
	The functor $p$ is a $\CLat$-fibration. 
	
	Closely related is the \emph{predicate fibration} $p \colon \Pred \rightarrow \Set$.
	An object of $\Pred$ is a pair $(X,\Gamma)$ consisting of a set $X$ and a subset $\Gamma \subseteq X$,
	and an arrow from $(X,\Gamma)$ to $(Y,\Theta)$ is a map $f \colon X \rightarrow Y$ such that $x \in \Gamma$ implies $f(x) \in \Theta$.  
	The functor $p$ is given by $p(X,\Gamma) = X$, reindexing is given by inverse image, and $p$ is a $\CLat$-fibration as well. 
\end{example}	
In the relation fibration, we sometimes refer to an object $(X,R \subseteq X^2)$ simply by $R$, and similarly
in the predicate fibration.

\begin{example}\label{ex:relv-fib}
	Let $\V$ be a complete lattice. Define the category $\RelV$ as follows:
	an object is a pair $(X, d)$ where $X$ is a set and a function 
	$d \colon X \times X \rightarrow \V$, and a morphism from $(X,d)$ to $(Y,e)$
	is a map $f \colon X \rightarrow Y$ such that $d(x_1,x_2) \leq e(f(x_1),f(x_2))$
	for all $x_1,x_2 \in X$. 
	The forgetful functor $p \colon \RelV \rightarrow \Set$ is a $\CLat$-fibration,
	where reindexing along $f \colon X \rightarrow Y$ is given by
	$f^*(Y,e) = (X, e \circ f \times f)$. 
	
	For $\V = 2 = \{0,1\}$ with the usual order $0 \leq 1$, $\RelV$ coincides with
	$\Rel$. Another example is given by the closed interval $\V = [0,1]$,
	with the \emph{reverse} order. Then, 
	a morphism from $(X,d)$ to $(Y,e)$ is a \emph{non-expansive map} $f \colon X \rightarrow Y$,
	that is, s.t.\ $e(f(x_1),f(x_2)) \leq d(x_1,x_2)$ (with $\leq$ the usual order, i.e., where $0$ is the smallest). 
	This instance will be denoted by $\Rel_{[0,1]}$. 
\end{example}

\subsection*{Liftings and Coinductive Predicates}

Let $p \colon \Cat{E} \rightarrow \Cat{C}$ be a fibration, and $B \colon \Cat{C} \rightarrow \Cat{C}$ a functor. 
A functor $\overline{B} \colon \Cat{E} \rightarrow \Cat{E}$ is called a \emph{lifting}
of $B$ if $p \circ \overline{B} = B \circ p$. 
In that case, $\overline{B}$ restricts
to a functor $\overline{B}_X \colon \Cat{E}_X \rightarrow \Cat{E}_{BX}$, for any $X$ in $\Cat{C}$. 

A lifting $\overline{B}$ of $B$ gives rise to an abstract notion of coinductive predicate, as follows. 
For any $B$-coalgebra $(X,\gamma)$ there is the functor, i.e., monotone function defined by
$\gamma^* \circ \overline{B}_X \colon \Cat{E}_X \rightarrow \Cat{E}_X$. 
We think of post-fixed points of $\gamma^* \circ \overline{B}_X$ as \emph{invariants}, generalising
\emph{bisimulations}. 
If $p$ is a $\CLat$-fibration, then $\gamma^* \circ \overline{B}_X$ has a greatest fixed point 
$
\nu(\gamma^* \circ \overline{B}_X)
$, which is also the greatest post-fixed point. It is referred to as the \emph{coinductive predicate} defined by $\overline{B}$ on $\gamma$.

\begin{example}
	First, for a $\Set$ functor $B \colon \Set \rightarrow \Set$, 
	recall the lifting $\Rel(B)$ of $B$ defined in the beginning of this section.
	We refer to $\Rel(B)$ as the \emph{canonical relation lifting} of $B$. 
	For a coalgebra $(X,\gamma)$, a post-fixed point of the operator $\gamma^* \circ \Rel(B)_X$ is a bisimulation,
	as explained above. The coinductive predicate
	$\nu(\gamma^* \circ \Rel(B)_X)$ defined by $\Rel(B)$ is bisimilarity. 
	Another example is given by the lifting $\overline{B}$ for similarity defined in the
	beginning of this section, which we further study in Section~\ref{sec:examples}. 
	In that section we also define a unary predicate, divergence, making use of the predicate fibration. 
	Coinductive predicates in the fibration $\Rel_{[0,1]}$ can be thought of as
	\emph{behavioural distances}, providing a quantitative analogue of bisimulations, 
	measuring the distances between states. A simple example
	on deterministic automata is studied in Section~\ref{sec:sdw-distance}.
\end{example}

\begin{remark}
In quantitative examples one often works in a category with more
structure, e.g., by replacing $\Rel_{[0,1]}$ by the category of pseudo-metrics and non-expansive maps. Similarly, one
can replace $\Rel$ by the category of equivalence relations. Defining liftings then 
requires slightly more work, and since we use fibrations to define coinductive predicates, 
this is not needed. Therefore, we do not use such categories in our examples. 
\end{remark}

We sometimes need the notion of \emph{fibration map}: if $\overline{B}$ is a lifting of $B$, the 
pair $(\overline{B},B)$ is called a fibration map if $(Bf)^* \circ \overline{B}_Y = \overline{B}_X \circ f^*$ for any arrow $f \colon X \rightarrow Y$ in $\Cat{C}$. 
If $B$ preserves weak pullbacks, then $(\Rel(B), B)$ is a fibration map~\cite{jacobs-coalg}
in the relation fibration (Example~\ref{ex:rel-fib}).

\subsection{Coalgebraic Modal Logic}
\label{sec:logic}

We recall a general duality-based approach to coalgebraic modal logic where we work in the context of a 
contravariant adjunction~\cite{PavlovicMW06,Klin07,JacobsS09} in contrast to earlier
work~\cite{KupkeKP04,BonKur05} that assumed a dual equivalence.

We assume the following setting, involving an adjunction $P \dashv Q$ and a natural
transformation $\delta \colon BQ \Rightarrow QL$:
\begin{equation}\label{eq:logic}
\xymatrix{
	\Cat{C}\ar@/^2ex/[rr]^-{P} \save !L(.5) \ar@(dl,ul)^{B} \restore 
	& \bot 
	& \Cat{D}^{\op} \ar@/^2ex/[ll]^-Q \save !R(.5) \ar@(ur,dr)^{L} \restore 
	& \mbox{\qquad with} &
	BQ \ar@{=>}[r]^-{\delta} & QL
}
\end{equation}
In this context, a \emph{logic} for $B$-coalgebras is a pair $(L, \delta)$ as above. The functor
$L \colon \Cat{D} \rightarrow \Cat{D}$ 
represents the syntax of the modalities. It is assumed to have an initial algebra $\alpha \colon L\Phi \stackrel{\cong}{\rightarrow} \Phi$,
which represents the set (or other structure) of formulas of the logic. The natural transformation $\delta$ gives the one-step semantics. It can equivalently be presented in terms of its \emph{mate} 
$\widehat{\delta} \colon LP \Rightarrow PB$, which is perhaps more common in the literature. However,
we will formulate adequacy and expressiveness in terms of the current presentation of $\delta$.

Let $(X,\gamma)$ be a $B$-coalgebra. 
The semantics $\Sem{\_}$ of a logic $(L,\delta)$ arises by initiality of $\alpha$, making use of the mate $\widehat{\delta}$,
as the unique map making the diagram on the left below commute. 
$$
\xymatrix{
L \Phi \ar@{-->}[r]^{L \Sem{\_}} \ar[d]_\alpha & L P X \ar[r]^{\widehat{\delta}} & 
    P B X \ar[d]^{P \gamma} 
	&
	X \ar@{-->}[rr]^-{\theory} \ar[d]_-{\gamma}
		&
		& Q\Phi \ar[d]^-{Q\alpha}
		\\
	\Phi \ar@{-->}[rr]^{\exists ! \Sem{\_}} & & P X 
	&
	BX \ar@{-->}[r]^-{B\theory} 
		& BQ\Phi \ar[r]^-{\delta}
		& QL\Phi 
}
$$
The \emph{theory map} $\theory \colon X \rightarrow Q\Phi$ is defined as the transpose of $\Sem{\_}$, i.e.,
$\theory = Q \Sem{\_} \circ \eta_X$ where $\eta \colon \Id \rightarrow QP$ is the unit of the adjunction $P \dashv Q$.
It is the unique map making the diagram on the right above commute. 

\begin{example}\label{ex:logic-dfa}
	Let $\Cat{C}=\Cat{D}=\Set$, $P=Q=2^-$ the contravariant powerset functor, and $BX = 2 \times X^A$. 
	We define a simple logic for $B$-coalgebras, where formulas are just words over $A$. 
	To this end, let $LX = A \times X + 1$. The initial algebra of $L$ is the set $A^*$ of words. 
	Define $\delta \colon BQ \Rightarrow QL$ on a component $X$ as follows:
	$$
	\delta_X \colon 2 \times (2^X)^A \rightarrow 2^{A \times X + 1} \qquad
	\delta_X(o,t)(u) = 
		\begin{cases}
			o & \text{ if } u = \ast \in 1 \\
			t(a)(x) & \text{ if } u = (a,x)  \in A \times X
		\end{cases}		 
	$$
	For a coalgebra $\langle o, t \rangle \colon X \rightarrow 2 \times X^A$, the associated theory map $\theory \colon X \rightarrow 2^{A^*}$
	is given by $\theory(x)(\varepsilon) = o(x)$ and $\theory(x)(aw) = \theory(t(x)(a))(w)$ for all $x \in X$, $a \in A$, $w \in A^*$. 
	This is, of course, the usual semantics of deterministic automata. 
\end{example}

In the above example, the logic does not contain propositional connectives; this is reflected
by the choice $\Cat{D} =\Set$. 
Although it is possible to include propositional connectives into the functor $L$ (cf.~e.g.~\cite{Klin07}),
one usually adds those connectives by choosing $\Cat{D}$ to be a category of algebras.
For instance,  Boolean algebras are a standard choice for propositional logic,
and in Section~\ref{sec:examples} we use the category of semilattices to represent conjunction. 
In fact, if one is only interested in defining the semantics of the logic, one can simply
work with algebras for a signature; this is supported by the adjunctions presented in the next subsection. 
We outline in the next subsection how this can be used to represent
the propositional part of a real-valued modal logic.

\subsection{Contravariant Adjunctions}
\label{sec:adjunctions}

In this subsection we discuss several adjunctions that we use for 
presenting coalgebraic logic  as above, and will allow us
in Section~\ref{sec:examples} to demonstrate that a large variety of concrete examples is
covered by our framework. 
In all cases,
the adjunctions that we use for
the logic are generated by an object $\Omega$ of `truth values'. In fact, we believe all of the dual adjunctions listed 
in this section are instances of the so-called concrete dualities from~\cite{dualities} where $\Omega$ is 
the dualising object inducing the adjunction. 

For a simple but useful class of such adjunctions, let $\Cat{D}$ be a category with products, and $\Omega$
an object in $\Cat{D}$. 
Then there is an adjunction 
\begin{equation}\label{eq:adjunction-simple}
P \dashv Q \colon \Set \leftrightarrows \Cat{D}^{\op} \qquad \text{ where } 
PX = \Omega^X \text{ and } QX = \Hom(X,\Omega) \,,
\end{equation}
where $\Omega^X$ is the $X$-fold product of $\Omega$. 

This adjunction is instrumental for representing the semantics of a coalgebraic modal logic 
for $B$-coalgebras  based on predicate liftings (cf.~e.g.~\cite{KupkeP11}) within the dual adjunction framework by defining a suitable category of $L$-algebras.    
In general, describing the category of $L$-algebras 
that {\em precisely} represents a given logic (i.e., where the initial algebra corresponds to the set of formulas {\em modulo equivalence}) 
is nontrivial. For studying expressiveness, however, it is sufficient to consider formulas and their semantics.
This can be done as follows:
We start by considering 
a set $\pop$ of
propositional operators,  
each $o \in \pop$ associated with a certain
finite arity $\mathrm{ar}(o) \in \mathbb{N}$ 
and define the (propositional) signature functor
\begin{equation}\label{eq:signature}
\Sigma_\pop \colon \Set \to \Set \qquad \mbox{by putting} \qquad \Sigma_\pop X \coloneqq \coprod_{o \in \pop} X^{\mathrm{ar}(o)}.
\end{equation}
The category $\Alg(\Sigma_\pop)$ of algebras for the functor $\Sigma_\pop$ 
will play the role of the category $\Cat{D}$ in~\eqref{eq:adjunction-simple}.
We assume that we are given a set of truth values $\TV$ together with
a $\Sigma_\pop$-algebra structure $a_\TV \colon \Sigma_\pop \TV \to \TV$, which gives an interpretation
of the propositional operators. 
As $\TV$ is a $\Sigma_\pop$-algebra we obtain  functors $P \colon \Set \to \Alg(\Sigma_\pop)^\op$ and $Q \colon \Alg(\Sigma_\pop)^\op \to \Set$ 
as described in~\eqref{eq:adjunction-simple}.  An $\Omega$-valued coalgebraic modal logic $\logic(\Lambda)$ 
for a functor $B \colon \Set \to \Set$ is now given as  
a set $\Lambda$ of modal operators 
where each
$\lambda \in \Lambda$ is an $\Omega$-valued predicate lifting $\lambda \colon P^n  \Rightarrow PB$
with $\mathrm{ar}(\lambda)=n$ the arity of $\lambda$.
Given  $\logic(\Lambda)$ we define $L_\Lambda \colon \Alg(\Sigma_\pop) \to \Alg(\Sigma_\pop)$ by putting 
$$L_\Lambda A \coloneqq T_{\Sigma_\pop}\left(\{ [\lambda](a_1,\dots,a_n) \mid \lambda \in \Lambda, n = \mathrm{ar}(\lambda), a_j \in A \mbox{ for } 1 \leq j \leq n\}\right)$$
where $[\lambda](a_1,\dots,a_n)$ should be understood as name of a generator and where $ T_{\Sigma_\pop}$ denotes the free (term) monad over $\Sigma_\pop$.
The action of $L_\Lambda$ on a given morphism $f \colon A \to B$ is defined to be the unique $\Alg(\Sigma_\pop)$-morphism extending the
map $[\lambda_i](a_1,\ldots,a_n) \mapsto [\lambda_i](f(a_1),\ldots,f(a_n))$. 
It is now easy to see that the predicate liftings in $\Lambda$ give rise to a natural transformation
$\widehat{\delta} \colon L_\Lambda P \Rightarrow P B$ where, for an arbitrary set $X$, the $X$ component 
$\widehat{\delta}_X \colon L_\Lambda PX   \to P B X$ is the unique extension of the map
\[ [\lambda](u_1,\ldots,u_n) \mapsto  \lambda(u_1,\ldots,u_n) \] 
for $\lambda \in \Lambda$, $n = \mathrm{ar}(\lambda)$ and $u_j \in\Omega^X$ for $1 \leq j \leq n$. 
In other words, $(L_\Lambda, \delta)$ with $\delta$ being the mate of $\widehat{\delta}$ is a
logic for $B$-coalgebras in the sense of~\eqref{eq:logic}.
We arrive at the following picture:
   \begin{equation}\label{eq:comolo} 
   \xymatrix{
   \Set\ar@/^1.5ex/[rr]^-{P = \Omega^{-}} \save !L(.5) \ar@(dl,ul)^{B} \restore 
& \bot 
& \Alg(\Sigma_\pop)^{\op} \ar@/^2ex/[ll]^-{Q = \Hom(-,\Omega)}  \save !R(.8) \ar@(ur,dr)^{L_\Lambda} \restore 
} \hspace{1.5cm} \delta \colon B Q \Rightarrow Q L_\Lambda 
\end{equation}

\begin{example}
     To illustrate the 
     outlined approach, consider the real-valued coalgebraic modal logics from~\cite{KonigM18}. 
     The set $\Phi$ of formulas of these logics is given by the following definition that is indexed by a set $\Lambda$ of unary modal operators:
      \[  \Phi ::= \top \mid [\lambda]\varphi, \lambda \in \Lambda \mid \min(\varphi_1,\varphi_2) \mid \neg \varphi \mid \varphi \ominus q, q \in \mathbb{Q} \cap [0,\top] \]
      where $[0, \top]$ is a closed interval of real numbers with $\top$ denoting an arbitrary positive real number,
    $\ominus$ is interpreted as truncated subtraction on $[0,\top]$ given by $p \ominus q \mathrel{:=} \max(p-q,0)$, 
      $\min$ is interpreted as minimum and negation on $[0,\top]$ is defined as $\neg q \mathrel{:=} \top - q$.
      Following the construction of $L_\Lambda$ as described above, we obtain the following dual adjunction: 
   \[\xymatrix{
   \Set\ar@/^1.5ex/[rr]^-{P = [0,\top]^{-}} \save !L(.5) \ar@(dl,ul)^{B} \restore 
& \bot 
& \Alg(\Sigma_{[0,\top]})^{\op} \ar@/^2ex/[ll]^-{Q = \Hom(-,[0,\top])}  \save !R(.8) \ar@(ur,dr)^{L_\Lambda} \restore 
}  .\] 
      Here the operations on $[0,\top]$ are $\top$, $\min$, $\neg$ and $- \ominus q$ for $q \in \mathbb{Q} \cap [0,\top]$, 
      thus 
      $$\Sigma_{[0,\top]} X = 1 + X^2 + X + X \times (\mathbb{Q} \cap [0,\top]) \quad \mbox{ and } \quad 
      L_\Lambda(A) = T_{\Sigma_{[0,\top]}} (\{ [\lambda] a \mid a \in A, \lambda \in \Lambda \}).$$ 
\end{example}

To study expressiveness relative to a coinductive predicate in a fibration $p \colon \Cat{E} \to \Cat{C}$
we rely on a given dual adjunction $P \dashv Q$ between $\Cat{C}$ and $\Cat{D}$ together with its lifted version
$\lift{P} \dashv \lift{Q}$ between $\Cat{E}$ and $\Cat{D}$. In a large class of examples the fibration under consideration 
will be of type $p \colon \RelV \to \Set$ with $P \dashv Q$ being the dual adjunction between $\Set$ and $\Alg(\Sigma)$ 
described above.  
We will now provide a proposition that yields the required dual adjunction $\lift{P} \dashv \lift{Q}$ between $\RelV$ and
$\Alg(\Sigma)$.
To obtain this dual adjunction 
we need a number of assumptions. First we make some assumptions
on the truth and distance values $\TV$ and $\V$:
\begin{itemize}
 \item $\V$ is a complete lattice of distance values,
 \item $\TV$ is a bounded poset of truth values, 
 \item $(\TV,R_\TV \colon \TV \times \TV \to \V) \in \RelV$.
\end{itemize}
Furthermore we let $\Eq \colon \Set \rightarrow \RelV$
be the diagonal functor given by $\Eq X = \Delta_X$ where
\begin{eqnarray*} 
        \Delta_X(x_1,x_2) & \mathrel{:=} & \left\{ \begin{array}{ll} 
                                                    \top & \mbox{ if } x_1 = x_2 \\ 
                                                    \perp & \mbox { otherwise.}
                                                   \end{array} \right.
\end{eqnarray*}

\begin{proposition}\label{prop:tool}
 Let $\TV$ and $\V$ be sets of truth and distance values that satisfy the above assumptions
 and let $\Sigma \colon \Set \to \Set$ be a functor. 
 Suppose furthermore that $\Sigma$ has a lifting $\lift{\Sigma}\colon \RelV \to \RelV$ such that
 (i) $\Eq \circ \Sigma  \leq \lift{\Sigma} \circ \Eq$ and (ii)
  for any $(X,R), (Y,S) \in \RelV$ there is a morphism 
       $\overline{\strength}_{R,S} \colon R \times \lift{\Sigma} S \to \lift{\Sigma}(R \times S)$
       above the strength map $\strength_{X,Y} \colon X \times \Sigma Y \to \Sigma (X \times Y)$
       (the latter exists for any set functor $\Sigma$).
   If  $\lift{\Sigma} R_\TV \leq a_\TV^*(R_\TV)$, then
   there is a dual adjunction
   \begin{equation}\label{eq:dual-adjunction}
  \xymatrix{
  \RelV \ar@/^2ex/[rr]^-{\Hom(\_,R_\TV )} 
  & \bot & \Alg(\Sigma)^{\mathsf{op}} \ar@/^2ex/[ll]^-{\Hom(\_,a_\TV)} 
  }
  \end{equation}
\end{proposition}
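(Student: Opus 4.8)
The plan is to exhibit the two functors explicitly, verify that each is well defined, and then observe that the adjunction bijection is simply inherited from the underlying dual adjunction $P \dashv Q$ of~\eqref{eq:adjunction-simple} between $\Set$ and $\Alg(\Sigma)^{\op}$, i.e.\ $PX = \TV^X$ (power algebra, pointwise structure) and $QA = \Hom_{\Alg(\Sigma)}(A,\TV)$.

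First I would set up $\lift{Q} = \Hom(\_,a_\TV)$: it sends an algebra $(A,\alpha)$ to the set $\Hom_{\Alg(\Sigma)}((A,\alpha),(\TV,a_\TV))$ of algebra homomorphisms into $\TV$, equipped with the $\V$-relation $e_A(f,g) = \bigwedge_{a \in A} R_\TV(f(a),g(a))$. This is a legitimate object of $\RelV$ since $\V$ is a complete lattice, and precomposition with an algebra map $\phi$ is non-expansive because a meet over a larger index set is smaller; hence $\lift{Q}$ is a functor with essentially no work. The functor $\lift{P} = \Hom(\_,R_\TV)$ sends $(X,d)$ to the set $\Hom_{\RelV}((X,d),(\TV,R_\TV)) \subseteq \TV^X$ of non-expansive maps, which I equip with the pointwise $\Sigma$-algebra structure restricted from the power algebra $PX = \TV^X$. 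The only nontrivial point — and the crux of the whole proof — is that this subset is actually a \emph{subalgebra}, i.e.\ closed under the operations.

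The main obstacle is precisely this closure, and it is where assumptions (i) and (ii) enter. Closure under an operation $o$ of arity $n$ reduces to non-expansiveness of the interpretation $a_\TV^o \colon \TV^n \to \TV$ from the $n$-fold product relation $R_\TV^{n}$ (computed in $\RelV$, so $(\vec t,\vec s) \mapsto \bigwedge_i R_\TV(t_i,s_i)$) to $R_\TV$: if $f_1,\dots,f_n$ are non-expansive then so is the tuple $\langle f_1,\dots,f_n\rangle$ into the product, and post-composition with a non-expansive $a_\TV^o$ stays non-expansive. Ranging over all operators, this is exactly non-expansiveness of $a_\TV$ from the \emph{canonical} relation lifting $\coprod_{o\in\pop} R_\TV^{\mathrm{ar}(o)}$ to $R_\TV$; since the cross-component values of this coproduct relation are $\perp$ they are harmless, and as we are given $\lift{\Sigma} R_\TV \leq a_\TV^*(R_\TV)$, it suffices to prove the key lemma that the canonical lifting lies below $\lift{\Sigma}$, i.e.\ $\coprod_{o} R_\TV^{\mathrm{ar}(o)} \leq \lift{\Sigma} R_\TV$. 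For each operator $o$ of positive arity $n$ I would build a $\RelV$-morphism $R_\TV^{n} \to \lift{\Sigma} R_\TV$ over the coproduct injection $\iota_o$ by iterating the lifted strength $\overline{\strength}$ of (ii), inserting the arguments slot by slot, with the ``bare operation symbol'' part supplied by (i) in the form $\Eq\,\Sigma 1 \leq \lift{\Sigma}\,\Eq\,1$; nullary operators are the base case, handled directly by (i) together with reflexivity of $R_\TV$ (that is, $\Eq\TV \leq R_\TV$). Turning the abstract strength and diagonal conditions into the concrete non-expansiveness of every $a_\TV^o$ is the genuinely delicate step.

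With $\lift{P}$ shown to land in $\Alg(\Sigma)$, the adjunction bijection comes for free. For $R = (X,d)$ and an algebra $A$, a morphism in $\Alg(\Sigma)^{\op}$ from $\lift{P}R$ to $A$ is an algebra homomorphism $h \colon A \to \Hom_{\RelV}((X,d),(\TV,R_\TV))$, hence in particular a homomorphism $A \to \TV^X$ into the power algebra all of whose components are non-expansive. Under the transpose of the base adjunction $P \dashv Q$ this corresponds to a map $g \colon X \to \Hom_{\Alg(\Sigma)}(A,\TV)$, $g(x)(a) = h(a)(x)$, and the only thing to verify is that ``every $h(a)$ is non-expansive'' coincides with ``$g$ is non-expansive into $\lift{Q}A$'': both conditions unfold to $d(x,y) \leq \bigwedge_{a \in A} R_\TV\bigl(h(a)(x),h(a)(y)\bigr)$ for all $x,y$, so they are literally the same. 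Thus the base bijection restricts to $\Hom_{\Alg(\Sigma)}(A,\lift{P}R) \cong \Hom_{\RelV}(R,\lift{Q}A)$, and naturality is inherited from $P \dashv Q$, yielding the dual adjunction~\eqref{eq:dual-adjunction}. I expect the bijection and naturality to be routine; the real work, and the only place the hypotheses (i), (ii) and $\lift{\Sigma}R_\TV \leq a_\TV^*(R_\TV)$ are used, is the subalgebra closure via the key lemma.
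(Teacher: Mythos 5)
Your proposal has a genuine gap: it proves a strictly weaker statement than the proposition. Proposition~\ref{prop:tool} is stated for an \emph{arbitrary} functor $\Sigma \colon \Set \to \Set$ equipped with a lifting satisfying (i) and (ii), whereas the crux of your argument --- closure of the non-expansive maps under ``each operation $o$ of arity $n$'', the canonical lifting $\coprod_{o \in \pop} R_\TV^{\mathrm{ar}(o)}$, and your key lemma comparing it with $\lift{\Sigma}$ --- only makes sense when $\Sigma$ is a signature functor as in~\eqref{eq:signature}. For a general set functor there are no operators, arities, or coproduct-of-powers decomposition, so the key lemma cannot even be stated. (This restriction would suffice for the places where the paper invokes the proposition, which go through Corollary~\ref{cor:polynomial}, but it does not prove the proposition as stated.) The paper's proof is functor-generic precisely because it never decomposes a term $t \in \Sigma\Dis$ into an operation symbol and arguments: writing $\Dis = \Hom(R,R_\TV)$, it defines the algebra structure by sending $t$ to the composite
\[
R \cong R \times 1 \xrightarrow{\;\id \times t\;} R \times \Eq\Sigma\Dis \xrightarrow{\;\leq\;} R \times \lift{\Sigma}\Eq\Dis \xrightarrow{\;\lift{\strength}\;} \lift{\Sigma}(R \times \Eq\Dis) \xrightarrow{\;\lift{\Sigma}\mathrm{ev}\;} \lift{\Sigma}R_\TV \xrightarrow{\;\lift{a_\TV}\;} R_\TV ,
\]
where the only nontrivial checks are that $\mathrm{ev} \colon R \times \Eq\Dis \to R_\TV$ is a $\RelV$-morphism (a two-case computation using that $\Eq\Dis$ takes value $\perp$ off the diagonal and that each $f \in \Dis$ is non-expansive) and that $\lift{a_\TV}$ exists, which is exactly the hypothesis $\lift{\Sigma}R_\TV \leq a_\TV^*(R_\TV)$. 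The role played by your ``slots'' is taken over by the elements of $\Dis$ themselves.

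Second, even in the signature case, the construction you sketch for the key lemma would fail as described. The strength of a set functor is $\strength(x,t) = \Sigma(y \mapsto (x,y))(t)$: each application of $\lift{\strength}$ pairs a \emph{single} element with \emph{every} position of $t$. Iterating from $\iota_o(\ast,\dots,\ast) \in \Sigma 1$ therefore only produces terms all of whose positions carry the same tuple, and any functorial post-processing $\Sigma g$ applies the same $g$ to all positions, so terms $\iota_o(v_1,\dots,v_n)$ with distinct entries are unreachable this way. The repair is to index the slots rather than iterate: start from $c_o = \iota_o(1,\dots,n) \in \Sigma\{1,\dots,n\}$, which by (i) is at distance $\top$ from itself in $\lift{\Sigma}\Eq\{1,\dots,n\}$; pair with $R_\TV^n$ and apply $\lift{\strength}$ once to land in $\lift{\Sigma}(R_\TV^n \times \Eq\{1,\dots,n\})$; then apply $\lift{\Sigma}$ to the evaluation morphism $(\vec v, i) \mapsto v_i$, which is non-expansive by the same two-case argument. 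But this \emph{is} the paper's evaluation construction, so the detour through signatures buys nothing. On the positive side, your concluding step is correct and genuinely different from the paper: you obtain the adjunction by restricting the hom-bijection of the underlying dual adjunction $\TV^{(-)} \dashv \Hom(-,\TV)$ of~\eqref{eq:adjunction-simple}, and your observation that ``all components $h(a)$ are non-expansive'' unfolds to the same inequality as ``the transpose $g$ is non-expansive into $\lift{Q}A$'' is exactly right; the paper instead constructs the unit and counit explicitly and leaves the triangle identities to the reader.
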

\begin{proof}
    We first have to show that the functors that form the adjunction are well-defined. 
      In the following we write $\alpha$ as abbreviation for an algebra $(A,\alpha)$. 
    Throughout this proof we denote the least and the largest element  of $\V$ by $\perp$ and $\top$, respectively.
    Recall that the condition for a function $f$ to be a $\RelV$-morphism is $R_1(x_1,x_2) \leq R_2(f x_1,f x_2)$.
 
    To see that the functors are well-defined on objects, first note that for each $\Sigma$-algebra $(A,\alpha)$, the set
    $\Hom(\alpha,a_\TV)$ can be turned into a $\RelV$-object by defining
    \begin{eqnarray*}
      R_{\Hom(\alpha,a_\TV)}\colon \Hom(\alpha,a_\TV) \times \Hom(\alpha,a_\TV) & \to & \V \\
      (h_1,h_2) & \mapsto & \bigwedge_{a \in A} R_\TV(h_1(a),h_2(a)) .
    \end{eqnarray*}
    
    Likewise, for each $\RelV$-object $(X,R)$, the set $ \Hom(R,R_\TV)$ 
    carries a $\Sigma$-algebra structure  
   $a_R\colon \Sigma \Hom(R,R_\TV) \to \Hom(R,R_\TV)$
    given by  the function that maps $t\colon 1 \to  \Sigma \Hom(R,R_\TV)$
    to the following composition of arrows:
\[
\xymatrix{R \cong R \times 1 \ar[r]^-{\id \times t} & R \times \Eq \Sigma \Dis \ar[r]^{\leq} & R \times \lift{\Sigma}  \Eq  \Dis   \ar[r]^-{\lift{\strength}_{R,\Eq \Dis}} & \lift{\Sigma}(R \times \Eq \Dis)
\ar[r]^-{\lift{\Sigma} \mathrm{ev}} & \lift{\Sigma}R_\TV \ar[r]^{\lift{a_\TV}} & R_\TV }
\]
where $\Dis =  \Hom(R,R_\TV)$
and where $\mathrm{ev}(x,f) = f(x)$ is the evaluation function. 
To see that the above is a well-defined $\Rel_V$-morphism we only have to check that $\mathrm{ev} \in \RelV$ as the other
arrows are morphisms by our assumption that $\Sigma$ has a lifting $\lift{\Sigma}\colon \RelV \to \RelV$ such that $\strength$ and $a_\TV$
become morphisms in $\RelV$. We now show that $\mathrm{ev}$ satisfies the $\RelV$ morphism condition. Consider
two pairs $(x_1,f_1),(x_2,f_2) \in X \times \Hom(R,R_\TV)$. We distinguish cases:
\begin{description}
  \item[Case] $f_1 = f_2 = f$ In this case we have
     \begin{eqnarray*} R \times \Eq\Dis ((x_1,f),(x_2,f))  & = &  R(x_1,x_2) \wedge \top  = R(x_1,x_2) \\
       & \leq & R_\TV (f(x_1),f(x_2)) = R_\TV (\mathrm{ev}(x_1,f),\mathrm{ev}(x_2,f)). 
     \end{eqnarray*}
  \item[Case] $f_1 \not= f_2$. Then
  \[ R \times \Eq \Dis ((x_1,f_1),(x_2,f_2)) = R(x_1,x_2) \wedge \perp  \; = \; \perp \;\leq \; R_\TV (\mathrm{ev}(x_1,f_1),\mathrm{ev}(x_2,f_2)). \] 
\end{description}

To see that the $\Hom$-functors are well-defined on morphisms we first check that $\Hom(\_,a_\TV)$ maps algebra morphisms
to morphisms in $\RelV$.  To this aim consider an algebra morphism $h\colon (A_1,\alpha_1) \to (A_2,\alpha_2)$ and 
$g_1,g_2 \in \Hom(\alpha_2,a_\TV)$. We calculate:
\begin{eqnarray*}
        R_{\Hom(\alpha_2,a_\TV)}(g_1,g_2) & = &  \bigwedge_{a \in A_2} R_\TV(g_1(a),g_2(a)) \leq \bigwedge_{a' \in A_1} R_\TV(g_1(h(a')),g_2(h(a'))) \\
             & = &   R_{\Hom(\alpha_1,a_\TV)}\left(\Hom(h,a_\TV)(g_1),\Hom(h,a_\TV)(g_2)\right)
\end{eqnarray*}
We now check that the functor $\Hom(\_,R_\TV)$ is well-defined on morphisms as well. Let $h\colon (X_1,R_1) \to (X_2,R_2) \in \RelV$.
We have to show that $\Hom(h,R_\TV)\colon  \Hom(R_2,R_\TV) \to \Hom(R_1,R_\TV)$ is a $\Sigma$-algebra morphism. 

We calculate:
	\begin{eqnarray*}
		\Hom(h,R_\TV) (a_{\Hom(R_2,R_\TV)}(t)) & = & \Hom(h,R_\TV) (\lambda x. a_\TV \circ \Sigma \mathrm{ev} \circ \strength (x,t))\\
		 & = & \lambda y. a_\TV \circ \Sigma \mathrm{ev} \circ \strength (h(y),t) \\
		 & \stackrel{\mbox{\tiny nat. of $\strength$}}{=} &  \lambda y. a_\TV \circ \Sigma \mathrm{ev} \circ \Sigma (h \times \id) \circ  \strength (y,t) \\
		 & \stackrel{\mbox{\tiny (*)}}{=} &  \lambda y. a_\TV \circ \Sigma \mathrm{ev} \circ \Sigma (\id \times \Hom(h,R_\TV)) \circ  \strength (y,t) \\
		 &  \stackrel{\mbox{\tiny nat. of $\strength$}}{=} & \lambda y. a_\TV \circ \Sigma \mathrm{ev}  \circ  \strength (y,\Sigma\Hom(h,R_\TV)(t))\\
		 &   \stackrel{\mbox{\tiny Def.~of $a_{\Hom(\dots)}$}}{=}& a_{\Hom(R_1,R_\TV)}(\Sigma\Hom(h,R_\TV)(t))
	\end{eqnarray*}
where (*) holds as the following diagram can be easily seen to commute in $\Set$: 
$$
   \xymatrix{X_1 \times \Hom(X_2,\TV) \ar[d]_{h \times \id} \ar[rr]^-{\id \times \Hom(h,\TV)} &  & X_1 \times \Hom(X_1,\TV) \ar[d]^-{\mathrm{ev}}  \\  
    X_2 \times \Hom(X_2,\TV) \ar[rr]_-{\mathrm{ev}} & & \TV    } 
$$

This finishes the argument that the functors are well-defined. We will now argue that they form an adjunction.
We prove this by defining the unit and counit of the adjunction satisfying the triangle identities. 

For $(X,R) \in \RelV$ we define the unit map $\eta_R\colon  R \to \Hom( \Hom(R,R_\TV), a_\TV)$
by putting $\eta_R(x) \mathrel{:=} \lambda f. f(x)$. Naturality of $\eta$ can be easily checked (left to the reader), but well-definedness
is not obvious. For the latter we have to show that $\eta_R$  is a $\RelV$-morphism and that $\eta_R(x)$ is a $\Alg(\Sigma)$-morphism for all
$(X,R) \in \RelV$ and all $x \in X$. 

To see that $\eta_R$ is a  $\RelV$-morphism, consider $x_1,x_2 \in X$. 
\begin{eqnarray*} R(x_1,x_2) & \leq & \bigwedge_{f \in \Hom(R,R_\TV)} R_\TV(f(x_1),f(x_2))=  \bigwedge_{f \in \Hom(R,R_\TV)}  R_\TV(\eta_R(x_1)(f),\eta_R(x_2)(f)) \\
        & = & R_{\Hom(\Hom(R,R_\TV),a_\TV)}(\eta_R(x_1),\eta_R(x_2))
\end{eqnarray*}
To check that $\eta_R(x)$ is a $\Sigma$-algebra morphism, we calculate
\begin{eqnarray*}
    (a_\TV \circ \Sigma \eta_R(x))(t) & = & (a_\TV \circ \Sigma (\lambda f. f(x))(t) =  (a_\TV \circ \Sigma (\mathrm{ev}(x,\_)))(t) \\
    & = & (a_\TV \circ \Sigma (\mathrm{ev} \circ \strength))(x,t) \\
    & = & \eta_R(x) \left(\lambda x.(a_\TV \circ \Sigma (\mathrm{ev} \circ \strength))(x,t)  \right)  =  \eta_R(x) \left( a_{\Hom(R,R_\TV)}(t) \right) 
\end{eqnarray*}

For the counit of the (dual) adjunction we define $\epsilon_\alpha\colon  (A,\alpha) \to \Hom(\Hom(\alpha,a_\TV),R_\TV)$ by putting
$\epsilon_\alpha(a) \mathrel{:=} \lambda g.g(a)$ for all $(A,\alpha) \in \Alg(\Sigma)$ and all $a \in A$.
Again we leave it to the reader to convince themselves that $\epsilon$ is natural. We have to check well-definedness, i.e, we need to check
that $\epsilon_\alpha(a)$ is a $\RelV$-morphism and that $\epsilon_\alpha$ is an  $\Alg(\Sigma)$-morphism.

To see that $\epsilon_\alpha(a)$ is a $\RelV$-morphism we consider $g_1,g_2 \in \Hom(\alpha,a_\TV)$:
\begin{eqnarray*}
    R_{\Hom(\alpha,a_\TV)}(g_1,g_2) & = & \bigwedge_{a' \in A} R_\TV (g_1(a'),g_2(a')) \leq  R_\TV (g_1(a),g_2(a)) \\
        & = & R_\TV(\epsilon_\alpha(a)(g_1),\epsilon_\alpha(a)(g_2))
\end{eqnarray*}
To check that $\epsilon_\alpha$ is an $\Alg(\Sigma)$-morphism we calculate:
\begin{eqnarray*}
     a_{\Hom(\Hom(\alpha,a_\TV),R_\TV)} \circ \Sigma \epsilon_\alpha (t) & = & (\lambda g. (a_\TV \circ  \Sigma \mathrm{ev} \circ \strength)(g,\Sigma \epsilon_\alpha(t)) \\
     & \stackrel{\mbox{\tiny Def. of $\strength$}}{=} &  (\lambda g. (a_\TV \circ  \Sigma \mathrm{ev} (g,\_))) (\Sigma \epsilon_\alpha(t)) \\
     & \stackrel{\mbox{\tiny (+)}}{=} &  (\lambda g. (a_\TV \circ  \Sigma g)) (t) \\
      & \stackrel{\mbox{\tiny $g$ alg.~mor.}}{=} & \lambda g. g (\alpha(t)) = \epsilon_\alpha(\alpha(t)) 
\end{eqnarray*}
where (+) is an easy consequence of $(\mathrm{ev}(g,\_) \circ \epsilon_\alpha)(a) = g(a)$ for all $a \in A$.

This finishes the definition of unit and counit of the adjunction - checking  the triangle equalities is a straightforward exercise.
\end{proof}

The following remark is obvious, but at the same time useful for concrete examples. 
\begin{remark}\label{rem:adjunction}
   Let $\mathcal{C}$ be a full subcategory of $\RelV$ and $\mathcal{D}$ a full subcategory of $\Alg(\Sigma)$ 
   such that  $\Hom(-,a_\TV)$ and $\Hom(-,R_\TV)$ restrict to functors of type $\mathcal{D} \to \mathcal{C}$ and
   of type $\mathcal{C} \to \mathcal{D}$, respectively. 
   Then the dual adjunction from Prop.~\ref{prop:tool} restricts to a dual adjunction between $\mathcal{C}$ and $\mathcal{D}$.
\end{remark}

The assumptions in Proposition~\ref{prop:tool} concerning existence of a suitable lifting of $\Sigma$  are in particular met
when $\Sigma$  is a polynomial functor.

\begin{corollary}\label{cor:polynomial}
    Let $\TV$ and $\V$ be sets of truth and distance values that satisfy the assumptions from~Prop.~\ref{prop:tool}, 
    let $\Sigma$ be a signature functor. 
   Then $\Sigma$ lifts to $\RelV$ such that  (i) $\Eq \circ \Sigma  \leq \lift{\Sigma} \circ \Eq$ and (ii) there is a strength map $\lift{\strength}_{R,S}$ above $\strength_{X,Y}$.
   Consequently, the dual adjunction $\Hom(\_,R_\TV ) \dashv \Hom(\_,a_\TV)$ from~(\ref{eq:dual-adjunction}) $\Hom(\_,R_\TV ) \dashv \Hom(a_\TV,\_ )$ exists 
   if $\lift{\Sigma} R_\TV \leq a_\TV^*(R_\TV)$.
\end{corollary}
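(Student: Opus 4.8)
The plan is to exhibit an explicit lifting $\lift{\Sigma}$ of the signature functor to $\RelV$ and to verify the two hypotheses (i) and (ii) of Proposition~\ref{prop:tool}; the dual adjunction then follows directly from that proposition once the side condition $\lift{\Sigma} R_\TV \leq a_\TV^*(R_\TV)$ is assumed. Writing $\Sigma X = \coprod_{o \in \pop} X^{\mathrm{ar}(o)}$, so that an element of $\Sigma X$ is a pair $(o,\vec{x})$ with $o \in \pop$ and $\vec{x} = (x_1,\dots,x_{\mathrm{ar}(o)})$, I would define $\lift{\Sigma}(X,R) = (\Sigma X, R^\Sigma)$ by
\[
R^\Sigma\bigl((o,\vec{x}),(o',\vec{y})\bigr) =
\begin{cases}
\bigwedge_{i} R(x_i,y_i) & \text{if } o = o', \\
\perp & \text{otherwise,}
\end{cases}
\]
that is, elements of different summands are set to $\perp$ and within a summand the relation is the meet of the componentwise relations (the empty meet, for a nullary operation, being $\top$). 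This is the pointwise reading of the canonical relation lifting, transported to $\RelV$. First I would check that this is a genuine lifting: on a morphism $f \colon (X,R) \to (Y,S)$ one takes $\Sigma f$ as the underlying map, and the $\RelV$-morphism condition reduces, within each summand, to monotonicity of finite meets applied to $R(x_i,x_i') \leq S(f x_i, f x_i')$, while across distinct summands the left-hand side is already $\perp$; functoriality and $p \circ \lift{\Sigma} = \Sigma \circ p$ are then immediate.

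Next I would verify condition (i). Computing $\lift{\Sigma}(\Eq X)$, i.e.\ $R^\Sigma$ with $R = \Delta_X$, one finds it equal to $\top$ exactly when the two arguments agree in their operation symbol and in every component, i.e.\ when they are equal in $\Sigma X$, and $\perp$ otherwise; this is precisely $\Eq(\Sigma X) = \Delta_{\Sigma X}$, so in fact $\Eq \circ \Sigma = \lift{\Sigma} \circ \Eq$ and (i) holds with equality. For condition (ii) I would show that the set-theoretic strength $\strength_{X,Y}$ itself serves as $\lift{\strength}_{R,S}$. Recalling that the product $R \times S$ in $\RelV$ is the meet $R(x,x') \wedge S(y,y')$, one checks that for same-summand pairs the source relation $R \times S^\Sigma$ evaluates to $R(x,x') \wedge \bigwedge_i S(y_i,y_i')$, while the target relation $\lift{\Sigma}(R \times S)$ after applying $\strength$ evaluates to $\bigwedge_i\bigl(R(x,x') \wedge S(y_i,y_i')\bigr)$; these coincide by idempotence and commutativity of $\wedge$ (with the nullary case giving $R(x,x') \leq \top$), and for distinct summands both sides are $\perp$. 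Hence $\strength_{X,Y}$ is a $\RelV$-morphism above itself, establishing (ii).

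With (i) and (ii) in hand, Proposition~\ref{prop:tool} applies verbatim and yields the dual adjunction~\eqref{eq:dual-adjunction} under the stated hypothesis $\lift{\Sigma} R_\TV \leq a_\TV^*(R_\TV)$, completing the argument. The only real subtlety I anticipate is bookkeeping around the coproduct structure, ensuring that the off-diagonal value $\perp$ interacts correctly with the meets defining products and with the direction of $\leq$ in $\RelV$ (which, for $\V = [0,1]$, is the reverse of the numerical order); no step, however, requires more than the monotonicity, idempotence and commutativity of meets in the complete lattice $\V$. An alternative, more modular route would be to build $\lift{\Sigma}$ by induction over the polynomial structure (identity, constants, finite products, set-indexed coproducts), observing that each clause preserves (i) and (ii); I would keep the direct definition above as the primary argument, since the signature functor is already in the single-layer normal form $\coprod_{o \in \pop} X^{\mathrm{ar}(o)}$.
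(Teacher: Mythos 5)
Your proof is correct, but it is organised differently from the paper's. The paper proves a more general claim by induction over the grammar of polynomial functors (constant functors, $\Id$, arbitrary set-indexed products, and arbitrary set-indexed coproducts), defining the lifting clause by clause and verifying conditions (i) and (ii) in each case; the corollary then follows because signature functors are generated by that grammar. You instead exploit the fact that a signature functor is already in the single-layer normal form $\coprod_{o \in \pop} X^{\mathrm{ar}(o)}$ and define the lifting in one stroke --- $\perp$ across distinct summands, componentwise meet within a summand --- and then check (i) (in fact with equality, which is stronger than needed) and (ii) by direct computation with finite meets, including the correct treatment of the nullary case (empty meet $= \top$, giving $R(x,x') \leq \top$ for the strength condition). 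Unfolding the paper's inductive clauses on $\coprod_{o} \Id^{\mathrm{ar}(o)}$ produces exactly your formula, so the two constructions agree; the difference is purely in proof architecture. Your version is shorter and self-contained for the statement as given, whereas the paper's induction buys closure under arbitrary nesting of products and coproducts (including infinite products), making the construction reusable for polynomial functors that are not presented in normal form. Your closing observation that the set-theoretic strength itself serves as $\lift{\strength}_{R,S}$ is likewise what the paper's inductive cases establish, just stated globally rather than case by case.
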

\begin{proof}
    It is clear that the existence of the dual adjunction follows from Prop~\ref{prop:tool} once we establish that
    any polynomial functor $\Sigma$ has a lifting to $\RelV$  such that  (i) $\Eq \circ \Sigma  \leq \lift{\Sigma} \circ \Eq$  and
    (ii) there is a strength map $\lift{\strength}_{R,S} \colon R \times \lift{\Sigma} S \to \lift{\Sigma}(R \times S)$ above $\strength_{X,Y}$
    for all  $R \in {(\RelV)}_X, S\in {(\RelV)}_Y$. 
    In the following we prove this claim not only for signature functors but for the collection of functors $\mathcal{F}$ generated 
    by the following grammar:
   \[  \mathcal{F} \mathrel{::=} A \in \Set \mid \Id \mid \prod_{j \in J} \Sigma_j \mid \coprod_{j \in J} \Sigma_j \]
   where $J$ are arbitrary sets of indexes, $A$ denotes the constant functor and $\Id \colon \Set \to \Set$ denotes the identity functor.
   For a functor $\Sigma \in \mathcal{F}$ we now inductively define the action of its lifting $\lift{\Sigma} \colon \RelV \to \RelV$ on objects 
   while at the same
   time proving conditions (i) and (ii).
   \begin{description}
    \item[Case] $\Sigma = A$ (constant functor). Then we put
    \[ (\lift{A} R) (a_1,a_2) \coloneqq \left\{ \begin{array}{l} 
                                              \top \qquad \mbox{if } a_1 = a_2 \\
                                              \perp \qquad \mbox{otherwise.}
                                             \end{array} \right. \]
    The conditions on $\Eq$ and  $\strength$ are easy to check as in this case
    $\Eq A X = \Delta_A = \lift{A} \Eq X$ and as
    $\strength_{X,Y}(x,a) = a$ which clearly lifts to a suitable $\lift{\strength}_{R,S}$.
    \item[Case] $\Sigma = \Id$. Then $\lift{\Sigma}(R) = R$, $\Delta \circ \Sigma = \Delta = \lift{\Sigma} \circ \Delta$ and the strength map is simply the identity.
    \item[Case] $\Sigma =  \prod_{j \in J} \Sigma_j$. Then
    \[  \lift{\Sigma}(R)(x_1,x_2) = \bigwedge_{j \in J} \lift{\Sigma}_j(R)(\pi_j(x_1),\pi_j(x_2)) \] 
    where $\pi_j$ is the projection onto the $j$-th component of the product. 
    For proving property (i) we consider an arbitrary set $X$, $x_1,x_2 \in  \Sigma X$ and we calculate:
    \begin{eqnarray*}
     \Eq \Sigma X (x_1,x_2) & = & \Eq \prod_{j \in J} \Sigma_j X (x_1,x_2)
                                  =\bigwedge_{j \in
                                  J} \Eq  (\Sigma_j X) (\pi_j(x_1),\pi_j(x_2))
      \\
      & \stackrel{\mbox{\tiny I.H.}}{\leq}& \bigwedge_{j \in J} \lift{\Sigma}_j (\Eq X) (\pi_j(x_1),\pi_j(x_2)) 
     = \lift{\Sigma} (\Eq X) (x_1,x_2) 
    \end{eqnarray*}
    To check that $\strength \colon X \times \Sigma Y \to \Sigma (X \times Y)$ satisfies condition (ii) 
    let $\strength_j \colon X \times  \Sigma_j Y \to \Sigma_j (X \times Y)$ and consider
    two arbitrary pairs $(x_1,z_1),(x_2,z_2) \in X \times \Sigma Y$. We calculate: 
    \begin{eqnarray*} 
       (R \times \lift{\Sigma}S)((x_1,z_1),(x_2,z_2)) & = &  R(x_1,x_2) \wedge \lift{\Sigma}S(z_1,z_2) \\
       & = & \bigwedge \{ R(x_1,x_2) \} \cup \{\lift{\Sigma}_j S (\pi_j(z_1),\pi_j(z_2)) \mid j \in J \} \\
       & = &  \bigwedge_{j \in J} \left\{(R \times \lift{\Sigma}_j S)((x_1,\pi_j(z_1)),(x_2,\pi_j(z_2)))  \right\}\\ 
       & \stackrel{\mbox{\tiny I.H.}}{\leq} &  \bigwedge_{j \in J} \left\{  \lift{\Sigma}_j (R \times S) (\strength_j(x_1,\pi_j(z_1)),\strength_j(x_2,\pi_j(z_2)))  \right\} \\
       &  \stackrel{\mbox{\tiny (+)}}{=} &   \bigwedge_{j \in J}  \left\{ \lift{\Sigma}_j (R \times S) (\pi_j (\strength (x_1,z_1)),\pi_j (\strength(x_2,z_2))) \right\} \\
       &   \stackrel{\mbox{\tiny Def.}}{=} & \lift{\Sigma}(R \times S)(\strength (x_1,z_1),\strength(x_2,z_2))
    \end{eqnarray*}
    where for (+) we used that $\pi_j \circ \strength = \strength_j \circ (\id \times \pi_j)$ as can be easily checked.
    \item[Case] $\Sigma =  \coprod_{j \in J} \Sigma_j$. Then
    \[  \lift{\Sigma}(R)(\kappa_m(x_1),\kappa_n(x_2)) = \left\{ \begin{array}{l}
                                                                 \lift{\Sigma}_n(R)(x_1,x_2) \mbox{ if } n = m \\
                                                                 \perp \mbox{ otherwise.}
                                                                \end{array}
    \right.  \] 
    where the $\kappa_n$ denotes the $n$-th inclusion into the coproduct. 
    As in the previous case we first verify (i): let $X$ be a set  and consider $x_1,x_2 \in \Sigma X$. W.l.o.g. we assume there are $j \in J$ and   
    $x_1',x_2' \in \Sigma_j X$ with $x_i = \kappa_j(x_i')$ for $i=1,2$ - otherwise property (i) is trivially satisfied.
    Spelling out the definitions we get
    \begin{eqnarray*}
           \Delta(\Sigma X) (x_1,x_2) & = &   \Delta (\Sigma_j X) (x_1',x_2')  \stackrel{\mbox{\tiny I.H.}}{\leq} \lift{\Sigma}_j (\Delta X)  (x_1',x_2')  =  \lift{\Sigma} (\Delta X)  (x_1,x_2)
    \end{eqnarray*}
    Let $\strength_j \colon  X \times  \Sigma_j Y \to \Sigma_j (X \times Y)$
    be the strength maps of the components of $\Sigma$. Consider pairs $(x_1,\kappa_j(y_1)), (x_2,\kappa_j(y_2)) \in X \times \Sigma Y$
    where we assumed that the $y_i$'s are from the same $j$-th component of $\Sigma Y$  - otherwise the strength condition is trivially true.
    We calculate:
    \begin{eqnarray*}
        (R \times \lift{\Sigma}S)((x_1,\kappa_j(y_1)), (x_2,\kappa_j(y_2))) & \stackrel{\mbox{\tiny Def of $\lift{\Sigma}$}}{=} & (R \times \lift{\Sigma}_j S)((x_1,y_1), (x_2,y_2)) \\
        & \stackrel{\mbox{\tiny I.H.}}{\leq} & \lift{\Sigma}_j (\strength_j(x_1,y_1),\strength_j(x_2,y_2)) \\
        & = & \lift{\Sigma}(R\times S) (\kappa_j(\strength_j(x_1,y_1)),\kappa_j(\strength_j(x_2,y_2))) \\
        & = & \lift{\Sigma}(R \times S) (\strength(x_1,\kappa_j (y_1)),\strength(x_2,\kappa_j (y_2))) 
    \end{eqnarray*}
    where the last equality follows from the easily verifiable fact that $\kappa_j \circ \strength_j = \strength \circ (\id \times \kappa_j)$.
   \end{description}
   This finishes the definition of $\lift{\Sigma}$ on objects. Our argument also shows that  for polynomial functors $\Sigma$, the map $\strength$ lifts to $\RelV$
   as required. Finally, we extend $\lift{\Sigma}$ to a functor 
   $\RelV \to \RelV$ by putting $\lift{\Sigma} f \coloneqq
   \Sigma f$  for all morphisms $f: R\to S \in \RelV$. In order to see that $\lift{\Sigma}$ is well defined on morphisms one has to prove that 
   $\Sigma f$ is a $\RelV$-morphism from $\lift{\Sigma}(R)$ to $\lift{\Sigma}(S)$ whenever
   $f:R \to S$ is a $\RelV$-morphism. This can be easily shown by induction on the structure
   of $\Sigma$. Functoriality of $\lift{\Sigma}$ is an immediate consequence of functoriality of
   $\Sigma$.
\end{proof}

Let $\logic(\Lambda)$ be a coalgebraic modal logic for some functor $B$ 
and its representation via a dual adjunction from $\Set$ to $\Alg(\Sigma_\pop)$ for some polynomial functor $\Sigma_\pop$ together with a 
functor $L_\Lambda \colon \Alg(\Sigma_\pop) \to \Alg(\Sigma_\pop)$ as in~\eqref{eq:comolo}.
Furthermore let $a_\TV \colon \Sigma_\pop \TV \to \TV$ be the algebra structure induced by 
the propositional operators $\pop$ of $\logic(\Lambda)$ such that for all operations 
$o \in O$ and all $v_1,\dots,v_n,v_1',\dots,v_n' \in \Omega$ we have
\[ \bigwedge_{1 \leq i \leq n} R_\TV(v_i,v_i')  \leq R_\TV\left(o(v_1,\dots,v_n),o(v_1',\dots,v_n')\right).  \]
Then the results of this section can be summarised in the following diagram:

\begin{center}
$$
	\xymatrix{\RelV \save !L(.5) \ar@(dl,ul)^{\lift{B}} \restore   \ar[dd]_{p} \ar@/^2ex/[rrdd]^-{\overline{P}}  \\ 
	\\
\Set \ar@/^2ex/[rr]^-{P} \save !L(.5) \ar@(dl,ul)^{B} \restore & \bot & \Alg(\Sigma_\TV)^{\op} \ar@/^2ex/[ll]^-Q \ar@/^2ex/[lluu]^-{\lift{Q}} \save !R(.8) \ar@(ur,dr)^{L_\Lambda} \restore 
& \qquad \mbox{ with} &
BQ \ar@{=>}[r]^-{\delta} & QL_\Lambda
}
$$
\end{center}

In the next section we will see that adequacy of the logic $\logic(\Lambda)$ follows if
$\delta$ lifts to $\lift{\delta} \colon \lift{B}\, \lift{Q} \Rightarrow \lift{Q} L_\Lambda$, while expressiveness 
is implied by an additional property of $\lift{\delta}$.

\section{Abstract Framework: Adequacy \& Expressiveness}
\label{sec:framework}

In this section, we define when a logic is adequate and expressive with respect to a coinductive predicate,
and provide sufficient conditions on the logic. 
Coinductive predicates are expressed abstractly via fibrations and functor lifting, 
and logic via a contravariant adjunction. Therefore, we make the following assumptions.  
\begin{assumption}\label{as:basic}
 Throughout this section, we assume:
\begin{enumerate}
	\item \emph{(Type of coalgebra)} An endofunctor $B \colon \Cat{C} \rightarrow \Cat{C}$ on a category $\Cat{C}$;
	\item \emph{(Coinductive predicate)} A $\CLat$-fibration $p \colon \Cat{E} \rightarrow \Cat{C}$ and a lifting $\lift{B} \colon \Cat{E} \rightarrow \Cat{E}$
	of $B$; 
	\item \emph{(Coalgebraic logic)} An adjunction $P \dashv Q \colon \Cat{C} \leftrightarrows \Cat{D}^{\op}$, a functor $L \colon \Cat{D} \rightarrow \Cat{D}$
	with an initial algebra $\alpha \colon L(\Phi) \stackrel{\cong}{\rightarrow} \Phi$, and a natural transformation $\delta \colon BQ \Rightarrow QL$. 
\end{enumerate}
\end{assumption}

As explained in the introduction, to formulate 
adequacy and expressiveness, we need one more crucial ingredient: an object that stipulates
how collections of formulas should be compared. 
In the abstract fibrational setting, we assume an object above $Q\Phi$; 
more systematically, a functor $\lift{Q}$ above $Q$.

\begin{definition}[Adequacy and Expressiveness]
\label{def:adequate-expressive}
Let $\lift{Q} \colon \Cat{D}^{\op} \rightarrow \Cat{E}$ be a functor such that $p \circ \lift{Q} = Q$. 
We say the logic $(L,\delta)$ is
\begin{itemize}
	\item \emph{adequate} if $\nu(\gamma^* \circ \lift{B}_X) \leq \theory^*(\lift{Q} \Phi)$ for every $B$-coalgebra $(X,\gamma)$;
	\item \emph{expressive} if $\nu(\gamma^* \circ \lift{B}_X) \geq \theory^*(\lift{Q} \Phi)$ for every $B$-coalgebra $(X,\gamma)$.
\end{itemize}
\end{definition}
When we need to refer to the functors $\lift{Q}$ or $\lift{B}$ explicitly, we speak about adequacy and expressiveness \emph{via $\lift{Q}$} \emph{w.r.t.\ $\lift{B}$}.
Examples follow in Section~\ref{sec:canonical}, where classical  
expressiveness and adequacy w.r.t.\ bisimilarity is recovered, and Section~\ref{sec:examples}, where
other instances are treated. 

\begin{remark}
	Definition~\ref{def:adequate-expressive} can be generalised to arbitrary poset fibrations, 
	not necessarily assuming complete lattice structure on the fibres,
	as follows. Adequacy means that for any $B$-coalgebra $(X,\gamma)$,
	if $R \leq \gamma^* \circ \lift{B}_X(R)$ then $R \leq \theory^* (\lift{Q} \Phi)$.
	Expressiveness means that for any $B$-coalgebra $(X,\gamma)$,
	we have $\theory^* (\lift{Q} \Phi) \leq R$ for some $R$ with $R \leq \gamma^* \circ \lift{B}_X(R)$. 
	In fact, with these definitions, if $(L, \delta)$ is both adequate and expressive
	then $\gamma^* \circ \lift{B}_X$ has a greatest fixed point, given by
	$\theory^*(\lift{Q} \Phi)$. We prefer to work with $\CLat$-fibrations, since the
	definition is slightly simpler, and it covers all our examples. 
 \end{remark}

\subsection{Sufficient conditions for expressiveness and adequacy}

The results below give conditions on $\overline{B}$, $\overline{Q}$ and primarily 
the one-step semantics $\delta$ that guarantee expressiveness (Theorem~\ref{thm:expressive})
and adequacy (Theorem~\ref{thm:adequate}). 
For simplicity we fix the functor $\lift{Q}$. 
\begin{assumption}\label{as:qbar}
In the remainder of this section 
we assume a functor $\lift{Q} \colon \Cat{D}^{\op} \rightarrow \Cat{E}$ such that $p \circ \lift{Q} = Q$. 
\end{assumption}

For adequacy, the main idea is to require sufficient conditions to lift $\delta$
to a logic for $\overline{B}$. 
\begin{theorem}\label{thm:adequate}
	Suppose that
	\begin{enumerate}
		\item $\lift{B}\,\lift{Q} X \leq \delta_X^*(\lift{Q}L X)$ for every object $X$ in $\Cat{D}$, and
		\item the functor $\lift{Q}$ has a left adjoint.
	\end{enumerate}
	Then $(L,\delta)$ is adequate.
\end{theorem}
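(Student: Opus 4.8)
The plan is to reduce adequacy to the construction of a single morphism in $\Cat{E}$ lying above the theory map. Fix a $B$-coalgebra $(X,\gamma)$ and write $R \coloneqq \nu(\gamma^* \circ \lift{B}_X)$ for the coinductive predicate. Since $p$ is a fibration, it suffices to exhibit a morphism $R \to \lift{Q}\Phi$ in $\Cat{E}$ above $\theory$, because the existence of such a morphism is equivalent to $R \leq \theory^*(\lift{Q}\Phi)$. Two preliminary observations drive the argument. First, $R$ is in particular a post-fixed point, so $R \leq \gamma^*(\lift{B}_X R)$; composing the induced vertical map with the Cartesian lift of $\gamma$ turns $R$ into a $\lift{B}$-coalgebra $\bar\gamma \colon R \to \lift{B}R$ above $\gamma$. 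Second, I read the two hypotheses as saying exactly that $\delta$ lifts to a logic for $\lift{B}$-coalgebras living in the total category $\Cat{E}$, as anticipated at the end of Section~\ref{sec:adjunctions}.

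Concretely, hypothesis (1) states that for each $X$ in $\Cat{D}$ there is a morphism $\lift{\delta}_X \colon \lift{B}\,\lift{Q}X \to \lift{Q}LX$ above $\delta_X$ (using that, in a fibration, a morphism above $f$ into $S$ exists iff its domain is $\leq f^*(S)$). As $p$ is faithful, each naturality square of the candidate family $\lift{\delta}$ consists of two parallel morphisms with the same image under $p$---namely the corresponding square of the natural transformation $\delta$---so they coincide, and $\lift{\delta} \colon \lift{B}\,\lift{Q} \Rightarrow \lift{Q}L$ is natural and lies above $\delta$. Hypothesis (2) provides a left adjoint $\lift{P} \dashv \lift{Q}$. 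Thus $(L, \lift{\delta})$ is a logic for $\lift{B}$-coalgebras in the sense of~\eqref{eq:logic}, now based on the adjunction $\lift{P} \dashv \lift{Q} \colon \Cat{E} \leftrightarrows \Cat{D}^{\op}$ and the same initial $L$-algebra $\alpha \colon L\Phi \to \Phi$. Applying the general construction of the theory map to the $\lift{B}$-coalgebra $(R, \bar\gamma)$ yields a morphism $\lift{\theory} \colon R \to \lift{Q}\Phi$ in $\Cat{E}$, characterised as the unique morphism making the (lifted) right-hand diagram commute, i.e.\ $\lift{Q}\alpha \circ \lift{\theory} = \lift{\delta}_\Phi \circ \lift{B}\lift{\theory} \circ \bar\gamma$.

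It remains to identify $\lift{\theory}$ as a lift of $\theory$. Applying the faithful functor $p$ to the defining equation of $\lift{\theory}$ and using $p\lift{Q} = Q$, $p\lift{B} = Bp$, $p\lift{\delta} = \delta$ and $p\bar\gamma = \gamma$, we obtain that $t \coloneqq p(\lift{\theory}) \colon X \to Q\Phi$ satisfies $Q\alpha \circ t = \delta_\Phi \circ Bt \circ \gamma$, which is precisely the equation defining the base theory map. By uniqueness of $\theory$ we conclude $p(\lift{\theory}) = \theory$, so $\lift{\theory}$ is above $\theory$ and hence $R \leq \theory^*(\lift{Q}\Phi)$; as $(X,\gamma)$ was arbitrary, this gives adequacy. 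The main work---and the step most likely to hide subtleties---is the identification $p(\lift{\theory}) = \theory$: it hinges on checking that every piece of the lifted logic ($\lift{P} \dashv \lift{Q}$, $\lift{\delta}$ and $\bar\gamma$) genuinely lies above its base counterpart, so that the projected theory map solves the base defining equation and uniqueness can be invoked. The naturality of $\lift{\delta}$ and the compatibility of the lifted adjunction with $p$ are the routine but essential ingredients here, and they are exactly where hypotheses (1) and (2) are consumed.
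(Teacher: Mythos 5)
Your proof is correct and follows essentially the same route as the paper's: lift $\delta$ to $\lift{\delta}$ via the Cartesian morphisms (with naturality from faithfulness of $p$), view $\nu(\gamma^*\circ\lift{B}_X)$ as a $\lift{B}$-coalgebra above $\gamma$, and project the resulting lifted theory map down to conclude $R \leq \theory^*(\lift{Q}\Phi)$. Your final step spelling out why $p(\lift{\theory}) = \theory$ via uniqueness of the base theory map is a welcome elaboration of a point the paper leaves implicit, but it is the same argument.
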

\begin{proof} 
The first assumption yields a natural transformation $\lift{\delta} \colon \lift{B} \, \lift{Q} \Rightarrow \lift{Q} L$,
defined on a component $X$ by 
$$
	\lift{\delta}_X = \left(
	\xymatrix{
		\lift{B} \, \lift{Q} X \ar[r]
			& \delta_X^*(\lift{Q} L X) \ar[r]^-{\widetilde{\delta}}
			& \lift{Q} L X
	}
	\right)
$$
where the left arrow is the inclusion $\lift{B}\,\lift{Q} X \leq \delta_X^*(\lift{Q}L X)$,
and the right arrow $\widetilde{\delta}$ is the Cartesian morphism to $\lift{Q}LX$ above $\delta_X$. 
It follows that $\lift{\delta}_X$ is above $\delta_X$. Further, naturality follows from $p$ being 
faithful (as it is a poset fibration, see Section~\ref{sec:coind-pred}) and naturality of $\delta$. 
Observe that we have thus established $(L, \overline{\delta})$ as a logic for $\overline{B}$-coalgebras, 
via the adjunction $\lift{P} \dashv \lift{Q}$. 

Now let $(X,\gamma)$ be a $B$-coalgebra, and $R = \nu(\gamma^* \circ \lift{B}_X)$. Then, in particular, 
$R \leq \gamma^* \circ \overline{B}_X(R)$,
which is equivalent to a coalgebra $\lift{\gamma} \colon R \rightarrow \lift{B} R$ above $\gamma \colon X \rightarrow BX$. 
The logic $(L, \lift{\delta})$ gives us a theory map $\lift{\theory}$ of $(R,\lift{\gamma})$ as the unique
map making the following diagram commute. 
$$
\xymatrix{
	R \ar@{-->}[rr]^-{\lift{\theory}} \ar[d]_-{\lift{\gamma}}
		&
		& \lift{Q}\Phi \ar[d]^-{\lift{Q}\alpha}
		\\
	\lift{B}R \ar@{-->}[r]^-{\lift{B} \, \lift{\theory}} 
		& \lift{B}\,\lift{Q}\Phi \ar[r]^-{\lift{\delta}}
		& \lift{Q}L\Phi 
}
$$
Since $p \circ \lift{Q} = Q$ and $p(\lift{\delta}_\Phi) = \delta_\Phi$, it follows that $p(\lift{\theory})$ 
equals the theory map $\theory$ of $(X,\gamma)$. Hence $R \leq \theory^*(\overline{Q}\Phi)$ as required. 
\end{proof}

Expressiveness requires the converse inequality of the one in Theorem~\ref{thm:adequate},
but only on one component: the carrier $\Phi$ of the initial algebra.
Further, the conditions include that $(\overline{B},B)$ is a fibration map. In particular,
for the canonical relation lifting $\Rel(B)$ this means that $B$ should preserve weak pullbacks;
this case is explained in more detail in Section~\ref{sec:canonical}.

\begin{lemma}\label{lm:iso-q}
	Let $\iota \colon X \stackrel{\cong}{\rightarrow} Y$ be an isomorphism in $\Cat{D}$. 
	Then $Q(\iota^{-1})^* (\lift{Q}Y) = \lift{Q}X$.
\end{lemma}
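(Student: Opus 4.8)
The plan is to prove $\lift{Q}X = Q(\iota^{-1})^*(\lift{Q}Y)$ by exhibiting the Cartesian morphism for the reindexing along $Q(\iota^{-1}) \colon Y \to X$ (note that $\iota \colon X \xrightarrow{\cong} Y$ in $\Cat{D}$ becomes $Q\iota \colon QY \to QX$, and its inverse $Q(\iota^{-1}) \colon QX \to QY$ lives in $\Cat{C}$, so reindexing along it goes $\Cat{E}_{QY} \to \Cat{E}_{QX}$). First I would observe that since $\iota$ is an isomorphism, so is $Q\iota$ in $\Cat{C}$, with inverse $Q(\iota^{-1})$. The key idea is that reindexing along an isomorphism in a poset fibration is itself an isomorphism of fibres: $(Q\iota)^* \colon \Cat{E}_{QX} \to \Cat{E}_{QY}$ and $(Q(\iota^{-1}))^* \colon \Cat{E}_{QY} \to \Cat{E}_{QX}$ are mutually inverse, because the fibration is split (functoriality of reindexing, stated just after the definition of poset fibration) and $(Q\id)^* = \id$ on the fibre.

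Next I would relate this to $\lift{Q}$. Since $p \circ \lift{Q} = Q$, we have $\lift{Q}X$ above $QX$ and $\lift{Q}Y$ above $QY$. The functor $\lift{Q}$ applied to the morphism $\iota^{-1} \colon Y \to X$ in $\Cat{D}$ (equivalently $\iota \colon X \to Y$ read in $\Cat{D}^{\op}$) yields an arrow in $\Cat{E}$ lying above $Q(\iota^{-1}) \colon QX \to QY$, connecting $\lift{Q}X$ to $\lift{Q}Y$. Functoriality of $\lift{Q}$ together with the fact that $\iota$ is iso gives an isomorphism $\lift{Q}X \cong \lift{Q}Y$ in $\Cat{E}$ above the isomorphism $Q(\iota^{-1})$ in $\Cat{C}$. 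Then the universal property of the Cartesian morphism $\widetilde{(Q(\iota^{-1}))}_{\lift{Q}Y}$ identifies $Q(\iota^{-1})^*(\lift{Q}Y)$ as the (unique up to the poset structure) object above $QX$ through which this iso factors.

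The cleanest route, which I would actually carry out, is: because $\lift{Q}$ is a functor sending the iso $\iota^{-1}$ (in $\Cat{D}$) to an iso $\lift{Q}(\iota^{-1}) \colon \lift{Q}X \to \lift{Q}Y$ above $Q(\iota^{-1})$, this arrow is Cartesian (every arrow above an isomorphism in a poset fibration is Cartesian, since reindexing along an iso is an iso of posets). By uniqueness of Cartesian lifts in a poset fibration up to equality of the source object in the fibre, the source $\lift{Q}X$ must coincide with $Q(\iota^{-1})^*(\lift{Q}Y)$, giving the desired equality. I expect the main obstacle to be purely bookkeeping: keeping straight the directions of the arrows under the contravariance of $Q$ and the $\op$ in $\Cat{D}^{\op}$, and justifying that an arrow above an isomorphism is automatically Cartesian (which follows from splitness together with the poset assumption on fibres, so no genuine difficulty arises, only careful variance tracking).
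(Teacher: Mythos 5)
Your route is genuinely different from the paper's, and in outline it is viable. The paper proves the two inequalities separately: $\lift{Q}X \leq Q(\iota^{-1})^*(\lift{Q}Y)$ follows because $\lift{Q}(\iota^{-1}) \colon \lift{Q}X \to \lift{Q}Y$ sits above $Q(\iota^{-1})$, so the universal property of the Cartesian morphism yields a factorisation above $\id_{QX}$; conversely, $Q(\iota^{-1})^*(\lift{Q}Y) \leq \lift{Q}X$ because the composite $\lift{Q}\iota \circ \widetilde{Q\iota^{-1}}_{\lift{Q}Y}$ sits above $Q\iota \circ Q(\iota^{-1}) = \id_{QX}$. You instead argue that $\lift{Q}(\iota^{-1})$ is itself a Cartesian lift of $Q(\iota^{-1})$ at $\lift{Q}Y$, and then invoke uniqueness of Cartesian lifts over a fixed arrow and codomain, which in a poset fibration (by antisymmetry of the fibre order) gives equality on the nose. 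That is a legitimate, arguably slicker, single-step argument.

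However, the justification you give for the key step is a false general claim: it is \emph{not} true that every arrow above an isomorphism in a poset fibration is Cartesian, and neither splitness nor the poset structure of the fibres makes it so. Counterexample in the relation fibration $p \colon \Rel \to \Set$: take relations $R \subsetneq S$ on the same set $X$; the identity function is an arrow $(X,R) \to (X,S)$ in $\Rel$ lying above the isomorphism $\id_X$, but it is not Cartesian, since otherwise the identity arrow $(X,S) \to (X,S)$ above $\id_X$ would have to factor through $(X,R)$ above $\id_X$, forcing $S \subseteq R$. What is true---and is exactly what your argument needs---is the standard fact that every \emph{isomorphism in the total category} $\Cat{E}$ is Cartesian over its image (which is an iso, as functors preserve isos): given any $k \colon T \to \lift{Q}Y$ above $Q(\iota^{-1}) \circ g$, the required factorisation is $\lift{Q}(\iota) \circ k$, which lies above $Q(\iota) \circ Q(\iota^{-1}) \circ g = g$ and satisfies $\lift{Q}(\iota^{-1}) \circ \lift{Q}(\iota) \circ k = k$; uniqueness is automatic because $\lift{Q}(\iota^{-1})$ is monic. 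Since $\lift{Q}(\iota^{-1})$ is indeed an iso in $\Cat{E}$ (as $\lift{Q}$ is a functor and $\iota^{-1}$ is an iso), replacing your false lemma by this fact repairs the proof; the remainder of your argument, including the uniqueness-of-Cartesian-lifts step, is sound.
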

\begin{proof}
	Since $p \circ \lift{Q} = Q$, we have that $\lift{Q}(\iota^{-1}) \colon \lift{Q}X \rightarrow \lift{Q}Y$ is above $Q(\iota^{-1})$, 
	and hence $\lift{Q}X \leq Q(\iota^{-1})^*(\lift{Q}Y)$ by the latter's universal property. 
	For the converse, consider the following composition, where the left-hand side is the Cartesian morphism:
	$$
	\xymatrix@C=1.5cm{
		Q(\iota^{-1})^*(\lift{Q}Y) \ar[r]^-{\widetilde{Q\iota^{-1}}_{\overline{Q}Y}}
			& \lift{Q}Y \ar[r]^{\lift{Q} \iota}
			& \lift{Q}X
	}\,.
	$$
	This is above the identity on $QX$: $p(\lift{Q}(\iota) \circ \widetilde{Q\iota^{-1}}_{\overline{Q}Y}) = 
	p(\lift{Q}(\iota)) \circ p(\widetilde{Q\iota^{-1}}_{\overline{Q}Y}) = Q(\iota) \circ Q(\iota^{-1}) = \id_{QX}$. Hence we get 
	$Q(\iota^{-1})^* (\lift{Q}Y) \leq \lift{Q}X$ as needed. 
\end{proof}

\begin{theorem}\label{thm:expressive}
	Suppose $(\lift{B},B)$ is a fibration map. 
	If $\delta_{\Phi}^* (\lift{Q} L\Phi) \leq \lift{B}\, \lift{Q}\Phi$, then $(L,\delta)$ is expressive. 
\end{theorem}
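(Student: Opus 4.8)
The plan is to exhibit, for an arbitrary $B$-coalgebra $(X,\gamma)$, a post-fixed point of $\gamma^* \circ \lift{B}_X$ that lies above $\theory^*(\lift{Q}\Phi)$; by the $\CLat$-structure this gives $\theory^*(\lift{Q}\Phi) \leq \nu(\gamma^* \circ \lift{B}_X)$, which is exactly expressiveness. The obvious candidate for such a post-fixed point is $R \coloneqq \theory^*(\lift{Q}\Phi)$ itself, so the heart of the argument is to verify $R \leq \gamma^* \circ \lift{B}_X(R)$.

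First I would unpack what $R \leq \gamma^* \circ \lift{B}_X(R)$ means fibrationally. Since $p$ is split, reindexing along $\gamma$ commutes with everything, so it suffices to show $\lift{B}_X(R) \geq (\text{something reindexing to }R)$; more precisely I want a morphism $R \to \lift{B}R$ in $\Cat{E}$ above $\gamma$. The key is to transport the hypothesis $\delta_\Phi^*(\lift{Q}L\Phi) \leq \lift{B}\,\lift{Q}\Phi$ along the theory map. Recall the defining square for $\theory$: $Q\alpha \circ B\theory = \delta_\Phi \circ \gamma$ \emph{viewed appropriately}, i.e. $\theory$ makes the right-hand diagram in Section~\ref{sec:logic} commute, $Q\alpha \circ (\delta_\Phi \circ B\theory) = \gamma \cdot (\dots)$. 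Concretely $\gamma$ followed by $B\theory$ and $\delta_\Phi$ equals $Q\alpha$ precomposed with $\theory$. I would apply reindexing functors to the inequality $\delta_\Phi^*(\lift{Q}L\Phi) \leq \lift{B}\,\lift{Q}\Phi$ and chase it through these arrows. Because $(\lift{B},B)$ is a fibration map, $(B\theory)^* \circ \lift{B}_{Q\Phi} = \lift{B}_X \circ \theory^*$, which lets me pull $\lift{B}\,\lift{Q}\Phi$ back along $B\theory$ and turn it into $\lift{B}_X$ applied to $\theory^*(\lift{Q}\Phi) = R$. This fibration-map equation is precisely why the hypothesis of the theorem is needed.

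The calculation I anticipate is: starting from $R = \theory^*(\lift{Q}\Phi)$, use Lemma~\ref{lm:iso-q} with the isomorphism $\alpha \colon L\Phi \to \Phi$ to rewrite $\lift{Q}\Phi$ as a reindexing of $\lift{Q}L\Phi$ along $Q(\alpha^{-1}) = Q\alpha$ (up to the direction conventions), combine with the splitting to absorb $\theory$, then use the commuting theory square to replace the composite reindexing by reindexing along $\gamma$ followed by $B\theory$ and $\delta_\Phi$. At that point the hypothesis $\delta_\Phi^*(\lift{Q}L\Phi) \leq \lift{B}\,\lift{Q}\Phi$ and the fibration-map identity $(B\theory)^*\lift{B}_{Q\Phi} = \lift{B}_X\theory^*$ convert the right-hand side into $\gamma^*\lift{B}_X(\theory^*\lift{Q}\Phi) = \gamma^*\lift{B}_X(R)$, yielding $R \leq \gamma^*\lift{B}_X(R)$ as required.

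The main obstacle I expect is bookkeeping the reindexing directions and the variance introduced by $\Cat{D}^{\op}$: $\alpha$ is an iso in $\Cat{D}$, so $Q\alpha$ points in the opposite direction to what one naively writes, and Lemma~\ref{lm:iso-q} must be applied with the correct inverse. Getting the chain of $(-)^*$'s to telescope correctly—using splitness $(g\circ f)^* = f^*\circ g^*$, the fibration-map equation only at the one spot where it applies, and monotonicity of each reindexing functor to propagate the single hypothesised inequality—is where an error would most likely creep in. Everything else (existence of $\nu$, that a post-fixed point lies below the greatest fixed point) is immediate from the $\CLat$-fibration assumption.
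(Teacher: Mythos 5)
Your plan is correct and follows the paper's own proof essentially verbatim: take $R = \theory^*(\lift{Q}\Phi)$, rewrite it via the commuting theory square, Lemma~\ref{lm:iso-q} and splitness as $\gamma^* \circ (B\theory)^* \circ \delta_{\Phi}^*(\lift{Q}L\Phi)$, then apply the hypothesised inequality and the fibration-map identity $(B\theory)^* \circ \lift{B}_{Q\Phi} = \lift{B}_X \circ \theory^*$ to obtain $R \leq \gamma^* \circ \lift{B}_X(R)$, whence $R \leq \nu(\gamma^* \circ \lift{B}_X)$ since the greatest fixed point is the greatest post-fixed point. The only cosmetic difference is that you invoke Lemma~\ref{lm:iso-q} in the form $(Q\alpha)^*(\lift{Q}L\Phi) = \lift{Q}\Phi$ (i.e.\ with $\iota = \alpha^{-1}$) rather than the paper's $Q(\alpha^{-1})^*(\lift{Q}\Phi) = \lift{Q}L\Phi$; the two are interchangeable by splitness of the fibration.
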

\begin{proof}
	Let $(X,\gamma)$ be a $B$-coalgebra, with $\theory$ the associated theory map.
	We show that $\theory^*(\lift{Q}\Phi)$ is a post-fixed point of $\gamma^* \circ \lift{B}_X$:
	\begin{align*}
		\theory^*(\lift{Q}\Phi)
		&= (Q(\alpha^{-1}) \circ \delta_{\Phi} \circ B\theory \circ \gamma)^*(\lift{Q}\Phi) \\
		&= \gamma^* \circ (B\theory)^* \circ \delta_{\Phi}^* \circ Q(\alpha^{-1})^* (\lift{Q}\Phi) \\
		&= \gamma^* \circ (B\theory)^* \circ \delta_{\Phi}^* (\lift{Q} L\Phi) 
		\tag{Lemma~\ref{lm:iso-q}} \\
		&\leq \gamma^* \circ (B\theory)^* (\lift{B}\,\lift{Q}\Phi) \tag{assumption} \\
		&= \gamma^* \circ \lift{B}_X \circ \theory^*(\lift{Q} \Phi) \tag{$(\lift{B},B)$ fibration map} 
	\end{align*}
	Expressiveness follows since $\nu(\gamma^* \circ \lift{B}_X)$ is the greatest post-fixed point. 
\end{proof}

Note that in the above theorem the reference to the initial algebra $\Phi$ could be avoided
by requiring that the inequality in the assumption holds for arbitrary objects in $\Cat{D}$.
We opted for the above formulation reflecting the fact that,
whenever one is applying the theorem to concrete instances, it is useful 
that one is able to focus on the initial $L$-algebra only.

\subsection{Adequacy and Expressiveness w.r.t.\ Behavioural Equivalence}
\label{sec:canonical}

In the setting of coalgebraic modal logic recalled in Section~\ref{sec:logic}, 
Klin~\cite{Klin07} proved that 
\begin{enumerate}
	\item the theory map $\theory$ of a coalgebra $(X,\gamma)$ factors through coalgebra morphisms from $(X,\gamma)$;
	\item if $\delta$ has monic components, then $\theory$ factors as a coalgebra morphism followed by a mono.
\end{enumerate}
The first item can be seen as adequacy w.r.t.\ behavioural equivalence (i.e., identification by a coalgebra morphism), 
and the second as expressiveness.\footnote{For weak pullback preserving functors, behavioural equivalence coincides with bisimilarity but for arbitrary set functors
the latter can be a strictly smaller relation.}

In the current section we revisit this result for $\Set$ functors, as a sanity check of Definition~\ref{def:adequate-expressive}.
To obtain the appropriate notion of adequacy and expressiveness, we need to compare 
collections of formulas for equality. Therefore, the functor $\overline{Q}$ in Definition~\ref{def:adequate-expressive}
will be instantiated with $\overline{Q}X = (QX, \Delta_{QX})$ where $\Delta_{QX}$ denotes the diagonal. Then, for a coalgebra $(X,\gamma)$,
$\theory^*(\lift{Q}\Phi)$ is the set of all pairs of states $(x,y)$ such that $\theory(x) = \theory(y)$. Adequacy
then means that for every coalgebra $(X,\gamma)$, behavioural equivalence is contained in $\theory^*(\lift{Q}\Phi)$, i.e.,
if $x$ is behaviourally equivalent to $y$ then $\theory(x) = \theory(y)$. Expressiveness is the converse implication. 

We start with an abstract result, where the functor $\lift{Q}$ assigns the equality relation (diagonal); thus this is specifically
about capturing (behavioural) equivalence logically. 
To state and prove it, let $\Eq \colon \Set \rightarrow \Rel$
be the functor given by $\Eq(X) = \Delta_X$. 
This functor has a left adjoint $\Quot \colon \Rel \rightarrow \Set$, which maps
a relation $R \subseteq X \times X$ to the quotient of $X$ by the least equivalence relation containing $R$ (cf.~\cite{HJ98}). 
This can be generalised to the notion of fibration with quotients, see~\cite{Jacobs:fib}, but we stick to $\Set$ here.

\begin{proposition}\label{prop:equality}
	Consider the relation fibration $p \colon \Rel \rightarrow \Set$,
	let $B \colon \Set \rightarrow \Set$ be a functor with a lifting $\lift{B} \colon \Rel \rightarrow \Rel$ 
	which preserves diagonals, that is, $\lift{B} \circ \Eq= \Eq \circ B$, and such that 
	$(\lift{B},B)$ is a fibration map. 
	
	Let $P \dashv Q \colon \Set \leftrightarrows \Cat{D}^{\op}$ for some category $\Cat{D}$, 
	$L \colon \Cat{D} \rightarrow \Cat{D}$ a functor with an initial algebra
	and $\delta \colon BQ \Rightarrow QL$. 
	Then 
	\begin{enumerate}
		\item $(L,\delta)$ is adequate w.r.t.\ $\lift{B}$;
		\item if $\delta$ is componentwise injective, then $(L,\delta)$ is expressive w.r.t.\ $\lift{B}$,
	\end{enumerate}
	via $\overline{Q} = \Eq \circ Q$.
\end{proposition}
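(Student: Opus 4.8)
**The plan is to derive Proposition~\ref{prop:equality} as a direct instance of the two general results, Theorem~\ref{thm:adequate} and Theorem~\ref{thm:expressive}, by instantiating $\overline{Q} = \Eq \circ Q$ and verifying that the hypotheses of each theorem specialise to the data at hand.** First I would record that $\overline{Q} = \Eq \circ Q$ satisfies $p \circ \overline{Q} = Q$, since $p \circ \Eq = \id$ (the diagonal $\Delta_{QX}$ lives above $QX$); this makes $\overline{Q}$ a legitimate choice for Definition~\ref{def:adequate-expressive}, so that adequacy and expressiveness via this $\overline{Q}$ are exactly the statements to be proved.

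For adequacy, the plan is to check the two hypotheses of Theorem~\ref{thm:adequate}. Hypothesis (2), that $\overline{Q}$ has a left adjoint, follows because $\Eq$ has the left adjoint $\Quot$ recalled just before the statement, and $Q$ has the left adjoint $P$; composing adjunctions, $\Quot \circ \lift{(\,)} $ (the lifted reindexing) supplies a left adjoint to $\Eq \circ Q$. Hypothesis (1) asks that $\lift{B}\,\overline{Q}X \leq \delta_X^*(\overline{Q}LX)$ for every $X$. Here I would unfold $\overline{Q}X = \Eq(QX) = \Delta_{QX}$ and use the assumed diagonal-preservation $\lift{B}\circ \Eq = \Eq \circ B$, which gives $\lift{B}\,\overline{Q}X = \lift{B}(\Delta_{QX}) = \Delta_{B QX}$. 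On the other side, $\delta_X^*(\overline{Q}LX) = \delta_X^*(\Delta_{QLX}) = (\delta_X \times \delta_X)^{-1}(\Delta_{QLX})$, which is precisely the kernel relation of $\delta_X$. Since the diagonal $\Delta_{BQX}$ is always contained in the kernel of any function out of $BQX$, the required inequality $\Delta_{BQX} \leq \ker(\delta_X)$ holds with no further assumption. Thus Theorem~\ref{thm:adequate} applies and yields adequacy.

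For expressiveness, the plan is to invoke Theorem~\ref{thm:expressive}, whose running hypothesis that $(\lift{B},B)$ is a fibration map is assumed in the statement. It remains to verify the single inequality $\delta_\Phi^*(\overline{Q}L\Phi) \leq \lift{B}\,\overline{Q}\Phi$ on the component $\Phi$. By the same unfolding as above, the right-hand side is $\Delta_{BQ\Phi}$, and the left-hand side is $\ker(\delta_\Phi) = (\delta_\Phi \times \delta_\Phi)^{-1}(\Delta_{QL\Phi})$. The inequality $\ker(\delta_\Phi) \subseteq \Delta_{BQ\Phi}$ holds exactly when $\delta_\Phi$ is injective, which is the standing assumption in item~(2). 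Hence the hypothesis of Theorem~\ref{thm:expressive} is met and expressiveness follows.

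\textbf{The main obstacle I expect is bookkeeping rather than conceptual:} being careful that $\Eq$ genuinely sends the diagonal to the reindexing one needs, i.e.\ that $\Eq(QX)$ and $\delta_X^*$ interact as $(\delta_X\times\delta_X)^{-1}$ of a diagonal, so that the two sides of each inequality reduce cleanly to ``diagonal versus kernel.'' The only real content is the observation that $\lift{B}\circ\Eq = \Eq\circ B$ lets the lifting of a diagonal \emph{stay} a diagonal, converting the otherwise opaque functor $\lift{B}$ into the concrete relation $\Delta_{BQX}$; once that identification is in place, both inequalities become the elementary facts that a diagonal sits inside any kernel, and that a kernel collapses to the diagonal precisely under injectivity. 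I would therefore present the proof as two short applications of the general theorems, with the diagonal-preservation identity doing the essential work in each.
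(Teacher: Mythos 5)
Your proposal is correct and follows essentially the same route as the paper's proof: both reduce the statement to Theorem~\ref{thm:adequate} and Theorem~\ref{thm:expressive}, use diagonal preservation to identify $\lift{B}\,\overline{Q}X$ with $\Delta_{BQX}$ and $\delta_X^*(\overline{Q}LX)$ with the kernel $(\delta_X \times \delta_X)^{-1}(\Delta_{QLX})$, and conclude adequacy from ``the diagonal sits inside any kernel'' and expressiveness from ``the kernel collapses to the diagonal exactly when $\delta_\Phi$ is injective.'' The one slip is your formula for the left adjoint of $\overline{Q} = \Eq \circ Q$: composing the adjunctions $\Quot \dashv \Eq$ and $P \dashv Q$ gives $P \circ \Quot$ as the left adjoint (as the paper states), not ``$\Quot$ composed with a lifted reindexing,'' though the reasoning you give for its existence---composition of the two adjunctions---is exactly the right one.
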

\begin{proof}
	For adequacy, we use Theorem~\ref{thm:adequate}. 
	By composition of adjoints, $P \circ \Quot$ is a left adjoint to $\Eq \circ Q$. 
	It will be useful to simplify $\lift{B} \circ \Eq \circ Q X$ and $\delta_X^* (\Eq \circ Q \circ L X)$:
	\begin{align}
		\lift{B} \circ \Eq \circ Q X & = \Delta_{BQX}  
		\,,  \\
		\delta_X^* (\Eq \circ Q \circ L X) &= (\delta_X \times \delta_X)^{-1}(\Delta_{QLX}) \,,
	\end{align}
	using that 
	$\lift{B}$ preserves diagonals in the first equality.
	The remaining hypothesis of Theorem~\ref{thm:adequate}
	is that $\lift{B} \circ \Eq \circ QX \leq \delta_X^* (\Eq \circ Q \circ L X)$ for all $X$, i.e.,
	$\Delta_{BQX} \subseteq (\delta_X \times \delta_X)^{-1}(\Delta_{QLX})$, which is trivial. 
	
	For expressiveness, we use Theorem~\ref{thm:expressive}. 
	By assumption, $(\lift{B},B)$ is a fibration map.
	We need to prove that $\delta_{\Phi}^* (\Eq \circ Q \circ L \Phi) \leq \lift{B} \circ \Eq \circ Q\Phi$, 
	which amounts to the inclusion
	$$
	  (\delta_\Phi \times \delta_\Phi)^{-1}(\Delta_{QL\Phi}) \subseteq \Delta_{BQ\Phi}
	$$
	But this is equivalent to injectivity of $\delta_\Phi$. 
\end{proof}
The canonical lifting $\Rel(B)$ of a $\Set$ functor $B$ always preserves diagonals, and
if $B$ preserves weak pullbacks, then it is a fibration map. Thus, we obtain
expressiveness w.r.t.\ bisimilarity for weak pullback preserving functors, if $\delta$
has injective components. 

In order to be able to cover a larger class of functors, and move to behavioural equivalence, 
we use the notion of \emph{lax extension} preserving diagonals. 

\begin{defiC}[\cite{mave15:laxe}]\label{def:lax}
     Let $B \colon \Set \to \Set$ be a functor. An operation $\llift{B}$ associating with each relation $R \subseteq X \times Y$
     a relation $\llift{B} \subseteq BX \times BY$ is called a {\em lax extension of $B$ preserving diagonals} if for all relations $R$, $S$,
     all functions $f$ an all sets $X$ we have 
     \begin{enumerate}
	  \item $\llift{B} (R^\bullet) = (\llift{B} R)^\bullet$ where $(\_)^\bullet$ denotes the converse relation,
	  \item $R \subseteq S$ implies $\llift{B}R \subseteq \llift{B}S$ (monotonicity),
	  \item $\llift{B}R;\llift{B}S \subseteq \llift{B}(R;S)$ (lax preservation of relational composition),
	  \item $\llift{B} \Gr(f) = \Gr (Bf)$ and, in particular, $\llift{B} \Delta_X = \Delta_{B X}$ (preservation of diagonals).
     \end{enumerate}
     Here for a function $f \colon X \to Y$  we denote by $\Gr(f)$ its graph relation:
     \[ \Gr(f) = \{ (x, f(x)) \mid x \in X \} \subseteq  X \times Y \] 
\end{defiC}
The following key fact is an immediate consequence of the results in~\cite{mave15:laxe}.
\begin{fact}
    Let $B$ be a set functor and let $\llift{B}$ be a lax extension of $B$ preserving diagonals. 
    Then on any coalgebra $(X,\gamma)$ we have that behavioural equivalence is 
    equal to $\nu (\gamma^* \circ \llift{B}_X)$.
\end{fact}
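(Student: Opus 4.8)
The plan is to prove the two inclusions between $\approx$, behavioural equivalence on $(X,\gamma)$, and the coinductive predicate $R^* \coloneqq \nu(\gamma^* \circ \llift{B}_X)$ separately. Throughout I work in the relation fibration, so that $\gamma^* \circ \llift{B}_X$ sends $R \subseteq X \times X$ to $(\gamma\times\gamma)^{-1}(\llift{B}R)$, and I use that $\approx$ is the union of the kernels $\ker h = \{(x,y) \mid h(x)=h(y)\}$ of all coalgebra homomorphisms $h$ out of $(X,\gamma)$. The whole argument rests on the relational calculus of Definition~\ref{def:lax} together with one standard consequence of those axioms: along graphs the lax law becomes strict, i.e.\ $\llift{B}(\Gr(f) ; R) = \Gr(Bf) ; \llift{B}R$ and dually $\llift{B}(R ; \Gr(f)^\bullet) = \llift{B}R ; \Gr(Bf)^\bullet$. (The inclusion $\supseteq$ is axiom~(3) with~(4); the reverse follows by composing with $\Gr(Bf)^\bullet$, using $\Gr(f)^\bullet ; \Gr(f) \subseteq \Delta$ and monotonicity.)

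For the inclusion $\approx \subseteq R^*$ I would show that each $\ker h$ is an invariant, i.e.\ a post-fixed point of $\gamma^* \circ \llift{B}_X$; since $R^*$ is the greatest such, this gives $\ker h \subseteq R^*$ for every $h$, hence $\approx \subseteq R^*$. Fix a homomorphism $h\colon (X,\gamma)\to(Z,\zeta)$ with $h(x)=h(y)$. Being a homomorphism gives $Bh(\gamma x)=\zeta(h x)=\zeta(h y)=Bh(\gamma y)$, so $(\gamma x,\gamma y)\in\ker(Bh)$. Writing $\ker h = \Gr(h) ; \Gr(h)^\bullet$ and using graph preservation~(4), converse preservation~(1) and lax composition~(3), one computes $\ker(Bh)=\Gr(Bh);\Gr(Bh)^\bullet = \llift{B}(\Gr h);\llift{B}(\Gr(h)^\bullet)\subseteq \llift{B}(\Gr h;\Gr(h)^\bullet)=\llift{B}(\ker h)$. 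Thus $(\gamma x,\gamma y)\in\llift{B}(\ker h)$, which is exactly $\ker h \subseteq \gamma^*\circ\llift{B}_X(\ker h)$.

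For the reverse inclusion $R^* \subseteq \approx$ I would exhibit a single homomorphism identifying every $R^*$-related pair. Let $E$ be the least equivalence relation containing $R^*$, so $\Quot(R^*)=X/E$ with quotient map $q\colon X\to X/E$ and $\ker q = E = \Gr(q);\Gr(q)^\bullet$, and compute via the strict functional laws that $\llift{B}(E)=\Gr(Bq);\Gr(Bq)^\bullet=\ker(Bq)$. The key verification is that $x\,E\,x'$ implies $(\gamma x,\gamma x')\in\llift{B}(E)$. For this I first observe that $\llift{B}(E)$ is itself an equivalence relation on $BX$: reflexive since $\Delta_{BX}=\llift{B}\Delta_X\subseteq\llift{B}(E)$, symmetric since $\llift{B}(E)=\llift{B}(E^\bullet)=(\llift{B}E)^\bullet$, and transitive since $\llift{B}E;\llift{B}E\subseteq\llift{B}(E;E)=\llift{B}(E)$. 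Because $R^*$ is a post-fixed point, $x\,R^*\,x'$ gives $(\gamma x,\gamma x')\in\llift{B}(R^*)\subseteq\llift{B}(E)$; unwinding a generating chain for $E$ and appealing to reflexivity, symmetry and transitivity of $\llift{B}(E)$ then propagates this to all of $E$. Hence $x\,E\,x'$ implies $Bq(\gamma x)=Bq(\gamma x')$, so $Bq\circ\gamma$ is constant on $E$-classes and factors as $\bar\gamma\circ q$ for a unique $\bar\gamma\colon X/E\to B(X/E)$, making $q$ a homomorphism $(X,\gamma)\to(X/E,\bar\gamma)$. As $R^*\subseteq E=\ker q$, every $(x,y)\in R^*$ satisfies $q(x)=q(y)$, witnessing $x\approx y$.

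I expect the main obstacle to be this second direction, and within it the passage from \emph{$R^*$ is a post-fixed point} to \emph{$E$ maps into $\llift{B}(E)$}: since a lax extension preserves composition only laxly, the post-fixed-point property cannot be transported directly along the equivalence closure. What resolves this is precisely the diagonal-preservation axiom~(4), which via the strict functional laws forces $\llift{B}(E)=\ker(Bq)$ and makes $\llift{B}(E)$ an equivalence relation, so that the closure operations defining $E$ are matched step by step on the $BX$-side. A secondary point is the strict functional law itself; I would either derive it inline from axioms~(1),~(3),~(4) as indicated, or invoke it from~\cite{mave15:laxe}, consistent with the Fact being an immediate consequence of that work.
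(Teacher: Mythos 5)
Your proof is correct, and it does genuinely more than the paper does: the paper's own proof is a two-line sketch that reduces the statement---via monotonicity of $\llift{B}$ (axiom (2)) and the Knaster--Tarski theorem on the complete lattice $\Rel_X$---to the claim that behavioural equivalence is the greatest post-fixed point of $\gamma^* \circ \llift{B}_X$, and then cites~\cite[Prop.~9]{mave15:laxe} for that claim. You share the same Knaster--Tarski reduction (implicitly, when you conclude $\ker h \subseteq R^*$ from the post-fixed-point property of $\ker h$), but you then re-derive from the axioms exactly the content the paper outsources: the strict fusion laws $\llift{B}(\Gr(f);R)=\Gr(Bf);\llift{B}R$ (your derivation from (1), (3), (4) together with $\Gr(f)^\bullet;\Gr(f)\subseteq\Delta$ and $\Delta\subseteq\Gr(Bf);\Gr(Bf)^\bullet$ is sound); the observation that kernels of coalgebra homomorphisms are invariants, which gives ${\approx}\subseteq R^*$; and, for the converse, the quotient construction, where diagonal preservation is what forces $\llift{B}(E)=\ker(Bq)$ to be reflexive, symmetric and transitive, so that the post-fixed-point property of $R^*$ propagates along the equivalence closure $E$, letting $Bq\circ\gamma$ factor through the surjection $q$ and making $q$ a homomorphism with $R^*\subseteq\ker q$. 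All individual steps check out, including the identity $\ker h=\Gr(h);\Gr(h)^\bullet$ and the chain argument over generating sequences of $E$. What the paper's route buys is brevity and an explicit pointer into the literature; what yours buys is self-containedness and a precise accounting of where each axiom is used---in particular, that the direction ${\approx}\subseteq\nu(\gamma^*\circ\llift{B}_X)$ needs only the axioms applied to graphs, while preservation of diagonals does the real work in the converse direction, which is precisely the hypothesis the Fact singles out.
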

\begin{proof}[Proof (Sketch)]
    Monotonicity of $\llift{B}$ implies that $\gamma^* \circ \llift{B}_X$ is a monotone operator.
    The result now follows from $\Rel_X$ being a complete lattice and behavioural equivalence being the greatest
    post-fixed point of $\gamma^* \circ \llift{B}_X$. The latter is a consequence of~\cite[Prop.~9]{mave15:laxe}.
\end{proof}

In particular, for a weak pullback preserving functor $B$, the canonical lifting $\Rel(B)$ is a lax extension preserving diagonals. 
But the results in~\cite{mave15:laxe} also show that non-weak pullback preserving set functors have such lax extensions. In fact, 
any finitary functor for which an expressive logic with ``monotone'' modalities exist, has a suitable lifting. 
Examples include the so-called $(\_)^3_2$-functor , the functor $\Pow_n$ that maps a set $X$ to the collection $\Pow_n X$ of
subsets of $X$ with less than $n$ elements and the so-called monotone neighbourhood functor (cf.~Example~7 in \cite{mave15:laxe}).
The following proposition establishes that the lax lifting $\llift{B}$ fits into the fibrational framework of our paper,
and that Proposition~\ref{prop:equality} applies. 

\begin{proposition}
    Let $B \colon \Set \to \Set$ be a functor and let $\llift{B}$ be a lax lifting of $B$ that preserves diagonals.
    Then $\llift{B} \colon\Rel \to \Rel$ is a lifting of $B$ along the relation fibration $p \colon\Rel \to \Set$.
    In addition to that, $(\llift{B},B)$ is a fibration map.
\end{proposition}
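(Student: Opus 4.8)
The plan is to establish both claims directly from the four defining properties of a lax extension preserving diagonals. The object part is immediate: if $R \subseteq X \times X$ then property (4) already forces $\llift{B}R \subseteq BX \times BX$, so $\llift{B}$ sends an object above $X$ to an object above $BX$, which is exactly $p \circ \llift{B} = B \circ p$ on objects. Since the relation fibration is faithful, on a morphism $f \colon (X,R) \to (Y,S)$ the lifting must act by the underlying map $Bf$, so the only thing to verify is that $Bf$ is again a $\Rel$-morphism $(BX,\llift{B}R) \to (BY,\llift{B}S)$; functoriality (preservation of identities and composites) is then inherited from $B$. Thus everything reduces to controlling how $\llift{B}$ interacts with reindexing, i.e.\ with inverse images.

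The technical heart is the commutation lemma stating that, for every function $f \colon X \to Y$ and every relation $R$,
\[ \llift{B}(\Gr(f); R) = \Gr(Bf); \llift{B}R \qquad\text{and}\qquad \llift{B}(R; \Gr(f)^\bullet) = \llift{B}R; \Gr(Bf)^\bullet. \]
First I would prove the left identity. The inclusion $\supseteq$ is immediate from $\llift{B}\Gr(f) = \Gr(Bf)$ (property (4)) and lax preservation of composition (property (3)). The reverse inclusion is the \emph{main obstacle}, precisely because property (3) only yields inclusions in the lax direction and so cannot be used to split a composite. The trick is to precompose with $\Gr(f)^\bullet$ and compute
\[ \Gr(Bf)^\bullet; \llift{B}(\Gr(f); R) = \llift{B}\Gr(f)^\bullet; \llift{B}(\Gr(f); R) \subseteq \llift{B}(\Gr(f)^\bullet; \Gr(f); R) \subseteq \llift{B}R, \]
where the first equality uses properties (1) and (4), the first inclusion uses property (3), and the second uses $\Gr(f)^\bullet; \Gr(f) = \Delta_{\mathrm{im}\, f} \subseteq \Delta_Y$ together with monotonicity, property (2). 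Chasing a single pair $(u,w) \in \llift{B}(\Gr(f); R)$ through this inequality, with witness $Bf(u)$, yields $(u,w) \in \Gr(Bf); \llift{B}R$ as required. The right identity then follows from the left one by applying the converse law (property (1)) to $(\Gr(f); R)^\bullet = R^\bullet; \Gr(f)^\bullet$.

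With the lemma in hand, both claims reduce to one computation. In the relation fibration reindexing is inverse image, $f^*(S) = (f \times f)^{-1}(S)$, and $(f \times f)^{-1}(S) = \Gr(f); S; \Gr(f)^\bullet$; hence two applications of the lemma give
\[ \llift{B}\bigl((f \times f)^{-1}(S)\bigr) = \Gr(Bf); \llift{B}S; \Gr(Bf)^\bullet = (Bf \times Bf)^{-1}(\llift{B}S), \]
which is literally the fibration-map equation $\llift{B}_X \circ f^* = (Bf)^* \circ \llift{B}_Y$, settling the second claim. For the first claim only the inclusion $\subseteq$ is needed: if $f \colon (X,R) \to (Y,S)$ is a $\Rel$-morphism, i.e.\ $R \subseteq (f \times f)^{-1}(S)$, then monotonicity of $\llift{B}$ together with the displayed equality gives $\llift{B}R \subseteq (Bf \times Bf)^{-1}(\llift{B}S)$, so $Bf$ is indeed a $\Rel$-morphism. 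Hence $\llift{B}$ is a lifting of $B$ along $p \colon \Rel \to \Set$, and $(\llift{B},B)$ is a fibration map.
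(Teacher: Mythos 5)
Your proof is correct, and its skeleton matches the paper's: both arguments force the morphism action to be $\llift{B}f = Bf$ and then reduce everything to the single commutation identity $\llift{B}\bigl((f \times f)^{-1}(S)\bigr) = (Bf \times Bf)^{-1}(\llift{B}S)$, from which the fibration-map equation is literal and well-definedness of $Bf$ as a $\Rel$-morphism follows by monotonicity of $\llift{B}$. The genuine difference is how that identity is obtained: the paper cites it as a known property of lax extensions (Remark~4 of~\cite{mave15:laxe}), whereas you prove it from the four axioms. Your derivation---decomposing the inverse image as $\Gr(f); S; \Gr(f)^\bullet$, getting the lax inclusion from properties (3) and (4), and recovering the non-lax direction by precomposing with $\Gr(Bf)^\bullet$ and using $\Gr(f)^\bullet;\Gr(f) \subseteq \Delta_Y$ together with monotonicity---is sound; in particular the element chase producing $(u,w) \in \Gr(Bf);\llift{B}R$ from $(Bf(u),w) \in \llift{B}R$ goes through, as does the derivation of the dual identity $\llift{B}(R;\Gr(f)^\bullet) = \llift{B}R;\Gr(Bf)^\bullet$ via the converse law (1). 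Note that your lemma is applied to heterogeneous relations (e.g.\ $S;\Gr(f)^\bullet \subseteq Y \times X$), which is licensed because Definition~\ref{def:lax} defines lax extensions on arbitrary $R \subseteq X \times Y$. What your version buys is self-containedness: the proposition no longer depends on the external reference, and it makes visible that preservation of \emph{graphs}, not merely of diagonals, is what drives the commutation. What the paper's version buys is brevity, delegating a standard fact of the lax-extension literature to its source.
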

\begin{proof}
    In order to turn $\llift{B}$ into a functor $\Rel \to \Rel$ we define
    $\llift{B}(f) \coloneqq Bf$ - we will verify later in the proof that
    the functor is well-defined. 
    Now note that for all relations $R \subseteq X \times X$ and functions $f\colon Y \to X$ we have
     \[ \llift{B}_Y(f^*(R))) = \llift{B}_Y((f \times f)^{-1}(R)) \stackrel{\mbox{\tiny (*)}}{=}  (Bf \times Bf)^{-1}(\llift{B}_X(R)) = (Bf)^*(\llift{B}_X(R)) \]
     where (*) is a well-known property of lax extensions (cf.~e.g. Remark~4 in~\cite{mave15:laxe}) and the other equalities follow
     from the definition of reindexing. 
     This implies that $(\llift{B},B)$ is a fibration map once we establish that $\llift{B}$ is a lifting of $B$
     along $p \colon \Rel \to \Set$.
   For the latter we only need to verify that $\llift{B}$ is a functor on $\Rel$.
   To avoid confusion, please note that~\cite{mave15:laxe} uses a different category $\Rel$ where
   the relations are morphisms whereas in our case the relations are objects.
   In order to see that $\llift{B}$ is well-defined on
   $\Rel$-morphisms, consider relations $R \subseteq X \times X$, $S\subseteq Y \times Y$ 
   and a function $f \colon R \to S \in \Rel$.
   We need to show that $\llift{B}(f) \colon \llift{B}(R) \to   \llift{B}(S)$. As  $\llift{B}(f) = Bf$, 
   we need to prove that $B f$ is a $\Rel$-morphism
   from $\llift{B}(R)$ to $\llift{B}(S)$. Consider an arbitrary pair $(t_1,t_2) \in \llift{B}(R)$.
   We have
   \[ \llift{B}(R)  \subseteq \llift{B}((f \times f)^{-1}(S)) = (Bf \times Bf)^{-1}(\llift{B}(S)) \]
   where the inclusion is a consequence of $f$ being a $\Rel$-morphism and monotonicity of $\llift{B}$, and
   the equality is an instance of (*). Therefore $(t_1,t_2) \in \llift{B}(R)$ implies
   $(t_1,t_2) \in  (Bf \times Bf)^{-1}[\llift{B}(S)]$ which is in turn equivalent to
   $(Bf(t_1),Bf(t_2))\in \llift{B}(S)$. This shows that $\llift{B}f \colon \llift{B}(R) \to \llift{B}(S)$ as required.
   Functoriality now follows easily from the fact that $\llift{B}f= Bf$ for all functions $f$.   
\end{proof}

\section{Examples}
\label{sec:examples}

In this section we instantiate the abstract framework to three concrete examples:
a behavioural metric on deterministic automata (Section~\ref{sec:sdw-distance}),
captured by $[0,1]$-valued tests; a unary predicate on transition systems (Section~\ref{sec:div});
and similarity of transition systems, captured by a logic with conjunction 
and diamond modalities (Section~\ref{sec:simulation}).

\subsection{Shortest distinguishing word distance}
\label{sec:sdw-distance}

We study a simple behavioural distance on deterministic automata: for two states $x,y$
and a fixed constant $c$ with $0 < c < 1$, the distance is given by $c^n$, where
$n$ is the length of the smallest word accepted from one state but not the other. 
Following~\cite{Bonchi0P18}, we refer to this distance as the \emph{shortest distinguishing word distance},
and, for an automaton with state space $X$, denote it by $\dsdw \colon X\times X \rightarrow [0,1]$. 

Formally, fix a finite alphabet $A$, and consider the functor $B \colon \Set \rightarrow \Set$,
$BX = 2 \times X^A$ of deterministic automata. We make use of the fibration $p \colon \Rel_{[0,1]} \rightarrow \Set$,
and define the lifting $\overline{B} \colon \Rel_{[0,1]} \rightarrow \Rel_{[0,1]}$ by
$$
\overline{B}(X,d) = \left( BX, ((o_1, t_1), (o_2, t_2)) \mapsto 
\begin{cases}
	1 & \text{ if } o_1 \neq o_2 \\
	c \cdot \max_{a \in A} \{ d(t_1(a), t_2(a))\} & \text{ otherwise}
\end{cases}
\right)
$$
The shortest distinguishing word distance $\dsdw$ on a deterministic automaton $\gamma \colon X \rightarrow 2 \times X^A$
is the greatest fixed point $\nu(\gamma^* \circ \lift{B}_X)$ (recall that in $\Rel_{[0,1]}$ we use the reverse order on $[0,1]$,
see Example~\ref{ex:relv-fib}). 

For an associated logic, we simply use words over $A$ as formulas, and define a satisfaction relation
which is weighted in $[0,1]$. Consider the following setting.
$$
\xymatrix@C=0.5cm{
\Set\ar@/^2ex/[rr]^-{P={[0,1]}^-} \save !L(.5) \ar@(dl,ul)^{B = 2 \times \Id^A} \restore 
& \bot 
& \Set^{\op} \ar@/^2ex/[ll]^-{Q=[0,1]^-} \save !R(.5) \ar@(ur,dr)^{L = A \times \Id + 1} \restore 
& & &\mbox{\qquad with} &
B([0,1]^-) \ar@{=>}[r]^-{\delta} & [0,1]^{L-}
}
$$
The initial algebra of $L$ is the set of words $A^*$. The natural transformation
$\delta$ is given by $\delta_X \colon 2 \times ([0,1]^X)^A \rightarrow [0,1]^{A \times X + 1}$, 
$$
\delta_X(o,t)(u) = 
\begin{cases}
	o & \text{ if } u= \ast \in 1 \\
	c \cdot t(a)(x) & \text{ if } u = (a,x) \in A \times X
\end{cases}
$$
which is a quantitative, discounted version of the Boolean-valued logic in Example~\ref{ex:logic-dfa}.
The logic $(L,\delta)$ defines, for any deterministic automaton $\langle o, t \rangle \colon X \rightarrow 2 \times X^A$, 
a theory map $\theory \colon X \rightarrow [0,1]^{A^*}$, given by 
\begin{align*}
	\theory(x)(\varepsilon) &= o(x) 
	\qquad \text{ and } \qquad \theory(x)(aw) = c \cdot \theory(t(x)(a))(w)\,,
\end{align*}
for all $x \in X$, $a \in A$, $w \in A^*$. 

We characterise the shortest distinguishing word distance with the above logic, by instantiating
and proving adequacy and expressiveness. Define
$$
\lift{Q} \colon \Set^\op \rightarrow \Rel_{[0,1]}\,, \qquad 
\lift{Q}(X) = \left( [0,1]^X, (\phi_1, \phi_2) \mapsto \sup_{x \in X} |\phi_1(x) - \phi_2(x)| \right) \,.
$$
Technically, this functor is given by mapping a set $X$ to the $X$-fold product of the 
object $\lift{[0,1]} = ([0,1], (r,s) \mapsto |r-s|)$. It follows immediately that $\lift{Q}$ has a left
adjoint, mapping $(X,d)$ to $\Hom((X,d),\lift{[0,1]})$, see Equation~\ref{eq:adjunction-simple}. 
This will be useful for proving adequacy below. 

The functor $\lift{Q}$ yields a `logical distance' between states $x,y \in X$, given by $\theory^*(\lift{Q} \Phi)$.
We abbreviate it by $\dlog \colon X \times X \rightarrow [0,1]$. Explicitly, we have  
\begin{equation}\label{eq:dlog-sdw}
\dlog(x,y) = \sup_{w \in A^*} |\theory(x)(w) - \theory(y)(w)| \,.
\end{equation}
Instantiating Definition~\ref{def:adequate-expressive}, the logic $(L,\delta)$ is 
\begin{itemize}
	\item \emph{adequate} if $\dsdw \geq \dlog$, and
	\item \emph{expressive} if $\dsdw \leq \dlog$. 
\end{itemize}
Here $\leq$ is the usual order on $[0,1]$, with $0$ the least element (the order in $\Rel_{[0,1]}$ is reversed).

To prove adequacy and expressiveness, we use Theorem~\ref{thm:adequate} and Theorem~\ref{thm:expressive}. 
The functor $\overline{Q}$ has a left adjoint,
as explained above. Further, $(\overline{B}, B)$ is a fibration map~\cite{Bonchi0P18}.
We prove the remaining hypotheses of both propositions by showing the equality 
$\overline{B} \, \overline{Q} X = \delta_X^*(\overline{Q} L X)$ for every object $X$ in $\Cat{D}$. 
To this end, we compute (suppressing the carrier set $BQX$):
$$
\begin{array}{rcl}
		& & \delta_X^*(\lift{Q} L X) \\
		&=& \left( ((o_1, t_1), (o_2, t_2)) \mapsto \sup_{u \in A\times X + 1} |\delta_X(o_1,t_1)(u) - \delta_X(o_2,t_2)(u)| \right) \\
		&=& \left( ((o_1, t_1), (o_2, t_2)) \mapsto 
		\begin{cases}
			1 & \text{ if } o_1\neq o_2 \\
			\sup_{u \in A\times X} |\delta_X(o_1,t_1)(u) - \delta_X(o_2,t_2)(u)|) & \text{ otherwise} 
		\end{cases}
		\right) \\
		&=& \left( ((o_1, t_1), (o_2, t_2)) \mapsto 
		\begin{cases}
			1 & \text{ if } o_1\neq o_2 \\
			\sup_{(a,x) \in A\times X} |c \cdot t_1(a)(x) - c \cdot t_2(a)(x)|) & \text{ otherwise} 
		\end{cases}
		\right) \\
		&=& \left( ((o_1, t_1), (o_2, t_2)) \mapsto 
		\begin{cases}
			1 & \text{ if } o_1\neq o_2 \\
			c \cdot \max_{a \in A}  \sup_{x \in X} |t_1(a)(x) - t_2(a)(x)|) & \text{ otherwise} 
		\end{cases}
		\right) \\
		&=& \lift{B} \, \lift{Q} X \\
\end{array}
$$
Hence, the logic $(L,\delta)$ is adequate and expressive w.r.t.\ the shortest distinguishing word 
distance, i.e., $\dsdw$ coincides with the logical distance $\dlog$ given in Equation~\ref{eq:dlog-sdw}.

\subsection{Divergence of processes}
\label{sec:div}

A state of an LTS is said to be \emph{diverging} if there exists an infinite path of $\tau$-transitions
starting at that state. To model this predicate, 
let $B \colon \Set \rightarrow \Set$, $BX = (\Powf X)^A$, where $A$ is a set
of labels containing the symbol $\tau \in A$. 
Consider the predicate fibration $p \colon \Pred \rightarrow \Set$,
and define the lifting $\lift{B} \colon \Pred \rightarrow \Pred$ by 
$$
\lift{B}(X,\Gamma) = ((\Powf X)^A, \{t \mid \exists x \in \Gamma. \, x \in t(\tau)\}) \,.
$$
The coinductive predicate defined by $\lift{B}$ on a $B$-coalgebra $(X,\gamma)$
is the set of diverging states:
$$
\nu(\gamma^* \circ \lift{B}_X) = (X, \{x \mid x \text{ is diverging}\})\,. 
$$

Now, we want to prove in our framework of adequacy and expressiveness that $x$ is diverging iff for every $n \in \mathbb{N}$ 
there is a finite path of $\tau$-steps starting in $x$, i.e., $x \models \diam{\tau}^n \top$ for every $n$. 
The proof relies on two main observations:
\begin{itemize}
	\item if $x$ satisfies infinitely many formulas of $\diam{\tau}^n \top$, then one of its $\tau$-successors does, too;
	\item if a state $x$ satisfies $\diam{\tau}^n \top$ for some $n$, then
	  $x$ satisfies $\diam{\tau}^m \top$ for all $0 \leq m \leq n$.
\end{itemize}
Combined, one can then give a coinductive proof, showing that if the current state
satisfies all formulas of the form $\diam{\tau}^n \top$, then one of its $\tau$-successors
also satisfies all these formulas. 

We make this argument precise by casting it into the abstract framework. First, for the logic, 
we have the following setting:
$$
\xymatrix@C=0.5cm{
\Set\ar@/^2ex/[rr]^-{P = 2^-} \save !L(.5) \ar@(dl,ul)^{B = (\Powf -)^A} \restore 
& \bot 
& \Pos^{\op} \ar@/^2ex/[ll]^-{Q = \Hom(-,2)} \save !R(.5) \ar@(ur,dr)^{L = \Id_{\top}} \restore 
& & \mbox{\qquad with} &
B \Hom(-,2) \ar@{=>}[r]^-{\delta} & \Hom(L-,2)
}
$$
Here $\Pos$ is the category of posets and monotone maps, and $2=\{0,1\}$ is the poset given 
by the order $0 \leq 1$. For a poset $S$,
$\Hom(S,2)$ is then the set of \emph{upwards closed} subsets of $S$.

The functor $LS = S_\top$ is defined on a poset $S$ by adjoining a new top element $\top$, i.e.,
the carrier is $S + \{\top\}$ and $\top$ is strictly above all elements of $S$. 
The initial algebra $\Phi$ of $L$ is the set of natural numbers, representing 
the formulas of the form $\diam{\tau}^n \top$, linearly ordered, with $0$ the top element. 
The choice of $\Pos$ means that the set 
$\Hom(\Phi,2)$ used to represent the theory of a state $x \in X$
consists of upwards closed sets (so closed under lower natural numbers in the usual ordering), corresponding to the second
observation above concerning the set of formulas satisfied by $x$.

The natural transformation $\delta$ is given by
$\delta_S \colon (\Powf \Hom(S,2))^A \rightarrow \Hom(S_\top, 2)$, 
$$
\delta_S(t)(x) = 
	\begin{cases} 
		1 & \text{ if } x = \top \\
		\bigvee_{\phi \in t(\tau)} \phi(x) & \text{ otherwise}
	\end{cases}
	\,.
$$
To show that this is well-defined, suppose $x,y \in S_\top$ with $x \leq y$,
and suppose $\delta_S(t)(x) = 1$. If $x=\top$, then $y=\top$, so $\delta_S(t)(y) =1$.
Otherwise, there is $\phi \in \Hom(S,2)$ such that $\phi \in t(\tau)$ and $\phi(x) = 1$. 
Since $\phi$ is upwards closed, $\phi(y)=1$ and consequently $\delta_S(t)(y)=1$ as needed. 

Now, the theory map $\theory \colon X \rightarrow \Hom(\Phi, 2)$ is given by
$\theory(x)(n) = 1$ iff there exists a path of $\tau$-steps of length $n$ from $x$.
We define 
$$
	\lift{Q} \colon \Pos^\op \rightarrow \Pred \,, \quad \lift{Q}(S) = (\Hom(S,2), \{\phi \mid \forall x \in S. \, \phi(x) = 1\}) \,.
$$
Instantiating Definition~\ref{def:adequate-expressive}, \emph{adequacy} means that if $x$ is diverging, 
then $x \models \diam{\tau}^n \top$ for all $n$; and expressiveness is the converse. 

We start with proving adequacy, using Theorem~\ref{thm:adequate}. The left adjoint 
$\overline{P}$ is given by
$\overline{P}(X,\Gamma) = (\Hom((X,\Gamma), (2, \{1\})), \{(\phi_1, \phi_2) \mid \forall x \in X. \, \phi_1(x) \leq \phi_2(x)\})$. 
It remains to prove that $\overline{B} \, \overline{Q}(S) \leq  \delta_S^*(\overline{Q} L S)$ for
all $S$. To this end, we observe $BQS = (\Powf(\Hom(S,2)))^A$ and compute:
\begin{align*}
\delta_S^*(\overline{Q} L S) 
	&= \{t \mid \delta_S(t) \in \overline{Q} L S\} \\
	&= \{t \mid \forall x \in S_\top. \, \delta_S(t)(x) = 1\} \\
	&= \{t \mid \forall x \in S. \, \delta_S(t)(x) = 1 \} \\
	&= \{t \mid \forall x \in S. \, \bigvee_{\phi \in t(\tau)} \phi(x) = 1\} 
\end{align*}
and $\overline{B} \, \overline{Q}(S) = \{t \mid (\lambda x. 1) \in t(\tau)\}$.
The needed inclusion is now trivial. 

For expressiveness we have to prove the reverse inclusion with $S = \Phi$, i.e., 
$$
\{t \in  (\Powf(\Hom(\Phi,2)))^A \mid \forall x \in \Phi. \bigvee_{\phi \in t(\tau)} \phi(x) = 1\}
\subseteq \{t \in (\Powf(\Hom(\Phi,2)))^A \mid (\lambda x. 1) \in t(\tau)\}.
$$
To this end, let $t$ be an element of the left-hand side, and suppose towards
a contradiction that for all $\phi$ with $\phi \in t(\tau)$, 
there is an element $x_\phi \in \Phi$ with $\phi(x_\phi) = 0$. 
Choosing an assignment $\phi \mapsto x_\phi$ of such elements, 
we get a \emph{finite} set $\{x_\phi \mid \phi \in t(\tau)\}$. 
Let $x_\phi$ be the smallest element of that set (w.r.t.\ the order of $\Phi$, i.e., the largest natural number),
and let $\psi \in t(\tau)$ 
be such that $\psi(x_\phi) = 1$; such a $\psi$ exists by assumption on $t$. 
However, since $x_\phi \leq x_\psi$ and $\psi$ is upwards closed
we have $\psi(x_\psi)=1$, which gives a contradiction. 
Hence, the inclusion holds as required. The lifting $(\overline{B},B)$ is a fibration map.
We thus conclude from~Theorem~\ref{thm:expressive} that the logic is expressive.

\subsection{Simulation of processes}
\label{sec:simulation}

Let $A$ be a set, and define the functor $B \colon \Set \rightarrow \Set$ by $BX = (\Powf X)^A$.
Let $\gamma \colon X \rightarrow (\Powf X)^A$ be $B$-coalgebra, i.e., a labelled transition system. 
Denote \emph{similarity} by ${\precsim} \subseteq X \times X$, defined more precisely below. 
Consider the logic with the following syntax: 
\begin{equation}\label{eq:logic-conj}
	\varphi,\psi \mathrel{::=} \diam{a}\varphi \mid \varphi \wedge \psi \mid \top
\end{equation}
where $a$ ranges over $A$, with the usual interpretation $x \models \varphi$ for states $x \in X$. 
A classical Hennessy-Milner theorem for similarity is:
\begin{equation}\label{eq:hm-sim}
x \precsim y \text{ iff } \forall \varphi. \, x \models \varphi \rightarrow y \models \varphi \,.
\end{equation}
We show how to formulate and prove this result within our abstract framework. 

First, recall from Equation~\ref{eq:bbar-simulation} in Section~\ref{sec:coind-pred} 
the appropriate lifting $\overline{B} \colon \Rel \rightarrow \Rel$ in the relation fibration 
$p \colon \Rel \rightarrow \Set$. A simulation on a $B$-coalgebra $(X,\gamma)$ is a relation $R$ such that $R \leq \gamma^* \circ \lift{B}_X(R)$,
and similarity $\precsim$ is the greatest fixed point of $\gamma^* \circ \lift{B}_X$. 

For the logic, to incorporate finite conjunction, we instantiate $\Cat{D}$
with the category $\SL$ of bounded (meet)-semilattices, i.e., sets equipped with an associative, commutative and idempotent 
binary operator $\wedge$ and a top element $\top$. 

To add the modalities $\diam{a}$ for each $a \in A$, we proceed as follows. 
Let $U \colon \SL  \rightarrow \Set$ be the forgetful functor. It has a left
adjoint $\mathcal{F} \colon  \Set \rightarrow \SL$, mapping a set
$X$ to the meet-semilattice $\Powf(X)$ with the top element given by $\emptyset$ and the meet by union. 
The functor $L \colon \SL \rightarrow \SL$ is given by $LX = \mathcal{F}(A \times UX)$; its initial algebra
$\Phi$ consists precisely of the language presented in Equation~\ref{eq:logic-conj}, quotiented by
the semilattice equations\footnote{To simplify the presentation we do not quotient with monotonicity axioms for the modal operators, i.e., we do not 
ensure that $\varphi_1 \leq  \varphi_2$ implies $\diam{a}\varphi_1 \leq \diam{a}\varphi_2$. 
}. For the adjunction, we use:
	$$
	\xymatrix@C=0.5cm{
	\Set\ar@/^2ex/[rr]^-{P = 2^-} \save !L(.5) \ar@(dl,ul)^{B = (\Powf -)^A} \restore 
	& \bot 
	& \SL^{\op} \ar@/^2ex/[ll]^-{Q=\Hom(-,2)} \save !R(.5) \ar@(ur,dr)^{L = \mathcal{F}(A \times U-)} \restore 
	& & \mbox{\qquad \qquad with} \quad
	B \Hom(-,2) \ar@{=>}[r]^-{\delta} & \Hom(L-,2)
	}
	$$
which is an instance of Equation~\ref{eq:adjunction-simple}.
Here $2=\{0,1\}$ is the meet-semilattice given 
by the order $0 \leq 1$. For a semilattice $S$,
the set $\Hom(S,2)$ of semilattice morphisms is isomorphic to the set of \emph{filters} on $S$: subsets $X \subseteq S$ such that 
$\top \in X$, and $x,y \in X$ iff $x \wedge y \in X$. 

To define the natural transformation $\delta_S \colon (\Powf(\Hom(S,2)))^A \rightarrow \Hom(\mathcal{F}(A \times US),2)$
on a semilattice $S$, we use that for every map $f \colon A \times US \rightarrow 2$ there is a unique semilattice
homomorphism $f^\sharp \colon \mathcal{F}(A \times US) \rightarrow 2$ extending it:
$$
\delta_S(t) = ((a,x) \mapsto \bigvee_{\phi \in t(a)} \phi(x))^\sharp 
	= \left( W \mapsto \bigwedge_{(a,x) \in W} \bigvee_{\phi \in t(a)} \phi(x) \right) \,.
$$
For an LTS $(X,\gamma)$, the associated theory map $\theory \colon X \rightarrow \Hom(\Phi, 2)$
maps a state to the formulas in~\eqref{eq:logic-conj} that it accepts, with the usual semantics. 

To recover~\eqref{eq:hm-sim}, we need to relate logical theories appropriately. 
Define 
$$
\overline{Q} \colon \SL^{\op} \rightarrow \Rel \,, \qquad
\overline{Q}S = (\Hom(S, 2), \{(\phi_1, \phi_2) \mid \forall x \in S. \, \phi_1(x) \leq \phi_2(x) \}) \,.
$$
Then $\theory^*(\overline{Q}\Phi) = \{(x,y) \mid \forall \varphi \in \Phi.\, \theory(x)(\varphi) \leq \theory(y)(\varphi)\}$, i.e., it relates all $(x,y)$ such that the set of formulas satisfied
at $x$ is included in the set of formulas satisfied at $y$. 
Thus, instantiating Definition~\ref{def:adequate-expressive}, 
adequacy ${\precsim} = \nu(\gamma^* \circ \overline{B}_X) \leq \theory^*(\overline{Q} \Phi)$
is the implication from left to right in Equation~\ref{eq:hm-sim}, and expressiveness is the converse. 

We prove adequacy and expressiveness. 
The functor $\overline{Q}$ has a left adjoint,
given by $\overline{P}(X, R) = \Hom((X, R), \overline{2})$, where $\overline{2} = (2, \{(x,y) \mid x \leq y\})$. 
This follows by Corollary~\ref{cor:polynomial} with Remark~\ref{rem:adjunction},
with $\SL$ as a full subcategory of the category of all algebras for the corresponding signature. 

Given a semilattice $S$, we compute $\delta_S^*(\lift{Q} L S) \subseteq (BQS)^2 = ((\Powf(\Hom(S,2)))^A)^2$:
\begin{align*}
	\delta_S^*(\lift{Q} L S) 
		&= \delta_S^*(\{(\phi_1,\phi_2) \mid \forall W \in \mathcal{F}(A \times US) . \, \phi_1(W) \leq \phi_2(W)\}) \\
		&= \{(t_1,t_2) \mid \forall W \in \mathcal{F}(A \times US) . \,
				\bigwedge_{(a,x) \in W} \bigvee_{\phi \in t_1(a)} \phi(x)
				\leq \bigwedge_{(a,x) \in W} \bigvee_{\phi \in t_2(a)} \phi(x)\} \,.
\end{align*}
Further, 
$
\lift{B}\,\lift{Q} S = \{(t_1,t_2) \mid \forall a \in A. \, \forall \phi_1 \in t_1(a). \, \exists \phi_2 \in t_2(a). \, 
							\forall x \in S. \, \phi_1(x) \leq \phi_2(x) \}
$.
For adequacy, we need to prove $\lift{B} \, \lift{Q} S \leq \delta_S^*(\lift{Q} L S)$; but this is trivial, 
given the above computations. For expressiveness, let $(t_1,t_2) \in \delta_S^*(\lift{Q} L S)$. 
We need to show that $(t_1,t_2) \in \lift{B} \, \lift{Q} S$. Suppose, towards a contradication,
that $(t_1,t_2) \not \in \lift{B} \, \lift{Q} S$, i.e., there exist $a \in A$ and $\phi_1 \in t_1(a)$ such that
for all $\phi_2 \in t_2(a)$, there is $x \in S$ with $\phi_1(x) = 1$ and $\phi_2(x) = 0$. We choose
such an element $x_{\phi_2}$ for every $\phi_2 \in t_2(a)$. Note that the collection 
$\{x_{\phi_2} \mid \phi_2 \in t_2(a)\}$ is \emph{finite}---here we make use of the image-finiteness
captured by the functor $B$. 
Now, consider the conjunction $\psi = \bigwedge_{\phi_2 \in t_2(a)} x_{\phi_2} \in S$. 
Using that $\phi_1$ is a homomorphism, we have 
$\phi_1(\psi) = \phi_1(\bigwedge_{\phi_2 \in t_2(a)} x_{\phi_2}) = \bigwedge_{\phi_2 \in t_2(a)} \phi_1(x_{\phi_2}) = 1$,
and consequently $\bigvee_{\phi \in t_1(a)} \phi(\psi) = 1$. 
We also have $\bigvee_{\phi \in t_2(a)} \phi(\psi) = \bigvee_{\phi \in t_2(a)} \bigwedge_{\phi_2 \in t_2(a)} \phi(x_{\phi_2}) = 0$
since $\phi_2(x_{\phi_2}) = 0$ for every $\phi_2 \in t_2(a)$. 
Finally, to arrive at a contradiction, let $W = \{(a, \psi)\}$. 
Since $(t_1,t_2) \in \delta_S^*(\lift{Q} L S)$ this implies 
$\bigvee_{\phi \in t_1(a)} \phi(\psi) \leq \bigvee_{\phi \in t_2(a)} \phi(\psi)$, which is
in contradiction with the above. 
It is easy to check that $(\overline{B},B)$ is a fibration map (cf.~\cite{HughesJ04}). Hence, we conclude
expressiveness from Theorem~\ref{thm:expressive}.

\begin{remark}
   In fact, the expressiveness argument also goes through if we replace $\SL$ in the above argument
   with the category of algebras for the bounded semilattice signature. As pointed out in Sec.~\ref{sec:adjunctions} this can be useful
   in cases where an axiomatisation of the class of algebras involved is not known. In the concrete case above
   we opted to work with the well-known category $\SL$ instead.
\end{remark}

\section{Finite-depth expressiveness and the Kleene fixed point theorem}
\label{sec:kleene}

In Section~\ref{sec:framework} we formulated expressiveness as an inequality 
$\nu(\gamma^* \circ \lift{B}_X) \geq \theory^*(\lift{Q} \Phi)$ for all $B$-coalgebras $(X,\gamma)$. The sufficient conditions 
formulated in Theorem~\ref{thm:expressive} ensure that 
$\theory^*(\lift{Q} \Phi)$ is a post-fixed point of $\gamma^* \circ \lift{B}_X$, so
that the desired inequality follows. Thereby, that approach relies on the Knaster-Tarski
fixed point theorem, constructing the greatest fixed point as the largest post-fixed point. 

In the current section we explore a different abstract technique for proving expressiveness,
which instead relies on a technique for constructing greatest fixed points which
is often referred to as Kleene's fixed point theorem. Given a monotone function
$\varphi \colon L \rightarrow L$ on a complete lattice $L$, we construct the chain
$$
\top \geq \varphi(\top) \geq \varphi(\varphi(\top)) \geq \ldots
$$
and take its limit
$$
\bigwedge_{i \in \mathbb{N}} \varphi^i(\top)
$$
If $\varphi$ preserves limits of $\omega$-cochains (decreasing sequences indexed by the natural numbers), also
referred to as cocontinuity, then this is the greatest fixed point of $\varphi$. 

This suggests a different route to expressiveness: we will formulate sufficient
conditions to ensure that 
$$
(\gamma^* \circ \lift{B}_X)^i(\top) \geq (\theory_i)^*(\lift{Q} \Phi_i)
$$
for all $i \in \mathbb{N}$; here $\Phi_i$ refers to formulas of modal depth at most $i$, made more precise
below using the \emph{initial} sequence of the functor $L$, and $\theory_i \colon X \rightarrow Q\Phi_i$ 
is the associated theory map.
The above family of inequalities (indexed by $i$) can be thought of as \emph{finite-depth expressiveness}: it 
states that the formulas of modal depth at most $i$ are expressive with respect to the $i$-th approximation
of the coinductive predicate defined by $\lift{B}$. For instance, that logical equivalence w.r.t.\ formulas
of depth at most $i$ in Hennessy-Milner logic imply $i$-step bisimilarity. 

These conditions are sufficient to ensure finite-depth expressiveness---if we then make the additional assumption
that $\gamma^* \circ \lift{B}_X$ is cocontinuous, we obtain proper expressiveness.
In the `Knaster-Tarski' approach to expressiveness of Theorem~\ref{thm:expressive}, instead, no such assumption is explicitly formulated. 
So in that approach, cocontinuity is not explicitly assumed.
This explains why in some of the examples---for instance similarity of labelled transition systems---part of the argument resembles a proof of cocontinuity. 

A remark is in order here. The cocontinuity of $\gamma^* \circ \lift{B}_X$,
which is a functor on a fibre (hence, a monotone map between posets), is of course different from preservation of limits of chains
by $\lift{B}$ or $B$. We refer to~\cite{HasuoKC18} for a proper study of the relation between these different sequences. 
The current section is primarily about another the relation between these various sequences and the initial sequence of $L$. 

Throughout this section we work again under Assumptions~\ref{as:basic} and~\ref{as:qbar}
concerning our overall categorical setting. 
We start by recalling the notion of initial and final sequence. 

\begin{definition}[Initial and final sequence]\label{def:sequence}
	Suppose $\Cat{D}$ has an initial object $0$. 
	The \emph{initial sequence} of $L$ is 
	the chain $(L^i0)_{i \in \mathbb{N}}$, with connecting morphisms $l_{i,j} \colon L^i 0 \rightarrow L^j 0$ for all $i \leq j$ defined by
	$l_{0,j} = {!_{L^j 0}}$ for all $j$ and $l_{i+1,j+1} = Ll_{i,j}$.
	Further, given an algebra $\alpha \colon LA \rightarrow A$, we inductively define a cocone
	$\alpha_i \colon L^i 0  \rightarrow A$ by $\alpha_0 = {!_A}$ and $\alpha_{i+1} = \alpha \circ L\alpha_i$.
	
	If $\Cat{C}$ has an initial object $1$, the final sequence of $B$ is defined dually
	as $(B^i 1)_{i \in \mathbb{N}}$, with the associated connecting morphisms $b_{j,i} \colon B^j 1 \rightarrow B^i 1$ for
	$i \leq j$. 
	Any coalgebra $\gamma \colon X \rightarrow BX$ defines a cone $\gamma_i \colon X \rightarrow B^i 1$
	by $\gamma_0 = {!_X}$ and $\gamma_{i+1} = B\gamma_i \circ \gamma$. 
	
\end{definition}
\newcommand{\colim}{\mathrm{colim}}
If $L$ preserves colimits of $\omega$-chains, of which the initial sequence is an instance, then the colimit $\colim_{i < \omega} L^i0$ carries an initial algebra~\cite{adamek1974free}. Dually if $B$ preserves limits of $\omega$-co-chains, $\lim_{i < \omega} B^i 1$ is a final coalgebra. 
In both cases, the elements of the respective sequences can be thought of as approximations of the initial algebra and final coalgebra, respectively.

For a coalgebra $(X,\gamma)$, we define the cone
$$
\Sem{\_}_i  = (P\gamma \circ \widehat{\delta}_X)_i  \colon L^i 0 \rightarrow PX \,,
$$
as in Definition~\ref{def:sequence} from the algebra $P\gamma \circ \widehat{\delta}_X \colon LPX \rightarrow PX$.
Let
$$\theory_i \colon X \rightarrow QL^i 0$$ 
be the transpose of $\Sem{\_}_i$. The elements of $L^i 0$ are thought of as modal formulas of rank at most $i$,
and $\theory_i$ is the theory map of a coalgebra restricted to those formulas.  
It is easy to show that the $\theory_i$ maps satisfy the following properties:

\begin{lemma}\label{lm:thmap-approx}
	For all $i$, the two triangles in the following diagram commute.
	$$
	\xymatrix{
		X \ar[rr]^{\theory_i} \ar[d]_-{\gamma} \ar[drr]^{\theory_{i+1}}
			& & QL^i 0 \\
		BX \ar[r]_-{B\theory_i}
			& BQL^i 0 \ar[r]_-{\delta_{L^i 0}}
			& QL^{i+1} 0 \ar[u]_-{Ql_{i,i+1}}
	}
	$$
\end{lemma}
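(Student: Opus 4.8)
The lemma packs two separate commutativities: the lower triangle is the one-step law $\theory_{i+1} = \delta_{L^i 0}\circ B\theory_i\circ\gamma$, and the upper triangle is the compatibility $\theory_i = Ql_{i,i+1}\circ\theory_{i+1}$ of the finite-depth theory maps with the connecting morphisms of the initial sequence. The plan is to treat both by passing to adjoint transposes along $P\dashv Q$ and using the recursive description of the semantics cocone. Write $\beta\colon LPX\to PX$ for the algebra $P\gamma\circ\widehat{\delta}_X$ whose induced cocone on the initial sequence is the semantics, so that $\Sem{\_}_i = \beta_i$ with $\beta_{i+1} = \beta\circ L\beta_i$, and recall that $\theory_i$ is by definition the transpose of $\beta_i$, i.e.\ $\theory_i = Q\beta_i\circ\eta_X$ and, equivalently, $\beta_i = P\theory_i\circ\epsilon_{L^i 0}$ (with $\eta,\epsilon$ the unit and counit of the adjunction).

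The upper triangle I would dispatch first, as it is purely formal. Since $Q$ is contravariant, $Ql_{i,i+1}\circ\theory_{i+1} = Ql_{i,i+1}\circ Q\beta_{i+1}\circ\eta_X = Q(\beta_{i+1}\circ l_{i,i+1})\circ\eta_X$, so it suffices to prove $\beta_{i+1}\circ l_{i,i+1} = \beta_i$. This is the standard fact that the cocone induced by an $L$-algebra is compatible with the connecting maps of the initial sequence, and it follows by induction on $i$: the base case is forced by initiality of $0$, and the step is $\beta_{i+1}\circ l_{i,i+1} = \beta\circ L\beta_i\circ Ll_{i-1,i} = \beta\circ L(\beta_i\circ l_{i-1,i}) = \beta\circ L\beta_{i-1} = \beta_i$, using the induction hypothesis. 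Substituting back yields $Ql_{i,i+1}\circ\theory_{i+1} = Q\beta_i\circ\eta_X = \theory_i$.

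For the lower triangle my plan is to show that $\theory_{i+1}$ and $g \coloneqq \delta_{L^i 0}\circ B\theory_i\circ\gamma$ have equal transposes $L^{i+1}0\to PX$; the hom-set bijection then forces $\theory_{i+1} = g$. The transpose of $\theory_{i+1}$ is $\beta_{i+1} = P\gamma\circ\widehat{\delta}_X\circ L\beta_i$ by definition. The transpose of $g$ is $Pg\circ\epsilon_{L^{i+1}0}$, and contravariance of $P$ gives $Pg = P\gamma\circ P(B\theory_i)\circ P\delta_{L^i 0}$. After reducing away the common leftmost factor $P\gamma$, I am left to check $P(B\theory_i)\circ P\delta_{L^i 0}\circ\epsilon_{L^{i+1}0} = \widehat{\delta}_X\circ L\beta_i$. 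Expanding $\beta_i = P\theory_i\circ\epsilon_{L^i 0}$ on the right and applying naturality of $\widehat{\delta}\colon LP\Rightarrow PB$ at the morphism $\theory_i$ moves $P(B\theory_i)$ to the outside on both sides; removing it reduces the goal to the single identity $P\delta_{L^i 0}\circ\epsilon_{L^{i+1}0} = \widehat{\delta}_{QL^i 0}\circ L\epsilon_{L^i 0}$.

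This last identity is exactly the defining relation of the mate correspondence: $\delta_Y$ is by construction the transpose of $\widehat{\delta}_{QY}\circ L\epsilon_Y$, and transposing it back recovers $P\delta_Y\circ\epsilon_{LY}$, so the equation holds at $Y = L^i 0$. Everything else is routine diagram chasing, so the main obstacle is precisely this bookkeeping in the lower triangle: transposing a threefold composite across a contravariant adjunction and matching the two sides via naturality of $\widehat{\delta}$ together with the $\delta$–$\widehat{\delta}$ mate law. It is worth noting that the lower triangle is the finite-stage analogue of the full theory square $Q\alpha\circ\theory = \delta_\Phi\circ B\theory\circ\gamma$ from Section~\ref{sec:logic}; the only substantive difference is that here the mate law is invoked at the general component $L^i 0$ rather than at the initial algebra $\Phi$.
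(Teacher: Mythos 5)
Your proof is correct. The paper itself offers no argument for this lemma (it is stated with the remark ``It is easy to show that the $\theory_i$ maps satisfy the following properties''), so there is nothing to compare against line by line; your write-up is a legitimate filling-in of that omitted proof, and it proceeds exactly along the lines the definitions suggest. Both key steps check out: the upper triangle reduces, via contravariance of $Q$, to the cocone compatibility $\beta_{i+1}\circ l_{i,i+1}=\beta_i$, whose induction (base case by initiality of $0$, step by $l_{i,i+1}=Ll_{i-1,i}$ and functoriality of $L$) is sound; and the lower triangle reduces, after transposing along $P\dashv Q$ and one application of naturality of $\widehat{\delta}$ at $\theory_i$, precisely to the mate identity $P\delta_{L^i 0}\circ\epsilon_{L^{i+1}0}=\widehat{\delta}_{QL^i 0}\circ L\epsilon_{L^i 0}$, which holds by definition of the mate correspondence between $\delta$ and $\widehat{\delta}$. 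One small point of rigour worth noting: when you ``reduce away'' the common factors $P\gamma$ and $P(B\theory_i)$, you are not cancelling (these maps need not be monic) but rather proving a stronger equation whose post-composition yields the desired one; your phrasing (``it suffices to check'') makes clear you intend sufficiency, so the logic is fine.
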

Furthermore we define a sequence $(\delta_i \colon B^i 1 \rightarrow QL^i 0)_{i \in \mathbb{N}}$,
which iterates $\delta$ on the final sequence of $B$, 
as follows:
$$
\delta_0 = \id \colon 1 \rightarrow Q0 \qquad \delta_{i+1} = 
\left(\xymatrix{
	BB^i 1 \ar[r]^-{B\delta_i}
		& BQL^i 0 \ar[r]^-{\delta_{L^i 0}}
		& QL^{i+1}0 
}\right).
$$
We use here that $Q0 = 1$ is a final object, as $0$ is initial and $Q$ a right adjoint. 
This enables us to relate $\theory_i$ and $\gamma_i$.

\begin{lemma}\label{lm:theory-delta-gamma}
	Let $(X,\gamma)$ be a coalgebra. For all $i$, $\theory_i = \delta_i \circ \gamma_i$.
\end{lemma}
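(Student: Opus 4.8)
The plan is to establish $\theory_i = \delta_i \circ \gamma_i$ by induction on $i \in \omega$, with the inductive step driven almost entirely by Lemma~\ref{lm:thmap-approx}. The point is that the two recursive definitions line up perfectly: $\gamma_{i+1} = B\gamma_i \circ \gamma$ peels off one application of $B \circ (\_) \circ \gamma$ on the coalgebra side, $\delta_{i+1} = \delta_{L^i 0} \circ B\delta_i$ does the matching thing on the semantics side, and Lemma~\ref{lm:thmap-approx} tells us that $\theory_{i+1}$ factors in exactly this shape.

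For the base case $i = 0$ I would argue by terminality. Since $0$ is initial in $\Cat{D}$ and $Q$ is a right adjoint, $QL^0 0 = Q0 = 1$ is a terminal object of $\Cat{C}$ (as already noted just before the statement). Both $\theory_0 \colon X \to Q0$ and $\delta_0 \circ \gamma_0 = \id \circ {!_X} \colon X \to 1$ are therefore maps into a terminal object, so they coincide.

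For the inductive step, assuming $\theory_i = \delta_i \circ \gamma_i$, I would unfold the two definitions, use functoriality of $B$ together with the induction hypothesis, and finally invoke the commuting triangle of Lemma~\ref{lm:thmap-approx} computing $\theory_{i+1}$:
\begin{align*}
	\delta_{i+1} \circ \gamma_{i+1}
	&= \delta_{L^i 0} \circ B\delta_i \circ B\gamma_i \circ \gamma \\
	&= \delta_{L^i 0} \circ B(\delta_i \circ \gamma_i) \circ \gamma \\
	&= \delta_{L^i 0} \circ B\theory_i \circ \gamma \\
	&= \theory_{i+1} \,.
\end{align*}
Here the first line is the definitions of $\delta_{i+1}$ and $\gamma_{i+1}$, the second is functoriality of $B$, the third is the induction hypothesis, and the last is the lower triangle of Lemma~\ref{lm:thmap-approx}.

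I do not expect a genuine obstacle, since the real content has been isolated into Lemma~\ref{lm:thmap-approx}. The one place to be careful is precisely that reliance: the identity $\theory_{i+1} = \delta_{L^i 0} \circ B\theory_i \circ \gamma$ is what makes the whole argument collapse into a one-line recursion, and without it one would have to reprove it by unfolding the cocone $\Sem{\_}_{i+1} = (P\gamma \circ \widehat{\delta})_{i+1}$ and transposing across the contravariant adjunction $P \dashv Q$, which requires naturality of $\delta$ (equivalently of its mate $\widehat{\delta}$) and some care with the direction of the transpose. Since Lemma~\ref{lm:thmap-approx} is available, I would simply cite it.
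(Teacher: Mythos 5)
Your proof is correct and is essentially the paper's own argument: the same induction on $i$, with the inductive step resting on the identity $\theory_{i+1} = \delta_{L^i 0} \circ B\theory_i \circ \gamma$ from Lemma~\ref{lm:thmap-approx}, just computed from the $\delta_{i+1} \circ \gamma_{i+1}$ side rather than from $\theory_{i+1}$. Your base case via terminality of $Q0$ is a slightly more explicit justification of what the paper dismisses as trivial, but it is the same reasoning.
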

\begin{proof}
	By induction on $i$. The base case is trivial: $\theory_0 = {!_X} = \delta_0 \circ \gamma_0$. 
	Suppose it holds for some $i$. Then:
	$
		\theory_{i+1}
		= \delta_{L^i 0} \circ B\theory_i \circ \gamma 
		= \delta_{L^i 0} \circ B\delta_i \circ B\gamma_i \circ \gamma 
		= \delta_{i+1} \circ \gamma_{i+1} 
	$,
	where the first equality holds by Lemma~\ref{lm:thmap-approx}.
\end{proof}

The following lemma shows that,
for a coalgebra $(X,\gamma)$, the elements 
of the final sequence of $\gamma^*\circ \lift{B}_X$ (in the fibre $\Cat{E}_X$)
can be retrieved from the final sequence of $\lift{B}$ (in the total category $\Cat{E}$)
by reindexing along the maps $\gamma_i \colon X \rightarrow B^i 1$.

\begin{lemma}\label{lm:relate-final-seq}
	Suppose $(B,\lift{B})$ is a fibration map, and let $(X,\gamma)$ be a $B$-coalgebra.
	Then for all $i$, $(\gamma_{i})^*(\lift{B}^{i} 1) = (\gamma^* \circ \lift{B}_X)^i(\top)$. 
\end{lemma}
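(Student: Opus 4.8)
The plan is to prove the statement by induction on $i$, relying on three ingredients established earlier: that a poset fibration is split (Section~\ref{sec:coind-pred}), so that $(g \circ f)^* = f^* \circ g^*$; the fibration map hypothesis, which gives $(Bf)^* \circ \lift{B}_Y = \lift{B}_X \circ f^*$ for every $f \colon X \to Y$; and the fact that in a $\CLat$-fibration reindexing preserves arbitrary meets, in particular the empty meet, so reindexing sends top elements to top elements. Throughout, note that since $\lift{B}$ is a lifting of $B$ we have $p(\lift{B}^i 1) = B^i 1$, so $\lift{B}^i 1$ is above $B^i 1$ and the reindexing $(\gamma_i)^*(\lift{B}^i 1)$ along $\gamma_i \colon X \to B^i 1$ indeed yields an object of the fibre $\Cat{E}_X$, matching the type of the right-hand side.

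For the base case $i = 0$, I would observe that $\lift{B}^0 1 = 1$ is the terminal object of $\Cat{E}$, which is the top element $\top$ of the fibre above the terminal object of $\Cat{C}$, and that $\gamma_0 = {!_X}$. Since reindexing along ${!_X}$ preserves the empty meet, $({!_X})^*(\top) = \top$, the top of $\Cat{E}_X$; the right-hand side $(\gamma^* \circ \lift{B}_X)^0(\top)$ is also $\top$ by definition, so the two sides agree.

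For the inductive step, I would first rewrite the left-hand side using the recursive definitions $\gamma_{i+1} = B\gamma_i \circ \gamma$ and $\lift{B}^{i+1} 1 = \lift{B}(\lift{B}^i 1)$, and then split the reindexing along the composite:
\[
(\gamma_{i+1})^*(\lift{B}^{i+1} 1) = \gamma^* \big( (B\gamma_i)^*(\lift{B}(\lift{B}^i 1)) \big).
\]
The crucial move is to apply the fibration map identity to $f = \gamma_i \colon X \to B^i 1$ with the object $\lift{B}^i 1$ above $B^i 1$, giving $(B\gamma_i)^*(\lift{B}(\lift{B}^i 1)) = \lift{B}_X((\gamma_i)^*(\lift{B}^i 1))$. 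Substituting this and applying the induction hypothesis $(\gamma_i)^*(\lift{B}^i 1) = (\gamma^* \circ \lift{B}_X)^i(\top)$ yields exactly $\gamma^* \big( \lift{B}_X ((\gamma^* \circ \lift{B}_X)^i(\top)) \big) = (\gamma^* \circ \lift{B}_X)^{i+1}(\top)$, as desired.

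I expect the only delicate point to be the bookkeeping in the inductive step: one must match fibres and indices correctly when invoking the fibration map property — in particular recognising that the relevant instance of the lifting is $\lift{B}_{B^i 1}(\lift{B}^i 1)$, since $\lift{B}^i 1$ lives above $B^i 1$, and that $(B\gamma_i)^*$ is reindexing along $B\gamma_i \colon BX \to B^{i+1} 1$. Once the typing is pinned down, the remainder is a direct chain of equalities combining splitness, the fibration map identity, and the induction hypothesis.
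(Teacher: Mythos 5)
Your proof is correct and follows essentially the same route as the paper's: induction on $i$, with the base case handled by the fact that reindexing in a $\CLat$-fibration preserves top elements, and the inductive step given by splitting $(B\gamma_i \circ \gamma)^*$ via splitness, applying the fibration map identity to $\gamma_i$, and then invoking the induction hypothesis. The additional typing remarks you make (that $\lift{B}^i 1$ lives above $B^i 1$, so the fibration map property is applied at the right fibre) are exactly the implicit bookkeeping in the paper's computation.
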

\begin{proof}
	By induction on $i$. 
	The base case is easy, since reindexing in $\CLat$-fibrations preserves top elements. 
	For the inductive case, suppose it holds for some $i$. We compute:
	\begin{align*}
		(\gamma_{i+1})^*(\lift{B}^{i+1} 1)
		&= (B\gamma_i \circ \gamma)^* (\lift{B} \, \lift{B}^i 1) \\
		&= \gamma^* \circ (B\gamma_i)^* (\lift{B} \, \lift{B}^i 1) \\
		&= \gamma^* \circ \lift{B}_X \circ (\gamma_i)^* (\lift{B}^i 1) \tag{$(\lift{B},B)$ fibration map} \\
		&= \gamma^* \circ \lift{B}_X \circ (\gamma \circ \lift{B}_X)^i(\top) \tag{induction hypothesis} \\
		&= (\gamma \circ \lift{B}_X)^{i+1}(\top) \,.
\qedhere
	\end{align*}
\end{proof}
The following result now establishes a sufficient condition on $\delta$ for finite-depth expressiveness,
formulated in terms of the final sequence of $\overline{B}$ and initial sequence of $L$.

\begin{proposition}[Finite-depth expressiveness]\label{prop:approx-expr}
	Suppose $(\overline{B}, B)$ is a fibration map. Then for all $i$: 
	if $\delta_i^*(\lift{Q} L^i 0) \leq \lift{B}^i 1$
	then for any $B$-coalgebra $(X,\gamma)$, we have $(\theory_i)^*(\lift{Q} L^i 0) \leq (\gamma^* \circ \lift{B}_X)^i(\top)$. 
\end{proposition}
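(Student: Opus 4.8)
The plan is to reduce the claimed inequality to the two immediately preceding lemmas by a short reindexing calculation, exploiting that the fibration is split (so reindexing is functorial) and that each reindexing functor between fibres is monotone. The pivot is the factorisation $\theory_i = \delta_i \circ \gamma_i$ supplied by Lemma~\ref{lm:theory-delta-gamma}, which lets me split the theory map into its ``syntactic'' part $\delta_i$ and its ``behavioural'' part $\gamma_i$.

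First I would rewrite the left-hand side $(\theory_i)^*(\lift{Q} L^i 0)$ using this factorisation together with the splitting property of the fibration recalled in Section~\ref{sec:coind-pred}, namely $(\delta_i \circ \gamma_i)^* = (\gamma_i)^* \circ (\delta_i)^*$, obtaining $(\gamma_i)^*\bigl((\delta_i)^*(\lift{Q} L^i 0)\bigr)$. At this point the hypothesis $\delta_i^*(\lift{Q} L^i 0) \leq \lift{B}^i 1$ applies directly to the inner term. Since $(\gamma_i)^*$ is monotone (being a functor between poset fibres), applying it to both sides preserves the inequality, yielding $(\gamma_i)^*\bigl((\delta_i)^*(\lift{Q} L^i 0)\bigr) \leq (\gamma_i)^*(\lift{B}^i 1)$. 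Finally I would invoke Lemma~\ref{lm:relate-final-seq}, which is exactly where the fibration-map assumption on $(\lift{B},B)$ is consumed, to identify the right-hand side $(\gamma_i)^*(\lift{B}^i 1)$ with the $i$-th approximant $(\gamma^* \circ \lift{B}_X)^i(\top)$ of the final sequence in the fibre $\Cat{E}_X$. Chaining the two equalities around the single inequality then gives precisely the desired bound.

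I do not expect a genuine obstacle here: every step is a direct appeal to an already-established fact, and the argument is essentially a one-line computation once the two lemmas are in hand. The only point demanding care is to apply the splitting law in the correct order, so that the composite of reindexings is reversed relative to the maps; a careless reading could instead produce $(\gamma_i)^* \circ (\delta_i)^*$ with the factors swapped, which would be ill-typed. The real conceptual content has already been discharged in setting up Lemmas~\ref{lm:theory-delta-gamma} and~\ref{lm:relate-final-seq}, and this proposition merely assembles them; the fibration-map hypothesis is needed solely to license the use of the latter.
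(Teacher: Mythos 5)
Your proposal is correct and follows essentially the same route as the paper's proof: the same three-step chain, rewriting $(\theory_i)^*$ as $(\gamma_i)^* \circ (\delta_i)^*$ via Lemma~\ref{lm:theory-delta-gamma} and the splitting of the fibration, applying the hypothesis under the monotone reindexing $(\gamma_i)^*$, and closing with Lemma~\ref{lm:relate-final-seq}. One small slip in your final cautionary remark: the composite you label as the ``careless'' reading, $(\gamma_i)^* \circ (\delta_i)^*$, is in fact the correct one (and the one you actually use); the ill-typed order would be $(\delta_i)^* \circ (\gamma_i)^*$.
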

\begin{proof}
	If $\delta_i^*(\lift{Q} L^i 0) \leq \lift{B}^i 1$, then:
	\begin{align*}
		(\theory_i)^*(\lift{Q} L^i 0) 
			&= (\gamma_i)^* \circ (\delta_i)^* (\lift{Q} L^i 0) \tag{Lemma~\ref{lm:theory-delta-gamma}}\\
			&\leq (\gamma_i)^* (\lift{B}^i 1) \tag{assumption}\\
			&= (\gamma \circ \lift{B}_X)^i(\top) \tag{Lemma~\ref{lm:relate-final-seq}} \,.
	\end{align*}
\end{proof}

A natural way to move from the above result on finite-depth expressiveness to full expressiveness
is to assume that the functors $\lift{B}_X$ on the fibre preserve 
limits of $\omega$-chains. Note that these are functors on fibres (that is, monotone functions),
and there is no assumption on $B$ or $\lift{B}$ preserving anything. 

\begin{theorem}[Expressiveness via Kleene fixed point theorem]\label{thm:expr-kleene}
	Suppose $(\overline{B}, B)$ is a fibration map, and for all $i$: 
	 $\delta_i^*(\lift{Q} L^i 0) \leq \lift{B}^i \lift{Q} 0$.
	 Then for any $B$-coalgebra $(X,\gamma)$ and for all $i$: $\theory^*(\lift{Q}\Phi) \leq (\gamma^* \circ \lift{B}_X)^i(\top)$. 
	 In particular, if $\lift{B}_X$ preserves limits of $\omega$-cochains,
	 then the logic $(L, \delta)$ is expressive. 
\end{theorem}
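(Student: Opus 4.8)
The plan is to combine the finite-depth expressiveness of Proposition~\ref{prop:approx-expr} with the Kleene construction of the greatest fixed point. First I would observe that the hypothesis of the theorem is stronger than that of Proposition~\ref{prop:approx-expr}: since $\lift{Q}0 \leq \top$ in the fibre $\Cat{E}_{Q0}$ (where $1 = Q0$ is terminal in $\Cat{C}$), and each fibre component of $\lift{B}$ is monotone, we have $\lift{B}^i \lift{Q}0 \leq \lift{B}^i 1$ for all $i$, writing $\lift{B}^i 1$ for the $i$-th element of the final sequence of $\lift{B}$ started at the top element above $1$. Hence $\delta_i^*(\lift{Q}L^i 0) \leq \lift{B}^i\lift{Q}0$ implies $\delta_i^*(\lift{Q}L^i 0) \leq \lift{B}^i 1$, so Proposition~\ref{prop:approx-expr} applies and yields finite-depth expressiveness: $(\theory_i)^*(\lift{Q}L^i 0) \leq (\gamma^* \circ \lift{B}_X)^i(\top)$ for every $i$ and every $B$-coalgebra $(X,\gamma)$.

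Next I would connect the full theory map $\theory$ to its finite-depth approximations $\theory_i$ via the cocone $\alpha_i \colon L^i 0 \to \Phi$ of the initial sequence, which exists for any algebra. A routine induction on $i$ shows that the finite-depth semantics factors as $\Sem{\_}_i = \Sem{\_} \circ \alpha_i$, using only that $\Sem{\_}$ is an $L$-algebra morphism out of $\Phi$; transposing along the adjunction then gives $\theory_i = Q\alpha_i \circ \theory$. Since reindexing is split, this produces $(\theory_i)^* = \theory^* \circ (Q\alpha_i)^*$, whence $(\theory_i)^*(\lift{Q}L^i 0) = \theory^*\big((Q\alpha_i)^*(\lift{Q}L^i 0)\big)$.

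The key fibrational step is then $\theory^*(\lift{Q}\Phi) \leq (\theory_i)^*(\lift{Q}L^i 0)$. As $\lift{Q}$ is a functor above $Q$, the morphism $\lift{Q}\alpha_i \colon \lift{Q}\Phi \to \lift{Q}L^i 0$ lies above $Q\alpha_i$; by the universal property of the Cartesian lifting this gives $\lift{Q}\Phi \leq (Q\alpha_i)^*(\lift{Q}L^i 0)$, and applying the monotone map $\theory^*$ together with the factorisation above yields $\theory^*(\lift{Q}\Phi) \leq (\theory_i)^*(\lift{Q}L^i 0)$. Chaining with finite-depth expressiveness establishes the first claim, $\theory^*(\lift{Q}\Phi) \leq (\gamma^*\circ\lift{B}_X)^i(\top)$ for all $i$.

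Finally, for the \emph{in particular} clause, I would note that $\gamma^* \circ \lift{B}_X$ preserves limits of $\omega$-cochains: $\lift{B}_X$ does so by hypothesis, and $\gamma^*$ preserves all meets—hence cochain limits—because $p$ is a $\CLat$-fibration. By the Kleene fixed point theorem the greatest fixed point is then $\nu(\gamma^*\circ\lift{B}_X) = \bigwedge_{i\in\omega}(\gamma^*\circ\lift{B}_X)^i(\top)$, so taking the meet over $i$ of the inequalities just established gives $\theory^*(\lift{Q}\Phi) \leq \nu(\gamma^*\circ\lift{B}_X)$, which is expressiveness. The only genuinely delicate points are the variance bookkeeping in the factorisation $\theory_i = Q\alpha_i \circ \theory$ and the careful comparison of the two final sequences' starting objects ($\top$ above $1$ versus $\lift{Q}0$); everything else is a direct application of earlier lemmas.
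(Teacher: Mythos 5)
Your proposal is correct and follows essentially the same route as the paper's proof: finite-depth expressiveness via Proposition~\ref{prop:approx-expr}, the factorisation $\theory_i = Q\alpha_i \circ \theory$ obtained from $\Sem{\_}$ being an $L$-algebra morphism, the Cartesian-lifting argument giving $\theory^*(\lift{Q}\Phi) \leq (\theory_i)^*(\lift{Q}L^i 0)$, and cocontinuity of $\gamma^* \circ \lift{B}_X$ plus the Kleene construction for the final step. Your explicit bridging of the theorem's hypothesis $\delta_i^*(\lift{Q}L^i 0) \leq \lift{B}^i\lift{Q}0$ to the form $\delta_i^*(\lift{Q}L^i 0) \leq \lift{B}^i 1$ required by Proposition~\ref{prop:approx-expr} (via $\lift{Q}0 \leq \top$ above $1$ and monotonicity of $\lift{B}$ on fibres) is a step the paper passes over silently, and is a welcome clarification rather than a deviation.
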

\begin{proof}
	Let $(X,\gamma)$ be a coalgebra; we need to prove that 
	$\theory^*(\lift{Q} \Phi) \leq \nu(\gamma^* \circ \lift{B}_X)$. 
	For the initial algebra $(\Phi, \alpha)$, we have the sequence $\alpha_i \colon L^i 0 \rightarrow \Phi$
	(Definition~\ref{def:sequence})
	and since  $\Sem{\_} \colon \Phi \rightarrow PX$ is an algebra morphism,
	it follows that $\Sem{\_} \circ \alpha_i = \Sem{\_}_i$ for all $i$. Hence, 
	$\theory_i = Q\alpha_i \circ \theory \colon X \rightarrow QL^i0$. 
	Now, since there is the composite morphism 
	$$
	\xymatrix@C=1.3cm{
		\theory^*(\lift{Q} \Phi) \ar[r]^-{\widetilde{\theory}_{\lift{Q}\Phi}}
			& \lift{Q} \Phi \ar[r]^-{\lift{Q}\alpha_i}
			& \lift{Q}	L^i 0
	}\,,
	$$
	above $Q\alpha_i \circ \theory = \theory_i$, we have 
	$\theory^*(\lift{Q} \Phi) \leq \theory_i^*(\lift{Q} L^i 0)$. Combined
	with Proposition~\ref{prop:approx-expr}, we get 
	$$
		\theory^*(\lift{Q} \Phi) \leq (\gamma^* \circ \lift{B}_X)^i(\top)
	$$
	for all $i$. 
	
	Finally, if $\lift{B}_X$ preserves limits of $\omega$-cochains, then 
	so does $\gamma^* \circ \lift{B}_X$ (reindexing $\gamma^*$ preserves all
	meets, by the definition of $\CLat$-fibration). Hence 
	$\bigwedge_{i \in \mathbb{N}} (\gamma^* \circ \lift{B}_X)^i(\top) = \nu(\gamma^* \circ \lift{B}_X)$.
	And thus $\theory^*(\lift{Q} \Phi) \leq  \nu(\gamma^* \circ \lift{B}_X)$.
\end{proof}

\begin{example}
	We show finite-depth expressiveness (Prop.~\ref{prop:approx-expr}) via the above approach for the example of similarity of labelled transition systems (Section~\ref{sec:simulation}).
	The relevant endofunctors $B = \Powf(-)^A,\lift{B}, L$, adjunctions $P \dashv Q$ and $\lift{P} \dashv \lift{Q}$, and $\delta$ 
	are all as defined there. Contrary to the 
	treatment in Section~\ref{sec:simulation}, with the current approach it matters
	quite a bit whether $A$ is finite or not.\footnote{This was pointed out to us by
	Yuichi Komorida.} For the moment, we will assume
	that $A$ is finite, which significantly simplifies the matter. 
	In Example~\ref{ex:now-infinite} below we discuss the infinite case. 
	
	The final sequence of $\lift{B}$ is concretely described as a sequence of relations on $B^i 1$, by just instantiating
	$\lift{B}$ for the inductive case:
	$$
	\lift{B} \, \lift{B}^i 1 = 
	\{(t_1,t_2) \mid \forall a \in A. \, \forall x \in t_1(a). \, \exists y \in t_2(a). (x,y) \in \lift{B}^i 1 \}\,. 
	$$
	and $\lift{B}^0 1 = \{(\ast, \ast)\} \subseteq 1 \times 1$. Thus, $(t_1, t_2) \in \lift{B}^i 1$ iff $t_1$ is ``$i$-step simulated'' by $t_2$, where
	both $t_1$ and $t_2$ are viewed as trees of height at most $i$. 
	
	The initial sequence of $L \colon \SL \rightarrow \SL$ is characterised, once again by spelling out the definition, by
	$L^0 0 = 0 = \{\top\}$ (the one-element semilattice, which is the initial object in $\SL$) and $LL^i 0 = \Powf(A \times L^i 0)$ (the
	free semilattice, see Section~\ref{sec:simulation}). Concretely, elements of $L^i 0$
	can be identified with formulas of depth at most $i$ in the logic of Section~\ref{sec:simulation} (diamond modalities and conjunction),
	quotiented by the semilattice equations. By the assumption that $A$ is finite, each set $L^i 0$ is finite. 
	
	We continue to prove the main hypothesis of Proposition~\ref{prop:approx-expr} and Theorem~\ref{thm:expr-kleene}: that
	\begin{equation}\label{eq:sim-kleene-ind}
		\delta_i^*(\lift{Q}L^i0) \leq \lift{B}^i 1
	\end{equation}
	for all $i$. Before doing so, we spell out $\delta_i^*(\lift{Q}L^i0)$ in some more detail. First, we 
	characterise $\delta_{i+1} \colon BB^i 1 \rightarrow QLL^i0$:
	\begin{align*}
		\delta_{i+1}(t)(W) 
		&= \delta_{L^i 0}(B\delta_i(t))(W) \\
		&= \bigwedge_{(a,\psi) \in W} \bigvee_{\phi \in (B\delta_i(t))(a)} \phi(\psi) \\
		&= \bigwedge_{(a,\psi) \in W} \bigvee_{x \in t(a)} \delta_i(x)(\psi) \\
	\end{align*}
	The map $\delta_i$ assigns to an element $t \in B^i 1$ the formulas of modal depth at most $i$ that hold for $t$,
	viewed as a tree.
	
	We now prove~\eqref{eq:sim-kleene-ind} by induction on $i$. The base case is trivial. For the inductive case,
	assume~\eqref{eq:sim-kleene-ind} holds for some $i$. 
	We have to prove that, for all $(t_1,t_2) \in \delta^*_{i+1}(\lift{Q}L^{i+1}0)$, $a \in A$ and $x \in t_1(a)$, there exists 
	$y \in t_2(a)$ such that $(x,y) \in B^i 1 $. 
	
	Spelling out the definition of $\delta_{i}^*(\lift{Q} L^{i}0)$ yields
	$$
		\delta_{i}^*(\lift{Q} L^{i}0)
		= \{(t_1, t_2) \in \lift{B}^{i}1 \mid \forall \phi \in L^{i} 0. \, \delta_{i}(t_1)(\phi) \leq \delta_{i}(t_2)(\phi)\} \,.
	$$
	For the $i+1$ case we can expand this further using the above equation for $\delta_{i+1}$:
	\begin{align*}
		\delta_{i+1}^*(\lift{Q} L^{i+1}0)
		&= \{(t_1, t_2) \mid \forall W \in L L^{i}0. \, \bigwedge_{(a,\psi) \in W} \bigvee_{x \in t_1(a)} \delta_i(x)(\psi)
		\leq \bigwedge_{(a,\psi) \in W} \bigvee_{x \in t_2(a)} \delta_i(x)(\psi) \}\,.
	\end{align*}

	Now, let $(t_1, t_2) \in \delta^*_{i+1}(\lift{Q}L^{i+1}0)$, $a \in A$ and $x \in t_1(a)$.  
	Let $\Psi = \{\psi \in L^i0 \mid \delta_i(x)(\psi) = 1\}$. Note that this is indeed a finite set, as $L^i 0$ is, so we will be able to consider its conjunction.
	Since $\delta_i(x)$ is a filter, we get $\delta_i(x)(\bigwedge_{\psi \in \Psi} \psi) = \bigwedge_{\psi \in \Psi} \delta_i(x)(\psi) = 1$, where the second equality holds by definition of $\Psi$.  
	Since
	 $(t_1, t_2) \in \delta^*_{i+1}(\lift{Q}L^{i+1}0)$, 
	 taking $W = \{(a, \bigwedge \Psi)\}$ (this represents $\diam{a}\bigwedge_{\delta_i(x)(\psi) = 1} \psi$)
	 we get that there exists $y \in t_2(a)$ such that $\delta_i(y)(\bigwedge_{\psi \in \Psi} \psi) = 1$. Hence, since $\delta_i(y)$ is a filter,
	 we get $\delta_i(y)(\psi) =1$ for all $\psi \in \Psi$. Thus $(x,y) \in \delta_{i}^*(\lift{Q}L^i 0)$,
	 and by the induction hypothesis we obtain $(x,y) \in \lift{B}^i 1$ as needed. 
\end{example}



The above proof relies on the assumption that the set of labels $A$ is finite. In the following example
we show a way to adapt the proof to the case where this assumption is dropped.

\begin{example}\label{ex:now-infinite}
	If $A$ is not assumed to be infinite, the above proof does not work, as
	the meet $\bigwedge \Psi$ may be infinite, which is not defined as we are
	working with semilattices. To remedy this, with $\Psi$ defined as above,
	let $\Psi_0 \subseteq \Psi_1 \subseteq \Psi_2 \subseteq \ldots$ be an increasing
	sequence of finite subsets of $\Psi$ such that $\bigcup_{i \in \mathbb{N}} \Psi_i = \Psi$.
	First note that, for each $i$, we have $\delta_i(x)(\bigwedge_{\psi \in \Psi_i} \psi) = 1$
	using again the filter property. Now, consider
	$$
	W_i = \{(a, \bigwedge \Psi_i) \}\,.
	$$
	for $i \in \mathbb{N}$.
	Following the earlier reasoning, we get for each $i$ an element $y \in t_2(a)$ such that 
	$\delta_i(y)(\bigwedge_{\psi \in \Psi_i} \psi) = 1$. In fact, since $t_2(a)$ is finite, there exists an $y$ 
	such that $\delta_i(y)(\bigwedge_{\psi \in \Psi_i} \psi) = 1$ for infinitely many $i$ (and as a consequence for all $i$). 
	Since $\delta_i(y)$ is a filter this means $\bigwedge_{\psi \in \Psi_i} \delta_i(y)(\psi) = 1$ for infinitely many
	$i$. Finally, since every $\psi \in \Psi$ is contained in some of these sets $\Psi_i$ we obtain the desired result
	that $\delta(y)(\psi) = 1$ for all $\psi \in \Psi$. The proof then concludes
	as above. 
\end{example}

\section{Future work}
\label{sec:conclusions}

We proposed suitable notions of expressiveness and adequacy,
connecting coinductive predicates in a fibration to coalgebraic modal logic in a contravariant adjunction.
Further, we gave sufficient conditions on the one-step semantics that guarantee expressiveness
and adequacy, and showed how to put these methods to work in concrete examples. 

There are several avenues for future work. First, an intriguing question is whether
the characterisation of behavioural metrics in~\cite{KonigM18,WildSP018}
can be covered in the setting of this paper, as well as logics for other distances
such as the (abstract, coalgebraic) Wasserstein distance. Those behavioural metrics
are already
framed in a fibrational setting~\cite{Bonchi0P18,SprungerKDH18,BaldanBKK18,KKHKH19}. 
While all our examples
are for coalgebras in $\Set$, the fibrational framework allows different base categories,
which might be useful to treat, e.g., behavioural metrics for continuous probabilistic systems~\cite{BreugelW05}.

A further natural question is whether we can automatically \emph{derive} logics for 
a given predicate. As mentioned in the introduction, there are various tools 
to find expressive logics for behavioural equivalence. But extending this to the current
general setting is non-trivial. Conversely, given a logic, one would like to associate
a lifting to it, perhaps based on techniques related to $\Lambda$-bisimulations~\cite{GorinS13,BakhtiariH17,Enqvist13}.

\bibliographystyle{alpha}
\bibliography{refs}

\newcommand{\etalchar}[1]{$^{#1}$}
\begin{thebibliography}{KKH{\etalchar{+}}19}

\bibitem[Ad{\'a}74]{adamek1974free}
Ji{\v{r}}{\'\i} Ad{\'a}mek.
\newblock Free algebras and automata realizations in the language of
  categories.
\newblock {\em Commentationes Mathematicae Universitatis Carolinae},
  15(4):589--602, 1974.

\bibitem[BBKK18]{BaldanBKK18}
Paolo Baldan, Filippo Bonchi, Henning Kerstan, and Barbara K{\"{o}}nig.
\newblock Coalgebraic behavioral metrics.
\newblock {\em Logical Methods in Computer Science}, 14(3), 2018.

\bibitem[BD16]{BilkovaD16}
Marta B{\'{\i}}lkov{\'{a}} and Matej Dost{\'{a}}l.
\newblock Expressivity of many-valued modal logics, coalgebraically.
\newblock In Jouko~A. V{\"{a}}{\"{a}}n{\"{a}}nen, {\AA}sa Hirvonen, and Ruy J.
  G.~B. de~Queiroz, editors, {\em WoLLIC}, volume 9803 of {\em LNCS}, pages
  109--124. Springer, 2016.

\bibitem[BH17]{BakhtiariH17}
Zeinab Bakhtiari and Helle~Hvid Hansen.
\newblock Bisimulation for weakly expressive coalgebraic modal logics.
\newblock In Filippo Bonchi and Barbara K{\"{o}}nig, editors, {\em CALCO},
  volume~72 of {\em LIPIcs}, pages 4:1--4:16. Schloss Dagstuhl -
  Leibniz-Zentrum fuer Informatik, 2017.

\bibitem[BK05a]{BonsangueK05}
Marcello~M. Bonsangue and Alexander Kurz.
\newblock Duality for logics of transition systems.
\newblock In Vladimiro Sassone, editor, {\em FoSSaCS}, volume 3441 of {\em
  LNCS}, pages 455--469. Springer, 2005.

\bibitem[BK05b]{BonKur05}
Marcello~M. Bonsangue and Alexander Kurz.
\newblock Duality for logics of transition systems.
\newblock In {\em FoSSaCS’05}, 2005.

\bibitem[BKP18]{Bonchi0P18}
Filippo Bonchi, Barbara K{\"{o}}nig, and Daniela Petrisan.
\newblock Up-to techniques for behavioural metrics via fibrations.
\newblock In Schewe and Zhang \cite{DBLP:conf/concur/2018}, pages 17:1--17:17.

\bibitem[BPPR17]{BonchiPPR17}
Filippo Bonchi, Daniela Petrisan, Damien Pous, and Jurriaan Rot.
\newblock A general account of coinduction up-to.
\newblock {\em Acta Inf.}, 54(2):127--190, 2017.

\bibitem[CFKP19]{ClercFKP19}
Florence Clerc, Nathana{\"{e}}l Fijalkow, Bartek Klin, and Prakash Panangaden.
\newblock Expressiveness of probabilistic modal logics: {A} gradual approach.
\newblock {\em Inf. Comput.}, 267:145--163, 2019.

\bibitem[DEP02]{DesharnaisEP02}
Josee Desharnais, Abbas Edalat, and Prakash Panangaden.
\newblock Bisimulation for labelled markov processes.
\newblock {\em Inf. Comput.}, 179(2):163--193, 2002.

\bibitem[DGJP99]{DesharnaisGJP99}
Jos{\'{e}}e Desharnais, Vineet Gupta, Radha Jagadeesan, and Prakash Panangaden.
\newblock Metrics for labeled markov systems.
\newblock In {\em {CONCUR}}, volume 1664 of {\em Lecture Notes in Computer
  Science}, pages 258--273. Springer, 1999.

\bibitem[DMS19]{graded}
Ulrich {Dorsch}, Stefan {Milius}, and Lutz {Schr{\"o}der}.
\newblock {Graded Monads and Graded Logics for the Linear Time - Branching Time
  Spectrum}.
\newblock In {\em CONCUR}, 2019.

\bibitem[Enq13]{Enqvist13}
Sebastian Enqvist.
\newblock Homomorphisms of coalgebras from predicate liftings.
\newblock In Heckel and Milius \cite{DBLP:conf/calco/2013}, pages 126--140.

\bibitem[GS13]{GorinS13}
Daniel Gor{\'{\i}}n and Lutz Schr{\"{o}}der.
\newblock Simulations and bisimulations for coalgebraic modal logics.
\newblock In Heckel and Milius \cite{DBLP:conf/calco/2013}, pages 253--266.

\bibitem[Her74]{herr74:topo}
Horst Herrlich.
\newblock Topological functors.
\newblock {\em General Topology and its Applications}, 4(2):125 -- 142, 1974.

\bibitem[HJ98]{HJ98}
Claudio Hermida and Bart Jacobs.
\newblock Structural induction and coinduction in a fibrational setting.
\newblock {\em Information and Computation}, 145(2):107--152, 1998.

\bibitem[HJ04]{HughesJ04}
Jesse Hughes and Bart Jacobs.
\newblock Simulations in coalgebra.
\newblock {\em Theor. Comput. Sci.}, 327(1-2):71--108, 2004.

\bibitem[HKC18]{HasuoKC18}
Ichiro Hasuo, Toshiki Kataoka, and Kenta Cho.
\newblock Coinductive predicates and final sequences in a fibration.
\newblock {\em Mathematical Structures in Computer Science}, 28(4):562--611,
  2018.

\bibitem[HM85]{HennessyM85}
Matthew Hennessy and Robin Milner.
\newblock Algebraic laws for nondeterminism and concurrency.
\newblock {\em J. {ACM}}, 32(1):137--161, 1985.

\bibitem[HM13]{DBLP:conf/calco/2013}
Reiko Heckel and Stefan Milius, editors.
\newblock {\em CALCO}, volume 8089 of {\em LNCS}. Springer, 2013.

\bibitem[Jac99]{Jacobs:fib}
Bart Jacobs.
\newblock {\em Categorical Logic and Type Theory}.
\newblock Elsevier, 1999.

\bibitem[Jac16]{jacobs-coalg}
Bart Jacobs.
\newblock {\em Introduction to Coalgebra: Towards Mathematics of States and
  Observation}, volume~59 of {\em Cambridge Tracts in Theoretical Computer
  Science}.
\newblock Cambridge University Press, 2016.

\bibitem[JS09]{JacobsS09}
Bart Jacobs and Ana Sokolova.
\newblock {Exemplaric Expressivity of Modal Logics}.
\newblock {\em Journal of Logic and Computation}, 20(5):1041--1068, 2009.

\bibitem[KKH{\etalchar{+}}19]{KKHKH19}
Yuichi Komorida, Shin{-}ya Katsumata, Nick Hu, Bartek Klin, and Ichiro Hasuo.
\newblock Codensity games for bisimilarity.
\newblock In {\em {LICS}}, pages 1--13. {IEEE}, 2019.

\bibitem[KKP04]{KupkeKP04}
Clemens Kupke, Alexander Kurz, and Dirk Pattinson.
\newblock Algebraic semantics for coalgebraic logics.
\newblock {\em Electr. Notes Theor. Comput. Sci.}, 106:219--241, 2004.

\bibitem[KKV12]{KapulkinKV12}
Krzysztof Kapulkin, Alexander Kurz, and Jiri Velebil.
\newblock Expressiveness of positive coalgebraic logic.
\newblock In Thomas Bolander, Torben Bra{\"{u}}ner, Silvio Ghilardi, and
  Lawrence~S. Moss, editors, {\em AiML}, pages 368--385. College Publications,
  2012.

\bibitem[Kli07]{Klin07}
Bartek Klin.
\newblock Coalgebraic modal logic beyond sets.
\newblock {\em Electr. Notes Theor. Comput. Sci.}, 173:177--201, 2007.

\bibitem[KM18]{KonigM18}
Barbara K{\"{o}}nig and Christina Mika{-}Michalski.
\newblock (metric) bisimulation games and real-valued modal logics for
  coalgebras.
\newblock In Schewe and Zhang \cite{DBLP:conf/concur/2018}, pages 37:1--37:17.

\bibitem[KP11]{KupkeP11}
Clemens Kupke and Dirk Pattinson.
\newblock Coalgebraic semantics of modal logics: An overview.
\newblock {\em Theor. Comput. Sci.}, 412(38):5070--5094, 2011.

\bibitem[KR20]{KupkeR20}
Clemens Kupke and Jurriaan Rot.
\newblock Expressive logics for coinductive predicates.
\newblock In {\em {CSL}}, volume 152 of {\em LIPIcs}, pages 26:1--26:18.
  Schloss Dagstuhl - Leibniz-Zentrum f{\"{u}}r Informatik, 2020.

\bibitem[LS91]{LarsenS91}
Kim~Guldstrand Larsen and Arne Skou.
\newblock Bisimulation through probabilistic testing.
\newblock {\em Inf. Comput.}, 94(1):1--28, 1991.

\bibitem[MV15]{mave15:laxe}
Johannes Marti and Yde Venema.
\newblock Lax extensions of coalgebra functors and their logic.
\newblock {\em Journal of Computer and System Sciences}, 81(5):880 -- 900,
  2015.
\newblock CMCS 2012 (Selected Papers).

\bibitem[Pat04]{Pattinson04}
Dirk Pattinson.
\newblock Expressive logics for coalgebras via terminal sequence induction.
\newblock {\em Notre Dame Journal of Formal Logic}, 45(1):19--33, 2004.

\bibitem[PMW06]{PavlovicMW06}
Dusko Pavlovic, Michael~W. Mislove, and James Worrell.
\newblock Testing semantics: Connecting processes and process logics.
\newblock In Michael Johnson and Varmo Vene, editors, {\em AMAST}, volume 4019
  of {\em LNCS}, pages 308--322. Springer, 2006.

\bibitem[PT91]{dualities}
{Hans-E.} Porst and Walter Tholen.
\newblock Concrete {D}ualities.
\newblock In H.~Herrlich and {H.~E.} Porst, editors, {\em Categories at
  {W}ork}, pages 111--136. Heldermann Verlag, 1991.

\bibitem[Rut00]{Rutten00}
Jan J. M.~M. Rutten.
\newblock Universal coalgebra: a theory of systems.
\newblock {\em Theor. Comput. Sci.}, 249(1):3--80, 2000.

\bibitem[Sch08]{schr08:expr}
Lutz Schr{\"{o}}der.
\newblock Expressivity of coalgebraic modal logic: The limits and beyond.
\newblock {\em Theor. Comput. Sci.}, 390(2-3):230--247, 2008.

\bibitem[SKDH18]{SprungerKDH18}
David Sprunger, Shin{-}ya Katsumata, J{\'{e}}r{\'{e}}my Dubut, and Ichiro
  Hasuo.
\newblock Fibrational bisimulations and quantitative reasoning.
\newblock In Corina C{\^{\i}}rstea, editor, {\em CMCS}, volume 11202 of {\em
  LNCS}, pages 190--213. Springer, 2018.

\bibitem[SZ18]{DBLP:conf/concur/2018}
Sven Schewe and Lijun Zhang, editors.
\newblock {\em CONCUR}, volume 118 of {\em LIPIcs}. Schloss Dagstuhl -
  Leibniz-Zentrum fuer Informatik, 2018.

\bibitem[vBW05]{BreugelW05}
Franck van Breugel and James Worrell.
\newblock A behavioural pseudometric for probabilistic transition systems.
\newblock {\em Theor. Comput. Sci.}, 331(1):115--142, 2005.

\bibitem[vG90]{Glabbeek90}
Rob~J. van Glabbeek.
\newblock The linear time-branching time spectrum (extended abstract).
\newblock In Jos C.~M. Baeten and Jan~Willem Klop, editors, {\em {CONCUR}},
  volume 458 of {\em LNCS}, pages 278--297. Springer, 1990.

\bibitem[Wil13]{soton354112}
Toby Wilkinson.
\newblock {\em Enriched coalgebraic modal logic}.
\newblock PhD thesis, University of Southampton, May 2013.

\bibitem[WSPK18]{WildSP018}
Paul Wild, Lutz Schr{\"{o}}der, Dirk Pattinson, and Barbara K{\"{o}}nig.
\newblock A van {Benthem} theorem for fuzzy modal logic.
\newblock In Anuj Dawar and Erich Gr{\"{a}}del, editors, {\em LICS}, pages
  909--918. {ACM}, 2018.

\end{thebibliography}

\end{document}